\documentclass[runningheads]{llncs}

\usepackage{cite}
\usepackage{hyperref}
\usepackage{colortbl}
\usepackage{paralist}
\usepackage{pdfpages}
\usepackage{enumitem}
\usepackage{float}
\usepackage{booktabs}
\usepackage[override]{cmtt}
\usepackage{color,xcolor}
\usepackage{graphicx,eso-pic}
\usepackage{amssymb}
\usepackage{tikz}
\usepackage{tikzscale}
\usetikzlibrary{arrows.meta, decorations.pathmorphing}
\usetikzlibrary{shapes}


\newcommand{\N}{\mathbb{N}}

\newcommand{\Z}{\mathbb{Z}}

\newcommand{\Read}{{\ensuremath{\mathsf{read}}}\xspace}
\newcommand{\Write}{{\ensuremath{\mathsf{write}}}\xspace}
\newcommand{\pos}{{\ensuremath{\mathsf{pos}}}\xspace}
\newcommand{\dpos}{{\ensuremath{\mathsf{dpos}}}\xspace}

\newcommand{\key}{{\ensuremath{\mathsf{key}}}\xspace}

\newcommand{\ra}{\rightarrow}

\newcommand{\iperm}{\ensuremath{\mathfrak{e}}}

\newcommand{\zthree}{\ensuremath{\Z_3}}

\newcommand{\msf}[1]{\ensuremath{{\mathsf {#1}}}}
\newcommand{\mcal}[1]{\ensuremath{\mathcal {#1}}}

\newcommand{\algA}{{\ensuremath{\mcal{A}}}\xspace}
\newcommand{\adv}{{\ensuremath{\mcal{A}}}\xspace}
\newcommand{\Sim}{\ensuremath{\msf{Sim}}\xspace}

\definecolor{darkgreen}{rgb}{0,0.5,0}
\definecolor{lightblue}{RGB}{0,176,240}
\definecolor{darkblue}{RGB}{0,112,192}
\definecolor{lightpurple}{RGB}{124, 66, 168}
\definecolor{grey}{RGB}{139, 137, 137}
\definecolor{maroon}{RGB}{178, 34, 34}
\definecolor{green}{RGB}{34, 139, 34}
\definecolor{types}{RGB}{72, 61, 139}
\definecolor{gold}{rgb}{0.8, 0.33, 0.0}

\definecolor{darkgray}{gray}{0.3}

\newcommand{\skiptext}[1]{}
\newcommand{\Arr}{\ensuremath{{{\sf T}}}\xspace}

\renewcommand{\S}{\ensuremath{{{\bf S}}}\xspace}

\newcommand{\link}{\ensuremath{{{\sf L}}}\xspace}
\newcommand{\U}{\ensuremath{{{\sf U}}}\xspace}

\renewcommand{\P}{\ensuremath{{{\bf P}}}\xspace}

\newcommand{\algF}{\ensuremath{{{\mcal{F}}}}\xspace}
\renewcommand{\L}{\ensuremath{{{\mcal{L}}}}\xspace}
\newcommand{\C}{\ensuremath{{{\bf C}}}\xspace}
\newcommand{\view}{\ensuremath{{{\sf view}}}\xspace}

\newcommand{\Inp}{\ensuremath{{{\bf I}}}\xspace}
\newcommand{\Out}{\ensuremath{{{\bf O}}}\xspace}


\newenvironment{boxfig}[2]{\begin{figure}[#1]\fbox{\begin{minipage}{.95\columnwidth}
                        \vspace{0.2em}
                        \makebox[0.025\columnwidth]{}
                        \begin{minipage}{0.95\columnwidth}
            {\small{
                        #2 }}
                        \end{minipage}
                        \vspace{0.2em}
                        \end{minipage}}}{\end{figure}}








\usepackage{boxedminipage}
\usepackage{url}
\usepackage{caption}
\usepackage{subcaption}
\usepackage{xspace}
\usepackage{multirow}
\usepackage{amsmath,amstext,amssymb,amsfonts,latexsym}
\usepackage{wrapfig}
\usepackage{algorithm}
\usepackage[noend]{algpseudocode}
\usepackage{algorithmicx}
\usepackage{epstopdf}
\usepackage{mdframed}

\definecolor{darkred}{rgb}{0.5, 0, 0}
\definecolor{darkgreen}{rgb}{0, 0.5, 0}
\definecolor{darkblue}{rgb}{0,0,0.5}

\newcommand\markx[2]{}

\newcommand{\op}{\ensuremath{\mathsf{op}}\xspace}

\renewcommand{\path}{\ensuremath{\mathsf{path}}\xspace}

\newcommand{\data}{{\ensuremath{\mathsf{data}}}\xspace}
\newcommand{\addr}{{\ensuremath{\mathsf{addr}}}\xspace}
\newcommand{\addrd}[1]{{\ensuremath{\mathsf{addr}^{\langle #1 \rangle}}}\xspace}
\newcommand{\addrsd}[2]{{\ensuremath{\mathsf{addr}_{#1}^{\langle #2 \rangle}}}\xspace}
\newcommand{\Ud}[1]{{\ensuremath{U^{#1}}}\xspace}

\newcommand{\ORAM}{{\sf ORAM}}

\newcommand{\OTM}{{\sf OTM}}

\newcommand{\ignore}[1]{}

\newcounter{task}

\newtheorem{fact}[theorem]{Fact}

\newcommand{\kartik}[1]{{\footnotesize\color{blue}[kartik: #1]}}
\newcommand{\elaine}[1]{{\footnotesize\color{magenta}[Elaine: #1]}}
\newcommand{\anti}[1]{{\footnotesize\color{magenta}[Antigoni: #1]}}
\newcommand{\hubert}[1]{{\footnotesize\color{red}[Hubert: #1]}}

\renewcommand{\anti}[1]{}
\renewcommand{\elaine}[1]{}
\renewcommand{\hubert}[1]{}
\renewcommand{\kartik}[1]{}

\newcounter{cnt:challenge}


\definecolor{orange}{RGB}{255,70,0}
\newcommand{\TOTM}{{\sf OTM}}

\titlerunning{}

\pagestyle{plain}

\begin{document}

\title{More is Less: Perfectly Secure
Oblivious Algorithms in the Multi-Server Setting}
\author{T-H.\ Hubert Chan\inst{1} \and
Jonathan Katz\inst{2} \and
Kartik Nayak\inst{2,3} \and
Antigoni Polychroniadou\inst{4} \and
Elaine Shi\inst{5}
}
%
%
\institute{The University of Hong Kong,
\email{hubert@cs.hku.hk}\\
\and
University of Maryland, College Park, \email{jkatz@cs.umd.edu}\\
\and
VMware Research, \email{nkartik@vmware.com}\\
\and
Cornell Tech, \email{antigoni@cornell.edu}\\
\and
Cornell University, \email{runting@gmail.com}
}
\maketitle

\begin{abstract}
The problem of Oblivious RAM (ORAM) has traditionally been
studied in a single-server setting, but more recently the
multi-server setting has also been considered. Yet it is still
unclear whether the multi-server setting has any
\emph{inherent} advantages, e.g., whether the multi-server
setting can be used to achieve stronger security goals or
provably better efficiency than is possible in the
single-server case.

\smallskip
In this work, we construct a perfectly secure 3-server ORAM
scheme that outperforms the best known single-server scheme by
a logarithmic factor. In the process we also show, for the
first time, that there exist specific algorithms for which
multiple servers can overcome known lower bounds in the
single-server setting.


\end{abstract}

\begin{keywords}
	Oblivious RAM, perfect security
\end{keywords}


\section{Introduction}
Oblivious RAM (ORAM) protocols~\cite{oram10} allow a client to
outsource storage of its data such that the client can continue
to read/write its data while hiding both the data itself as
well as the client's access pattern. ORAM was historically
considering in a single-server setting, but has recently been
considered in a multi-server
setting~\cite{multiserveroram,multicloudoram,abraham2017asymptotically,hoang2017s,cryptoeprint:2018:005,
DBLP:journals/corr/abs-1802-05145} where the client can store
its data on multiple, non-colluding servers. Current
constructions of multi-server ORAM are more efficient than
known protocols in the single-server setting; in particular,
the best known protocols in the latter setting (when
server-side computation is not allowed) require bandwidth
$O(\log^2 N/\log\log
N)$~\cite{oram09,oram03,oblivhash,circuitopram} for storing an
array of length~$N$, whereas multi-server ORAM schemes
achieving logarithmic bandwidth\footnote{Although Lu and
Ostrovsky~\cite{multiserveroram} describe their multi-server
scheme using server-side computation, it is not difficult to
see that it can be replaced with client-side computation
instead.} are known~\cite{multiserveroram}.


Nevertheless, there remain several unanswered questions about
the multi-server setting. First, all work thus far in the
multi-server setting achieves either computational or
statistical security, but not {\it perfect security} where
correctness is required to hold with probability~1 and security
must hold even against computationally unbounded attackers.
Second, although (as noted above) we have examples of
multi-server schemes that beat existing single-server
constructions, it is unclear whether this reflects a limitation
of our current knowledge or whether there are \emph{inherent}
advantages to the multi-server setting.

We address the above questions in this work. (Unless otherwise
noted, our results hold for arbitrary block size~$B$ as long as
it is large enough to store an address, i.e., $B = \Omega(\log
N)$.) First, we construct perfectly secure, multi-server ORAM
scheme that improves upon the overhead of the best known
construction in the single-server setting. Specifically, we
show:

\begin{theorem}
There exists a 3-server ORAM scheme that is perfectly secure
for any single semi-honest corruption, and achieves $O(\log^2
N)$ bandwidth per logical memory access on an array of
length~$N$. Further, our scheme does not rely on server-side
computation. \label{thm:intromain}
\end{theorem}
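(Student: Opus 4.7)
The plan is to combine a perfectly-secure tree-based ORAM with a 3-server secret-sharing layer and to recurse on the position map. At the top level I would store the $N$ data blocks in a binary tree of depth $O(\log N)$ with deterministic, worst-case-safe bucket sizes, where each bucket's contents are additively secret-shared between two of the three servers and the third server is held in reserve to inject fresh randomness used for eviction. The client itself stores only $O(\log N)$ words. To service a logical access, it consults the position map to learn the target block's leaf tag, has the three servers jointly fetch and refresh the corresponding root-to-leaf path, and reconstructs the payload by combining the two live shares.

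The key to beating the $O(\log^3 N)$ bound for single-server perfectly-secure ORAM is that the expensive step in tree ORAM---obliviously shuffling and evicting the fetched path without revealing the target block's physical slot---can be offloaded to the three servers in a way that avoids oblivious sorting altogether. Because no server ever sees both shares, the shuffle is performed by the two share-holding servers applying a secret permutation supplied in correlated form by the third server; from each individual server's view, the transcript is a deterministic function of public counters (the number of accesses so far and the tree geometry) and its own local randomness, so it reveals nothing about the access pattern. Combined with a PIR-style read to pull the target block from the fetched path, the per-level cost drops to $O(\log N)$, and stacking $O(\log N)$ recursion levels for the position map in the usual way yields the claimed $O(\log^2 N)$ total bandwidth.

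Security would be proved by fixing an arbitrary single corrupted server and exhibiting a simulator that reproduces that server's transcript from public information alone, repeated for each of the three possible corruption targets and relying on the symmetry of the roles. Perfect correctness forces us to pick bucket sizes and a deterministic eviction schedule under which no path ever overflows in the worst case over access sequences, ruling out the tail-probability analyses used in statistical constructions. I expect the main obstacle to be designing the three-server eviction sub-protocol so that the correlations supplied by the third server simultaneously (i) hide the target block from each of the two share-holders, (ii) drive a deterministic, overflow-free schedule, and (iii) leak nothing back to the third server once it later rotates into a sharing role. Reconciling these three constraints is where the perfect-security requirement cuts deepest, and it is precisely the place where the non-collusion assumption is exploited to circumvent the extra logarithmic factor inherent to perfectly-secure single-server ORAM.
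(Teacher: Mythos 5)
Your proposal diverges fundamentally from the paper: you build on the \emph{tree-based} paradigm (Path/Circuit-ORAM style, root-to-leaf paths, buckets, eviction), whereas the paper builds on the \emph{hierarchical} paradigm inherited from Chan et al.~\cite{perfectopram}, with levels of geometrically increasing size that are rebuilt via reshuffling. That difference is not cosmetic, because your plan contains a step that is not known to be achievable and is in fact the reason the paper does \emph{not} go the tree route. You require ``bucket sizes and a deterministic eviction schedule under which no path ever overflows in the worst case,'' with $O(\log N)$-depth trees and (implicitly) $O(1)$-size or $O(\log N)$-size buckets so that a path read costs $O(\log N)$. No such tree ORAM is known: every tree-based scheme in the literature has a nonvanishing stash/bucket-overflow probability, and the paper explicitly cites this as the reason statistically secure tree ORAMs ``still fail with a failure probability that is negligibly small.'' You correctly flag this as the main obstacle, but you treat it as an engineering detail to be worked out; it is actually an unresolved barrier, and adding a third non-colluding server does not obviously help with \emph{correctness} of eviction (it helps with hiding the access pattern, not with preventing overflow). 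Without resolving it, the proposal does not establish the theorem.

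The paper circumvents this entirely. It keeps the perfectly-correct hierarchical/OTM structure of Chan et al.~\cite{perfectopram}, where correctness is a deterministic invariant (each level is a one-time memory with at most $2^j$ lookups between rebuilds, Facts~\ref{fct:otmlife} and~\ref{fct:otmnonrecur}), and then attacks the cost: the $O(n\log n)$ oblivious sorts used in reshuffling are replaced by new 3-server, linear-time protocols for stable compaction and merging (Theorems~\ref{thm:compaction} and~\ref{thm:merge}), built from a permutation-storage-separation trick in which each server permutes the share held by the next server and traverses a secret-shared linked list of ``next pointers.'' This removes one $\log$ factor and gives $O(\log^2 N)$ amortized bandwidth. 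Two further mismatches with the paper's model: your ``PIR-style read'' and servers that ``jointly fetch and refresh'' and ``inject fresh randomness'' imply servers that compute, whereas Theorem~\ref{thm:intromain} explicitly claims \emph{no server-side computation} --- the servers in the paper are passive memories responding only to client reads/writes. And you describe an asymmetric role assignment (two share-holders plus one randomness server), whereas the paper's layout is symmetric in $\zthree$, which is what makes the single-corruption simulation uniform across the three servers.
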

As a point of comparison, the best known \emph{single-server},
perfectly secure ORAM schemes require $O(\log^3 N)$
bandwidth~\cite{oram05,perfectopram}. While
Theorem~\ref{thm:intromain} holds for any block size
$B=\Omega(\log N)$, we show that for block sizes $B =
\Omega(\log^2 N)$
our scheme achieves bandwidth as small as~$O(\log N)$.


As part of our construction, we introduce new building blocks
that are of independent theoretical interest. Specifically, we
show:


\begin{theorem} 
There exists a 3-server protocol for stable compaction that is
perfectly secure for any single semi-honest corruption, and
achieves $O(N)$ bandwidth to compact an array of length~$N$
(that is secret shared among the servers). The same result
holds for merging two sorted arrays of length~$N$.
\end{theorem}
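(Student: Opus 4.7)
The plan is to derive both stable compaction and merging from a common ``oblivious apply-permutation'' subroutine, in which the client knows a target permutation $\pi$ on $N$ positions yet no single server is allowed to learn anything about $\pi$. Throughout, I keep the array as an additive $3$-out-of-$3$ sharing $d = d_1 \oplus d_2 \oplus d_3$ with $P_i$ holding $d_i$; any individual $d_i$ is uniformly random and independent of $d$.

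First I would show that the client can learn $\pi$ one element at a time using $O(N)$ total bandwidth. For stable compaction, the client reconstructs the tag of each position by XOR-ing two shares and maintains a running count of real elements seen so far, which directly determines the destination of the current element. For merging two sorted arrays of length $N$, the client runs the textbook two-pointer merge, reconstructing the two head keys at each step to decide which side to emit next and advancing the corresponding pointer. Each step uses $O(1)$ bandwidth over $O(N)$ steps.

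The heart of the construction is the apply-permutation subroutine. I would decompose $\pi = \pi_3 \circ \pi_2 \circ \pi_1$, where $\pi_1, \pi_2$ are fresh uniformly random permutations and $\pi_3 := \pi \circ \pi_1^{-1} \circ \pi_2^{-1}$. The client reveals each $\pi_a$ only to the two servers $\{P_b, P_c\} = \{P_1, P_2, P_3\} \setminus \{P_a\}$. In the round that applies $\pi_a$, the two informed servers absorb $P_a$'s current share via a one-time pad supplied by the client, locally apply $\pi_a$ to the XOR of their two shares, and the client then hands $P_a$ a fresh uniformly random replacement share while adjusting the other two so that the new triple XORs to $\pi_a$ applied to the old secret. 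Each round exchanges $O(N)$ data blocks plus $O(N \log N)$ bits of permutation description, which is $O(N)$ blocks under the standing assumption $B = \Omega(\log N)$; three rounds in total yield $O(N)$ bandwidth.

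The main obstacle, and where I expect the bulk of the proof to live, is verifying perfect security of the composed protocol. The pivotal observation is that any corrupted server $P_j$ sees at most two of $\pi_1, \pi_2, \pi_3$; since the one it misses is uniformly distributed and the symmetric group acts freely by composition, the product $\pi_3 \circ \pi_2 \circ \pi_1$ is uniformly random over all permutations from $P_j$'s view, so $\pi$ is perfectly hidden regardless of the tag pattern or the comparison trace that produced it. Combined with the fresh one-time masks and re-sharings in each round, every message that $P_j$ observes is uniform and independent of the secret, yielding a perfect simulator from $P_j$'s input share and $N$ alone. The delicate bookkeeping is in each round's absorb--permute--re-share choreography, which must simultaneously keep the uninformed server's incoming mask and outgoing replacement share uniform, let the two informed servers jointly apply $\pi_a$ without either individually reconstructing the permuted secret, and output a fresh uniform additive sharing of the updated secret.
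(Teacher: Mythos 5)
Your proposal takes a genuinely different route from the paper. The paper never materializes a single global target permutation $\pi$: it uses the permutation-storage-separation idea, where each server $\S_b$ holds the permutation $\pi_{b+1}$ applied to another server's share, the client builds \emph{secret-shared} linked lists (one per share, using the three independent $\pi_b$'s as next-pointers) during a reverse linear scan, and then reads the elements in logical order by chasing the list --- which looks like a uniformly random access sequence to each storage server because it does not know the permutation applied to its own share. Your approach instead wants the client to first learn $\pi$ explicitly and then apply it obliviously via the three-factor decomposition $\pi = \pi_3 \circ \pi_2 \circ \pi_1$ with $\pi_1,\pi_2$ uniform and $\pi_a$ hidden from $P_a$. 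The decomposition itself is a sound way to hide $\pi$ once you have it. The problems are upstream.

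The fatal gap is in the merging case, during the ``learn $\pi$'' pass. You run the textbook two-pointer merge on the input arrays \emph{in place}, reconstructing the two head keys from shares at each step. But in this paper's model every server sees the indices it is asked to serve, round by round. If the heads are at $(i_t, j_t)$ at step~$t$, the transition to $(i_{t+1}, j_{t+1})$ reveals exactly which pointer advanced, i.e.\ the full comparison trace of the merge, to a single semi-honest server --- before the decomposition trick is ever reached. This is precisely the attack the paper flags in the remark at the end of Section~\ref{sec:IRpar}: reading heads of two sorted lists in rounds leaks their relative ordering. Hiding $\pi$ afterwards cannot undo information already disclosed by the access pattern while computing $\pi$. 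The paper avoids this by permuting the data (with the permutation hidden from the storage server) \emph{before} any key comparisons happen, and by encoding the sorted traversal order in secret-shared linked lists so that the client never has to touch elements in a key-dependent position.

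There is a second issue that affects compaction as well: the client is limited to $O(1)$ blocks of storage, so it cannot hold $\pi$, $\pi_1$, $\pi_2$, or $\pi_3$ locally. The obvious on-the-fly construction --- as the client scans $i = 1,\ldots,N$ it writes $\pi_3\bigl(\pi_2(\pi_1(i))\bigr) := \pi(i)$ to the two servers that will hold $\pi_3$ --- re-introduces a leak: those servers see the value $\pi(i)$ written at the $i$-th round, so timing gives them the map $i \mapsto \pi(i)$, which for compaction is exactly the real/dummy tag pattern you are trying to protect. Fixing this requires first secret-sharing $\pi$ and then re-deriving $\pi_3$ in a second pass indexed by the output position $j$ rather than the input position $i$, with a careful check that the random-address lookups into the shared $\pi$ table are unpredictable to each server; none of this is in your write-up. (A minor further nit: with $3$-out-of-$3$ XOR sharing you need all three shares to reconstruct a tag, not two.) The paper's linked-list construction dodges this entirely by only ever working with the three independent per-share permutations and their secret-shared next-pointers, so no party --- including the resource-bounded client --- ever needs to hold or re-index a global $\pi$.
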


In the single-server setting, Lin, Shi, and
Xie~\cite{cryptoeprint:2018:227} recently proved a lower bound
showing that any oblivious algorithm for stable compaction or
merging in the balls-and-bins model must incur at least
$\Omega(N \log N)$ bandwidth. The balls-and-bins model
characterizes a wide class of natural algorithms where each
element is treated as an atomic ``ball'' with a numeric label;
the algorithm may perform arbitrary boolean computation on the
labels, but is only allowed to move the balls around and not
compute on their values. Our scheme works in the balls-and-bins
model, and thus shows \emph{for the first time} that the
multi-server setting can enable overcoming known lower bounds
in the single-server setting for oblivious algorithms.
Furthermore, for stable compaction and merging no previous
multi-server scheme was known that is asymptotically faster
than existing single-server algorithms, even in the weaker
setting of computational security. We note finally that our
protocols are asymptotically optimal since clearly any correct
algorithm has to read the entire array.

\ignore{
\kartik{communication bandwidth is worst case log$^2$}

\kartik{need to maintain position labels across depths and need
  to pass it upper level. + we need to merge levels too. Both of
  these operations require oblivious sort. however, linear
  time/linear communication sort is not possible with constant
  servers. hence, we maintain shares of the data such that at all
  points in time, the data is permuted, but some notion of
  sortedness is maintained i.e., some additional data is
  maintained on another server to sort it again. Once permuted,
  we ensure that we access the data only once.
  We use the
  additional information to sort it back. Once the data is
  sorted, we leverage new linear time/linear communication
  oblivious merge and oblivious compact
  algorithms.}

\kartik{can we deamortize?}
\kartik{we do not have expected bounds, but deterministic
  bounds.}

\kartik{Can we prove that it is not possible to sort in sub-n log
  n server computation / communication assuming constant servers
  allowing poly storage
  and a client allowing constant storage.}

\kartik{client need not do any work}
}

\subsection{Technical Roadmap}
\label{sec:roadmap}

Oblivious sorting is 
an essential building block in hierarchical ORAM schemes.
At a high level, the key idea is to replace 
oblivious sorting, 
which costs $O(n\log n)$ time on 
an array of length $n$, 
with cheaper, linear-time operations.
Indeed, this was also the idea of 
Lu and Ostrovsky~\cite{multiserveroram}, but they apply
it to a computationally secure hierarchical ORAM. 
Earlier single-server  
ORAM schemes are built from logarithmically many cuckoo hash tables 
of doubling size. Every time 
a memory request has been served, one needs to merge
multiple stale cuckoo hash tables into a  
newly constructed 
cuckoo hash table --- this was previously accomplished by oblivious sorting~\cite{oram03,oram09,oblivhash}. 
Lu and Ostrovsky show
how to avoid cuckoo hashing, 
by having one {\it permutation 
server} permute the data 
in linear time, 
and by having a separate {\it storage 
server}, that is unaware of the permutation, 
construct a cuckoo hash table from the permuted array 
in linear time (with the client's help). 
 Unfortunately, Lu and Ostrovsky's technique 
fails for the perfect security context due to its
intimate reliance on pseudorandom functions (PRFs)
and cuckoo hashing --- the former introduces computational assumptions
and the latter leads to statistical failures (albeit with negligible
probability).

We are, however, inspired by Lu and Ostrovsky's
permutation-storage-separation paradigm 
(and a similar approach that was described independently 
by Stefanov and Shi~\cite{multicloudoram}).
The key concept here is to have one permutation-server
that permutes the data; and 
have operations and accesses be performed by
a separate storage server that is unaware of the permutation applied.
One natural question is whether we can  
apply this technique to  
directly construct a linear-time multi-server 
oblivious sorting algorithm --- unfortunately 
we are not aware of any way to achieve this. 
Chan et al.~\cite{cryptoeprint:2017:914}
and Tople et al.~\cite{cryptoeprint:2017:885}
 show that assuming the data is already randomly permuted (and the
permutation hidden),
one can simply apply any 
comparison-based sorting algorithm and it would retain obliviousness.
Unfortunately, it is well-known that comparison-based sorting 
must incur $\Omega(n \log n)$ time, and 
this observation does not extend to non-comparison-based sorting 
techniques since in general RAM computations on 
numeric keys can leak information through access patterns.

\paragraph{New techniques at a glance.}
We propose two novel techniques 
that allow us to achieve the stated results, both of which
rely on the permutation-storage-separation paradigm:
\begin{itemize}[leftmargin=5mm]
\item 
First, we observe that with multiple servers, 
we can adapt 
the single-server perfect ORAM scheme by
Chan et al.~\cite{perfectopram}
into a new variant such that 
reshuffling operations 
which was realized with oblivious sorting in Chan et al.~\cite{perfectopram}
can now be expressed entirely  
with merging and stable compaction operations without oblivious sorting.

\ignore{
Note in a single-server setting, this does not
help us improve asymptotical performance
in the balls-and-bins model, since oblivious merging is just
as expensive as oblivious sorting~\cite{}.
\elaine{cite Lin, Xie, Shi} 
}
\item 
Despite the known  
lower bounds 
in the single-server setting~\cite{cryptoeprint:2018:227}, 
we show that with multiple servers, 
we can indeed achieve linear-time oblivious merging
and oblivious stable compaction.
As many have  
observed earlier~\cite{circuitopram,Goodrich11,cryptoeprint:2017:914,oblivhash}
merging and compaction are also important building
blocks in the design of many oblivious algorithms --- we thus
believe that our new building blocks are of independent interest.
\end{itemize}

\subsubsection{Stable Compaction and Merging.}
We first explain the intuition behind our stable compaction algorithm.
For simplicity, for the time being we will consider only 2 servers and 
assume perfectly secure encryption for free 
(this assumption can later be removed with secret 
sharing and by introducing one additional server).
Imagine that we start out with an array of length $N$ that is encrypted
and resides on one server. 
The elements in the array are either real or dummy, and we would like
to move all dummy elements to the end of the array 
while preserving the order of the real elements as 
they appear in the original array. 
For security, we would like that any single server's view
in the protocol leaks no information about the array's contents.

\paragraph{Strawman scheme.}
An extremely simple strawman scheme is the following:
the client makes a scan of the input array 
on one server; whenever it encounters 
a real element, it re-encrypts it and writes 
it to the other server by 
appending it to the end of the output array (initially the output array
is empty). 
When the entire input array 
has been consumed, the client 
pads the output array 
with an appropriate number of (encrypted) dummy elements.

At first sight, 
this algorithms seems to preserve security: each server basically observes
a linear scan of either the input or the output array; and the perfectly
secure encryption hides array contents.
However, upon careful examination, the second server
can observe the time steps
in which a write happened to the output array --- and this leaks
which elements are real in the original array.
Correspondingly, 
in our formal modeling later (Section~\ref{sec:defn}), 
each server cannot only
observe each message sent
and received by itself, but also the 
time steps in which these events occurred.

\paragraph{A second try.}
For simplicity we will describe our approach with server computation
and server-to-server communication --- but it is not hard
to modify the scheme such that servers are completely passive.
Roughly speaking, 
the idea is for the first server (called the {\it permutation server}) 
to randomly permute
all elements and store a permuted array
on the second server (called the {\it storage
server}), such that the 
permutation is hidden from the storage server.
Moreover, in this permuted array, we would like 
the elements to be tagged with pointers to form
two linked lists: a real linked list and a dummy linked list.
In both linked lists, the ordering of elements
respects that of the original array.
If such a permuted array encoding two linked lists can be constructed,  
the client 
can simply traverse the real
linked list first from the storage server, 
and then traverse the dummy linked list --- 
writing down each element it encounters
on the first server (we always 
assume re-encryption upon writes).
Since the storage server does not know the random permutation
and since every element is accessed exactly once, it observes completely
random access patterns; 
and thus it cannot gain any secret information.

The challenge remains 
as to how to tag each element with 
the position of the next element in the permuted array.
This can be achieved in the following manner:
the permutation server first 
creates a random permutation in linear time (e.g.,
by employing Fisher-Yates~\cite{donald1998art}), such
that each element in the input array is now tagged
with where it wants to be in the permuted array (henceforth
called the position label).
Now, the client makes a reverse scan 
of this input array. During this process, it remembers
the position labels of the last real element seen 
and of the last dummy element seen so far --- this takes
$O(1)$ client-side storage.
Whenever a real element is encountered, the client tags
it with the position label of the last real seen.
Similarly, whenever a dummy is encountered, the client
tags it with the position label of the last dummy seen.
Now,
the permutation server 
can permute the array based on the predetermined permutation 
(which can also be done in linear time). At this moment,
it sends the permuted, re-encrypted array to the storage server
and the linked list  can now be traversed from the storage server
to read real elements followed by dummy elements. 

It is not difficult to see that 
assuming that the encryption scheme is perfectly secure and every
write involves re-encrypting the data, 
then the above scheme achieves perfect security 
against any single semi-honest corrupt server,
and completes in linear time.
Later we will replace the 
perfectly secure encryption 
with secret sharing and this requires the introduction of one additional server.

\paragraph{Extending the idea for merging.}
We can extend the above idea to allow linear-time oblivious merging 
of two sorted arrays.
The idea is to prepare both arrays such that 
they are in permuted form on the storage server
and in a linked list format;
and now the client 
can traverse the 
two linked lists on the storage server, merging them 
in the process.
In each step of the merging, only one array is being consumed --- 
since the storage server does not know the permutation, 
it sees random accesses and cannot tell which array is being consumed.

\subsubsection{3-Server Perfectly Secure ORAM.}

We now explain the techniques for constructing a 3-server perfectly secure ORAM.
A client, with $O(1)$ blocks of local cache, 
stores $N$ blocks of data (secret-shared)
on the 3 servers, one of which might be semi-honest corrupt.
In every iteration, the client receives 
a memory request of the form $({\sf read}, \addr)$
or $({\sf write}, \addr, \data)$, and it 
completes this request 
by interacting with the servers.
We would like to 
achieve $O(\log^2 N)$ amortized bandwidth blowup per logical memory request. 

We start out from a state-of-the-art 
single-server perfectly-secure scheme by Chan et
al.~\cite{perfectopram} 
that achieves $O(\log^3 N)$ amortized bandwidth blowup per memory
request. \kartik{should explain [5]?}
Their scheme follows the hierarchical ORAM paradigm~\cite{oram00,oram10}
and meanwhile relies on a standard recursion technique
most commonly adopted by tree-based ORAMs~\cite{asiacrypt11}.
In their construction, there are logarithmically 
many hierarchical ORAMs (also called position-based ORAMs), where
the ORAM at depth $d$ (called the parent depth) stores position labels 
for the ORAM at depth $d+1$ (called the child depth); and finally,
the ORAM at the maximum depth $D = O(\log N)$
stores the real data blocks.\anti{add the address}
\ignore{
Each
hierarchical ORAM in Chan et al.~\cite{perfectopram}
contains logarithmically many {\it one-time memory (OTM)} 
schemes (called {\it levels}) that are doubling in size;
every time a memory request is served, one or more 
consecutive levels in each ORAM must be reshuffled, 
and a new OTM instance is then created from the reshuffled levels. 
In Chan et al.~\cite{perfectopram}, reshuffling
and reconstruction of an OTM is achieved through oblivious sorting, which
turns out to be the  
single most expensive operation 
in their scheme. 
}
\ignore{
Imprecisely speaking, Lu and Ostrovsky's idea is to have
one of the two servers (called the permutation server)
randomly permute the levels to be combined into  
a new level --- this is done non-obliviously
in linear time. 
Next, the permuted and combined array is passed
to the second server (called the storage server).
Since  
the storage server does not know the permutation that has been applied, 
it is safe 
for the client to reveal to the storage server 
a pseudorandom tag for every real block
in the array --- each tag is 
computed by applying a secret 
pseudorandom function (PRF) to the block's address.
Now, using the revealed PRF tags as hash keys, the storage
server 
places the real blocks into a cuckoo hash table (also non-obliviously
in linear time).
When a memory request arrives, the client  
can then look for the desired block in this cuckoo hash table:
to locate the block, the client 
simply applies the same PRF to the block's address.

Unfortunately, for perfect security, 
the client can no longer rely on a PRF to locate blocks; nor can it rely
on cuckoo hashing since it incurs probabilistic failures (although
the probability can be made negligibly small).
This is precisely why Chan et al.~\cite{perfectopram}
resort to recursion, and have the parent depth recursively store 
position labels for the next depth. 
}


As it turns out, 
the most intricate part of Chan et al.~\cite{perfectopram}'s scheme
is the information passing between 
an ORAM at depth $d$ and its parent ORAM at depth $d-1$ \anti{and this is where shuffle is required }.
As Chan et al. describe it, 
all the logarithmically many ORAMs must perform coordinated
reshuffles upon every memory request: during a reshuffle,
the ORAM at depth $d$ must pass 
information back to the parent depth $d-1$.
More specifically, the depth-$d$ ORAM is aware of the updated
position 
labels for blocks that have been recently visited, 
and this must be passed back to the parent depth
to be recorded there. 

More abstractly and somewhat imprecisely, 
here is a critical building block in Chan et al.~\cite{perfectopram}:
suppose that the parent and the child each has an array of logical addresses 
and a position label for each address.
It is guaranteed by the ORAM construction that all addresses
the child has must appear in the parent's array.
Moreover, 
if some address appears in both the parent and child, then the child's version
is fresher.
Now, we would like to combine the information 
held by the parent and the child, retaining the freshest
copy of position label for every address\anti{typo}. 
Chan et al. then relied on oblivious
sorting to achieve this goal: 
if some address
is held by both the parent and child, they will appear adjacent to each
other in the sorted array; and thus in a single linear scan 
one can easily cross out all stale copies. 

To save a logarithmic factor, 
we must solve the above problem using only 
merging and compaction and not sorting. 
Notice that if 
both the parent's and the child's arrays
are already sorted according to the addresses, 
then the afore-mentioned information propagation from child to parent 
can be accomplished through merging rather than sorting (in the full
scheme we would also need stable compaction to remove dummy blocks 
in a timely fashion to avoid blowup of array sizes over time).
But how can we make sure that these 
arrays are sorted in the first place without oblivious sorting?
In particular, these arrays actually correspond to levels in a hierarchical
ORAM in Chan et al.~\cite{perfectopram}'s 
scheme, and all blocks in a level must appear in randomly permuted
order to allow safe (one-time) accesses --- this 
seems to contradict our desire for sortedness.
Fortunately, 
here we can rely again on the permutation-storage-separation paradigm
\anti{define the paradigm above}
--- for simplicity again we describe our approach for 2 servers assuming 
perfectly secure (re-)encryption upon every write.
The idea is the following:
although the storage server is holding each array (i.e., level)
in a randomly permuted order, 
the permutation server will remember an inverse permutation
such that when this permutation is applied to the storage server's copy,
sortedness is restored.
Thus whenever shuffling is needed,
the permutation server would first apply the inverse permutation to 
the storage server's copy to restore sortedness, and then
we could 
rely on merging (and compaction) to propagate information
between adjacent depths rather than sorting.


\subsection{Related Work}

The notion of Oblivious RAM (ORAM) was introduced by the seminal
work of Goldreich and Ostrovsky around three decades
ago~\cite{oram00,oram10}.
Their construction used a hierarchy of buffers of
exponentially increasing size, which was later known as the
hierarchical ORAM framework.
Their construction achieved an
amortized bandwidth blowup of $O(\log^3 N)$ and was secure
against a computationally bounded adversary.
Subsequently, several works have improved the bandwidth blowup
from $O(\log^3 N)$ to $O(\log^2 N / \log\log
N)$~\cite{oram03,oram09,oblivhash,circuitopram}
under the same adversarial model.
Ajtai~\cite{ajtaioram} was the first to consider
 the notion of a statistically
secure oblivious RAM that achieves $O(\log^3 N)$ bandwidth
blowup. This was followed by the statistically secure ORAM
construction by Shi et al.~\cite{asiacrypt11}, who introduced the
tree-based paradigm.
ORAM constructions in the tree-based paradigm have improved the
bandwidth blowup from $O(\log^3 N)$ to $O(\log^2
N)$~\cite{asiacrypt11,PathORAM,clp,RingORAM,CircuitORAM}.
 Though
the computational assumptions have been removed, the
statistically secure ORAMs still fail with a
failure probability that is negligibly small in the number of
data blocks stored in the ORAM.

\paragraph{Perfectly-secure ORAMs.} Perfectly-secure
ORAM was first studied by
Damg{\aa}rd et al.~\cite{oram05}. Perfect security requires
that a computationally unbounded server does not learn
anything other than the number of requests with
probability 1. This implies that the oblivious program's
memory access patterns should be {\it identically distributed} regardless
of the inputs to the program; and thus with probability 1,
no information can be leaked about the secret inputs to the program.
Damg{\aa}rd et al.~\cite{oram05}  achieve an \emph{expected}
$O(\log^3 N)$
simulation overhead and $O(\log N)$ space blowup relative to the
original
RAM program.
Raskin et al.~\cite{cryptoeprint:2018:268}  and Demertzis et
al.~\cite{cryptoeprint:2017:749} achieve a \emph{worst-case}
bandwidth blowup of
$O(\sqrt{N}\frac{\log
  N}{\log \log N})$ and $O(N^{1/3})$, respectively.
Chan et al.~\cite{perfectopram} improve upon Damg{\aa}rd et
al.'s result~\cite{oram05} by avoiding the
$O(\log N)$ blowup in space, and  by showing a construction that
is conceptually simpler. Our construction builds upon Chan et
al. and improves
the bandwidth blowup to \emph{worst-case} $O(\log^2 N)$ while
assuming three non-colluding servers.

\paragraph{Multi-server ORAMs.}
ORAMs  in this category assume multiple non-colluding servers
to improve bandwidth
blowup~\cite{multiserveroram,abraham2017asymptotically,hoang2017s,cryptoeprint:2018:005,DBLP:journals/corr/abs-1802-05145}.
A comparison of the relevant schemes is presented in
Table~\ref{tab:comparison}. Among these, the work that is
closely related to ours is by Lu and
Ostrovsky~\cite{multiserveroram} which achieves a bandwidth
blowup of $O(\log N)$ assuming two non-colluding servers. In
their scheme, each server performs permutations for data that
is stored by the other server. While their construction is
computationally secure, we achieve perfect security for access
patterns as well as the data itself. Moreover, our techniques
can be used to perform an oblivious tight stable compaction and
an oblivious merge operation in linear time; how to perform
these operations in linear time were not known even for the
computationally secure setting. On the other hand, our scheme
achieves an $O(\log^2 N)$ bandwidth blowup and uses three
servers. We remark that if we assume a perfectly secure
encryption scheme, our construction can achieve perfectly
secure access patterns using two servers. Abraham et
al.~\cite{abraham2017asymptotically}, Gordon et
al.~\cite{cryptoeprint:2018:005} and Kushilevitz and
Mour~\cite{DBLP:journals/corr/abs-1802-05145} construct
multi-server
ORAMs using PIR. Each of these constructions require the server
to perform computation for using PIR operations. While Abraham
et al.\cite{abraham2017asymptotically} achieve statistical
security for access patterns, other
work~\cite{cryptoeprint:2018:005,DBLP:journals/corr/abs-1802-05145}
is only computationally secure. While the work of Gordon et
al.\ achieves a bandwidth blowup of $O(\log N)$, they require linear-time
server computation. Abraham et al.\ and Kushilevitz and Mour,
on the other hand, are poly-logarithmic and logarithmic
respectively, both in computation and bandwidth blowup.
In comparison, our
construction achieves perfect security and requires a passive
server (i.e., a server that does not perform any computation)
at a bandwidth blowup of $O(\log^2 N)$.

\begin{table*}[t]
  \centering
  \caption{\textbf{Comparison with existing multi-server
      Oblivious RAM schemes for block size $\Omega(\log N)$.} All
      of the other schemes (including the statistically-secure
      schemes~\cite{abraham2017asymptotically}) require two
      servers
      but assume the
      existence of
      an unconditionally secure encryption scheme. With a similar
      assumption, our work would indeed need only two servers
      too.
	}
  \label{tab:comparison}
  \vspace{6pt}
  \begin{tabular}{r|ccc}
    \multirow{2}{*}{\textbf{Construction}} & \textbf{Bandwidth} &
                                                                  \textbf{Server} & \multirow{2}{*}{\textbf{Security}}  \\
                & \textbf{Blowup} & \textbf{Computation} &
    \\\toprule
    Lu-Ostrovsky~\cite{multiserveroram} & $O(\log N)$ & - &
                                                            Computational
    \\

    Gordon et al.~\cite{cryptoeprint:2018:005} & $O(\log N)$ &
                                                             $O(N)$
                                                                                  &
                                                                                    Computational
    \\
    Kushilevitz et al.~\cite{DBLP:journals/corr/abs-1802-05145} &
                                                                  $O(\log
                                                                  N
                                                                  \cdot 
                                                                  \omega(1))$
                                                                &
                                                                  $O(\log
                                                                  N \cdot
                                                                  \omega(1))$
                                                                                  &
                                                                                    Computational\\
Abraham et al.~\cite{abraham2017asymptotically} & $O(\log^2 N
                                                  \cdot \omega(1))$ &
                                                             $O(\log^2
                                                                      N \cdot
                                                                      \omega(1))$
                                                                                  &
                                                            Statistical
    \\
    Our work & $O(\log^2 N)$ & - & Perfect\\

    \bottomrule
  \end{tabular}
\end{table*}



\section{Definitions}
\label{sec:defn}
In this section, we revisit how to define multi-server ORAM schemes 
for the case of semi-honest corruptions.
Our definitions require that the adversary, controlling a subset
of semi-honest corrupt servers, learns no secret information 
during the execution of the ORAM protocol.
Specifically 
our adversary can observe all messages transmitted
to and from corrupt servers, the rounds in which they were transmitted, 
as well as communication patterns between honest parties (including
the client and honest servers).
Our definition generalizes existing works~\cite{abraham2017asymptotically}
where they assume free encryption of data contents (even when statistical
security is desired).

\subsection{Execution Model}
\label{sec:execmodel}


\paragraph{Protocol as a system of Interactive RAMs.}
We consider a protocol between multiple parties including 
a client, henceforth denoted by $\C$, 
and $k$ servers, denoted by $\S_0, \ldots, \S_{k-1}$, respectively.
The client and all servers are Random Access Machines (RAMs) that interact
with each other. Specifically, the client or each server 
has a CPU capable of computation and a memory
that supports reads and writes; the CPU interacts with the memory
to perform computation.
The atomic unit of operation for memory is called a {\it block}.
\ignore{
Throughout the paper, we assume that 
that {\it the client can store $O(1)$ number of memory blocks}. 
but the servers have unbounded storage (although our protocols
later only need linear storage from the servers). 
}
We assume that all RAMs can be {\it probabilistic}, i.e., they can read  
a random tape supplying a stream of random bits.

\paragraph{Communication and timing.}
We assume pairwise channels between all parties. 
There are two notions of time in our execution model,
CPU cycles and communication rounds.
Without loss of generality, 
henceforth we assume that it takes the same amount of time
compute each CPU instruction 
and to transmit each memory block over
the network to another party (since we can 
always take the maximum of the two).
Henceforth in this paper we often use the word {\it round} to denote
the time that has elapsed since the beginning of the protocol.

\ignore{
Whenever an honest party sends a message to a recipient,
the message will arrive at the recipient at the beginning of the next round. 
}

Although we define RAMs on the servers as being capable of
performing any arbitrary computation, all of our protocols
require the servers to be passive, i.e., the server RAMs only
perform read/write operations from the memory stored by it.

\subsection{Perfect Security under a Semi-Honest Adversary}\label{sec:IRpar}
We consider the client to be {\it trusted}. The adversary
can corrupt 
a subset of the servers 
(but it cannot corrupt the client) --- although our constructions later secure
against any individual corrupt server, we present definitions for the more
general case, i.e., 
when the adversary can control 
more than one corrupt server.

We consider 
a {\it semi-honest} adversary, i.e., the corrupt servers
still honestly follow the protocol; however, we would like to 
ensure that no undesired information will leak.
To formally define security, we need to first define
what the adversary can observe in a protocol's execution.

\paragraph{View of adversary $ \view^\adv$.}
Suppose that the adversary \algA controls a subset of 
the servers --- 
we abuse notation and 
use $\algA \subset [k]$ to denote the set 
of corrupt servers.
The view of the adversary, denoted by $\view^\adv$ in a random run of the protocol
consists of the following:
\begin{enumerate}[leftmargin=5mm]
\item 
{\it Corrupt parties' views,}
including 
1) corrupt parties' inputs,
2) all randomness consumed by corrupt parties, and 
3) an ordered sequence of all messages received by corrupt parties, including
which party the message is received from, as well as the {\it round} in which
each message is received. 
We assume that these messages are ordered by the round in which
they are received, and then by the party from which it is received.
\item 
{\it Honest communication pattern}:
when honest parties (including the client) exchange messages,
the adversary observes their communication pattern:
including which pairs of honest nodes exchange messages in which round.
\end{enumerate}
We stress that 
in our model only 
one block can be exchanged between every pair in a round --- thus
the above $\view^{\algA}$ definition  
effectively allows $\algA$ to see the total length of messages 
exchanged 
between honest parties.




\begin{remark}
We remark that this definition captures a notion of timing
patterns along
with  access patterns. For instance, suppose two servers store
two sorted lists that needs to be merged. The client performs a
regular merge operation to read from the two lists, reading the
heads of the lists in each round. In such a scenario, depending
on the rounds in which blocks are read from a server, an
adversary that corrupts that server can compute the
relative ordering of blocks between the two lists.
\end{remark}

\paragraph{Defining security in the ideal-real paradigm.}
Consider an ideal functionality 
$\algF$: upon receiving 
the input $\Inp_0$ from the client 
and inputs $\Inp_1, \ldots, \Inp_k$ from each of the $k$ servers respectively, 
and a random string $\rho$ sampled from some distribution,
$\algF$ computes 
$$(\Out_0, \Out_1, \ldots, \Out_k) 
:= \algF(\Inp_0, \Inp_1, \ldots, \Inp_k;\rho)$$ where 
$\Out_0$ is the client's output, and
$\Out_1, \ldots, \Out_k$ denote the $k$ servers' outputs respectively.  


\begin{definition}[Perfect security in the presence of a semi-honest adversary]\label{secdef}
We say that ``a protocol $\Pi$ perfectly securely realizes 
an ideal functionality 
$\algF$ 
in the presence of a semi-honest adversary corrupting $t$ servers'' 
iff for every adversary $\algA$ that controls up to $t$ corrupt servers, 
there exists 
a simulator $\Sim$ such that  
for every input vector $(\Inp_0, \Inp_1, \ldots, \Inp_k)$,
the following real- and ideal-world experiments 
output identical distributions: 

\begin{itemize}[leftmargin=5mm]
\item 
{\it Ideal-world experiment.}
Sample $\rho$ at random and compute 
$(\Out_0, \Out_1, \ldots, \Out_k) := \algF(\Inp_0, \Inp_1, \ldots, \Inp_k, \rho)$.
Output the following tuple
where we abuse notation and use $i \in \algA$ to denote the fact
 that $i$ is corrupt:
$$\Sim(\{\Inp_i, 
\Out_i\}_{i\in\adv}), \ \ 
\Out_0,  \{\Out_i\}_{i\not\in\adv}
$$

\item 
{\it Real-world experiment.}
Execute the (possibly randomized) real-world protocol, 
and let $\Out_0, \Out_1, \ldots, \Out_k$ be the outcome of the client
and each of the $k$ servers respectively.
Let $\view^\adv$ denote the view 
of the adversary \algA in this run.
Now, output the following tuple:
\[
\view^\adv, \ \Out_0,  \{\Out_i\}_{i\not\in\adv}
\]



\end{itemize}
\end{definition}


Note that throughout the paper, we will define various building 
blocks that realize different ideal functionalities. 
The security of all building blocks can be defined
in a unified approach with this paradigm.
When we compose these building blocks to construct our full protocol,
we can prove perfect security of the full protocol
in a composable manner. By modularly proving  
the security of each building block, we can 
now think of each building block 
as interacting with an ideal functionality. 
This enables us to 
prove the security of the full protocol in the ideal world
assuming the existence of these ideal functionalities.

\paragraph{Active-server vs. passive-server protocols.}
In active-server 
protocols, servers can perform arbitrary 
computation and send messages to each other.
In passive-server protocols, the servers act as passive memory and can
only answer memory read and write  
requests from the client; furthermore, 
there is only client-server communication and servers do not 
communicate with each other.
Obviously passive-server
schemes are more general --- 
in fact, all schemes in this paper are in the passive-server paradigm. 
We stress that all of our security definitions
apply to both passive-server and  
active-server schemes.


\subsection{Definition of $k$-Server Oblivious RAM}

\paragraph{Ideal logical memory.}
The ideal logical memory is defined in the most natural way.
There is a memory array consisting of $N$ blocks 
where each block is $\Omega(\log N)$ bits long, and each block
is identified by its unique {\it address} which takes value
in the range $\{0, 1, \ldots, N-1\}$.

Initially all blocks are set to 0.
Upon receiving 
$({\tt read}, \addr)$, the value of the block 
residing at address $\addr$
is returned.
Upon receiving 
$({\tt write}, \addr, \data)$, the block at address 
$\addr$ is overwritten with the data value $\data$, and its old value
(before being rewritten) is returned.

\paragraph{$k$-server ORAM.}
A $k$-server Oblivious RAM (ORAM) is a protocol between 
a client $\C$ 
and $k$ servers $\S_1, \ldots, \S_k$ 
which realizes an ideal logical memory.
The execution of this protocol 
proceeds in a sequence of iterations:
in each interaction, the client $\C$ receives a logical memory request
of the form
$({\tt read}, \addr)$
or $({\tt write}, \addr, \data)$.
It then engages in some (possibly randomized) 
protocol with the servers, at the end of which
it produces some output thus completing the current iteration.

We require perfect correctness and perfect security as defined below. 
We refer to a sequence of logical memory requests as a 
{\it request sequence} for short.
\begin{itemize}[leftmargin=5mm]
\item  {\it Perfect correctness.}
For any request sequence, 
with probability $1$, all of the client's outputs 
must be correct. In other words, we require that with probability 1, 
all of the client's outputs must match 
what the an ideal logical memory would have output for 
the same request sequence. 

\item  {\it Perfect security under a semi-honest adversary.} 
We say that a $k$-server ORAM scheme satisfies
perfect security w.r.t. a semi-honest adversary 
corrupting $t$ servers, 
iff 
for every \algA that controls up to $t$ servers, 
and for every two request sequences 
${\bf R}_0$
and ${\bf R}_1$ of equal length, 
the views $\view^\adv({\bf R}_0)$
and
$\view^\adv({\bf R}_1)$ are identically distributed, where 
$\view^\adv({\bf R})$
denotes the
view of \algA 
(as defined earlier in Section~\ref{sec:IRpar})
under the request sequence ${\bf R}$.

\ignore{
For $\adv \subseteq \{1, \cdots, k\}$ of at most $t \in \N$, let $\view^\adv({\bf R})$ be the view of the adversary as defined in Section \ref{sec:IRpar} given the request sequence ${\bf R}$. Note that in this case the view of the adversary also includes the length of each request sequence. We require that 
for any request sequences ${\bf R}_0$
and ${\bf R}_1$ of equal length, 
the views $\view^\adv({\bf R}_0)$
and 
$\view^\adv({\bf R}_1)$ are identically distributed.
}
\end{itemize}

\ignore{
So far, we have considered stateless (i.e., one-shot) ideal functionalities.
We would like to define a $k$-server ORAM 
as a protocol that securely realizes a logical memory 
abstraction henceforth denoted $\algF_{\rm mem}$. 
Such a logical memory abstraction is a {\it stateful} functionality
(also called a {\it reactive} functionality in the standard cryptography
literature~\cite{uc,Canetti2000}):
\begin{itemize}[leftmargin=5mm]
\item[]
$\mcal{F}_{\rm mem}$
can receive two types of instructions of the form $({\tt read}, \addr)$
or $({\tt write}, \addr, \data)$.
A {\tt read} instruction 
instructs $\algF_{\rm mem}$ to read and return the data residing
at logical address $\addr$.
A ${\tt write}$ instruction instructs  
$\algF_{\rm mem}$ to overwrite logical address $\addr$ with $\data$,
and the old contents of address $\addr$ 
are returned.
We assume that the logical memory is 
initialized as a all-$0$-vector.
\end{itemize}

\elaine{now, define realize for stateful}
}

Since we require perfect security (and our scheme is based on
information-theoretic secret-sharing), our notion
resists adaptive corruptions and is composable.
\ignore{
\begin{remark}
\elaine{say 
in all our defns,
since we are perfect sec, we resist adaptive corruption automatically}
\end{remark}
}

\subsection{Resource Assumptions and Cost Metrics}

We assume that the client can store
$O(1)$ blocks while the servers can store $O(N)$ blocks.
We will use the metric {\it bandwidth blowup} to  characterize
the performance of our protocols.
Bandwidth blowup is the (amortized) number of blocks queried in the ORAM
simulation to query a single virtual block.
We also note that since the servers do not perform any
computation, and the client always performs an $O(1)$ computation
on its $O(1)$ storage, an $O(X)$ bandwidth blowup also
corresponds to an $O(X)$ \emph{runtime} for our protocol.
\ignore{
We will use the metric {\it runtime} to characterize
the performance of our protocols.
Recall that in our model, it takes one unit of time 
(i.e., round)
to compute each CPU instruction as well as to transmit a {\it single} block of data
to another party.
Thus if an protocol achieves $O(X)$ runtime in our model,
it implies that the protocol consumes $O(X)$ bandwidth 
and computation time.
In other words, 
our runtime metric is stringent: it takes the 
maximum of computation time and bandwidth.

For our 3-server ORAM scheme, we will  
consider the runtime for completing each logical memory request (amortized
over a large number of requests) --- this
metric is also referred to as the ORAM's {\it simulation
  overhead}.
}

\ignore{
We adopt \emph{number of block operations} as the metric to characterize
the overhead of our ORAM. 

\begin{definition}[Number of block operations]
  The \emph{number of block operations} to perform a computation
  is defined as the total number of rounds required to perform a
  computation where in each round, either
  1) the client RAM can request a block of data from each of the
  servers, or 2) the client RAM and/or the server RAMs can
  perform a unit of computation on a data block stored in its own
  memory.
\end{definition}

We note that this  definition captures the notion of
bandwidth blowup along with the computation performed on each of
the servers.
\ignore{
\begin{definition} [Simulation overhead]
  If a {\sf RAM} that completes an operation with a client-server
  communication of $K$ blocks can be obliviously simulated by an
  $\ORAM$ using $\gamma \cdot K$ blocks of communication, then we
  say that $\ORAM$'s simulation overhead is $\gamma$.
\end{definition}
}
}



\section{Core Building Blocks: Definitions and Constructions}

Imagine that there are three servers denoted $\S_0$, $\S_1$,
and $\S_2$, and a client denoted $\C$. We use~$\S_b, b \in
\zthree$ to refer to a specific server. Arithmetic performed on
the subscript $b$ is done modulo~3.

\subsection{Useful Definitions}
Let $\Arr$ denote a list of blocks where each block
is either a real block containing a payload string and a logical address;
or a dummy block denoted $\bot$.
We define {\it sorted} and
{\it semi-sorted} as follows:
\begin{itemize}[leftmargin=5mm]
\item
{\it Sorted:} $\Arr$ is said to be sorted \emph{iff}
all real blocks appear before dummy ones;
and all the real blocks appear in increasing order of
their logical addresses. If multiple blocks have the same logical address,
their relative order can be arbitrary.

\item
{\it Semi-sorted:}
$\Arr$ is said to be semi-sorted
\emph{iff} all the real blocks appear in increasing order of
their logical addresses, and ties may be broken arbitrarily.
However, the real blocks are allowed to be interspersed by dummy blocks.
\end{itemize}

%

\paragraph{Array Notation.} We assume each location of an array $\Arr$
stores a \emph{block} which is a bit-string of length~$B$.
Given two arrays $\Arr_1$ and $\Arr_2$,
we use $\Arr_1 \oplus \Arr_2$ to denote the resulting array after
performing bitwise-XOR on the corresponding elements at each index of the two arrays;
if the two arrays are of different lengths, we assume the shorter array is appended with
a sufficient number of zero elements.

\paragraph{Permutation Notation.} When a permutation $\pi: [n] \ra [n]$
is applied to an array $\Arr$ indexed by~$[n]$
to produce $\pi(\Arr)$, we mean
the element currently at location~$i$ will be moved to location~$\pi(i)$.
When we compose permutations, $\pi \circ \sigma$ means that
$\pi$ is applied \emph{before} $\sigma$. We use \iperm{}
to denote the identity permutation.

\paragraph{Layout.}
A layout is a way to store some data $\Arr$ on three servers
such that the data can be recovered by combining information on the three servers.
Recall that the client has only $O(1)$ blocks of space, and our protocol
does not require that the client stores
any persistent data.

Whenever some data $\Arr$ is stored on a server, informally speaking,
we need to ensure two things: 1) The server does not learn the
data $\Arr$ itself, and
2) The server does not learn \emph{which} index i of the data is
accessed.
In order to ensure the prior, we XOR secret-share the data $\Arr
:= \Arr_0 \oplus \Arr_1 \oplus \Arr_2$
between
three servers $\S_b, b \in \zthree$ such that $\S_b$ stores $\Arr_b$.
For a server to not learn \emph{which}
index $i$ in $\Arr$ is accessed, we ensure that the data is
permuted, and the access happens to the permuted data. If the
data is accessed on the same server that permutes the data, then
the index $i$ will still be revealed. Thus, for each share
$\Arr_{b}$, we ensure that one server permutes it and we access
it from another server, i.e.,
we have two types of servers:

\begin{itemize}[leftmargin=5mm]
\item
Each server $\S_b$ acts as a {\it storage server} for the $b$-th share, and thus
it knows $\Arr_b$.
\item
Each server $\S_b$ also acts as the {\it permutation server}
for the $(b+1)$-th share, and thus
it also knows $\Arr_{b+1}$ as well as $\pi_{b+1}$.
\end{itemize}

Throughout the paper, a layout is of the following form
\[
\text{3-server layout}: \quad \{\pi_b, \Arr_b\}_{b \in \zthree}
\] 
where $\Arr_b$ and $(\pi_{b+1}, \Arr_{b+1})$ are stored by server
$\S_b$. As mentioned, $\S_b$ not only knows its own share ($\Arr_b$) but
also the permutation and share of the next server $(\pi_{b+1}, \Arr_{b+1})$.




Specifically, $\Arr_0, \Arr_1, \Arr_2$ denote lists
of blocks of equal length: we denote $n = |\Arr_0| = |\Arr_1| = |\Arr_2|$.
Further, $\pi_{b+1}: [n] \rightarrow [n]$ is a permutation stored by
server $\S_{b}$ for the list $\Arr_{b+1}$.
Unless there is ambiguity, we use $\oplus_b$ to mean applying $\oplus_{b \in \zthree}$
to three underlying arrays.

The above layout is supposed to store
the array that can be recovered by:

$$\oplus_{b} \pi_{b}^{-1}(\Arr_b).$$



Henceforth, given a layout $\{\pi_b, \Arr_b\}_{b \in \zthree}$,
we say that the layout is {\it sorted}
(or semi-sorted) iff
$\oplus_{b} \pi_{b}^{-1}(\Arr_b)$ is sorted (or
semi-sorted).

\noindent \emph{Special Case.}
Sometimes the blocks secret-shared among $\S_0$, $\S_1$, $\S_2$
may be unpermuted, i.e.,
for each $b \in \zthree$,
$\pi_b$ is the identity permutation~$\iperm$.
In this case,
the layout is
\[
\text{Unpermuted layout}: \quad \{\iperm, \Arr_b\}_{b \in \zthree}
\]

For brevity, the unpermuted layout $ \{\iperm, \Arr_b\}_{b \in
  \zthree}$ is also denoted by the abstract array $\Arr$.

\begin{definition}[Secret Write]
\label{defn:secret_write}
An \emph{abstract}
array $\Arr$
corresponds to some unpermuted layout
$\{\iperm, \Arr_b\}_{b \in \zthree}$.
We say that the client \emph{secretly writes}
a value $\mathsf{B}$ to
the array $\Arr$ at index~$i$,
when it does the following:


\begin{compactitem}
\item Sample random values $\mathsf{B}_0$ and $\mathsf{B}_1$ independently,
and compute $\mathsf{B}_2 := \mathsf{B} \oplus \mathsf{B}_0 \oplus \mathsf{B}_1$.

\item For each $b \in \zthree$, the client writes $\Arr_b[i] := \mathsf{B}_b$ on server $\S_b$
(and $\S_{b-1}$).
\end{compactitem}

\end{definition}

\begin{definition}[Reconstruct]
Given some layout $\{\pi_b, \Arr_b\}_{b \in \zthree}$,
the client \emph{reconstructs} a value from using tuple $(i_0, i_1, i_2)$ of indices,
when it does the following:

\begin{compactitem}
\item For each $b \in \zthree$, the client reads $\Arr_b[i_b]$
from server $\S_b$. (It is important that the client reads $\Arr_b$ from $\S_b$,
even though $\Arr_b$ is stored in both $S_b$ and $S_{b-1}$.)
\item The reconstructed value is $\oplus_b \Arr_b[i_b]$.
\end{compactitem}
\end{definition}

\paragraph{Protocol Notation.}
All protocols are denoted as {\sf out} $\leftarrow$ {\sf
Prot}({\sf sin}, {\sf cin}). Here, {\sf sin} and {\sf cin} are
respectively server and client inputs to the protocol {\sf
  Prot}. Except for in an ORAM {\sf Lookup}, all the outputs {\sf out} are sent
to the server.



\subsection{Permute and Unpermute}
\label{sec:perm-unperm}
\subsubsection{Non-oblivious random permutation.}
\label{sec:fisher-yates}
Fisher and Yates~\cite{donald1998art} show how to generate a uniformly random
permutation $\pi: [n] \rightarrow [n]$ in $O(n)$ time steps. This
implies that the client can write a random permutation on a
server with $O(n)$ bandwidth.
The permutation is non-oblivious, i.e., the
server \emph{does} learn the permutation generated.

\subsubsection{Definition of ${\sf Permute}$.}
${\sf Permute}$
is a protocol that realizes an ideal functionality
$\mcal{F}_{\rm perm}$ as defined below.
Intuitively, this functionality takes some unpermuted input
layout
(i.e., unpermuted secret-shared inputs)
and three additional permutations $\pi_{b+1}$ from the three
permutation servers $\S_b$.
The functionality produces an output such that the three shares
are secret-shared again, and the share received by
storage server $\S_{b+1}$ is
permuted
using $\pi_{b+1}$.
Secret-sharing the data again before applying the new permutations
ensures that a storage server $\S_{b+1}$ does not learn the
permutation $\pi_{b+1}$
applied to its share.

\begin{itemize}[leftmargin=5mm]

\item $\{\pi_b, \Arr'_b\}_{b \in \zthree}\leftarrow{\sf Permute \big((\{\iperm, \Arr_b\}_{b \in \zthree},\{\pi_{b}\}_{b \in \zthree}),\bot\big)}$:

\begin{itemize}[leftmargin=5mm]
\item
{\it Input}: Let $\{\iperm, \Arr_b\}_{b \in \zthree}$
 be the unpermuted layout provided as input. (Recall that $\Arr_b$ and $\Arr_{b+1}$ are
stored in server $\S_b$.) 

Moreover, for each~$b \in \zthree$, $\S_b$ has an additional permutation $\pi_{b+1}$ as input
(which could be generated by the client for instance).

The arrays have the same length $|\Arr_0| = |\Arr_1| = |\Arr_2| = n$, for some $n$.
The client obtains $\bot$ as the input.

\item
{\it Ideal functionality $\mcal{F}_{\rm perm}$}:

Sample independently and uniformly random $\widehat{\Arr}_0, \widehat{\Arr}_1$ of length $n$.

Now, define $\widehat{\Arr}_2 := \widehat{\Arr}_0 \oplus \widehat{\Arr}_1 \oplus (\oplus_{b} \Arr_b)$,
i.e., $\oplus_b \widehat{\Arr}_b  = \oplus_{b} \Arr_b$.

For each $b \in \zthree$, define $\Arr_b' := \pi_b (\widehat{\Arr}_b)$.

The output layout is
$\{\pi_b, \Arr'_b\}_{b \in \zthree}$,
and the client's
output is $\bot$.


\end{itemize}
\end{itemize}

\subsubsection{Protocol ${\sf Permute}$.}

The implementation of $\mcal{F}_{\rm perm}$ proceeds as follows:
\begin{enumerate}[leftmargin=5mm]


\item \emph{Mask shares.}
  For each data block, the client first generates block
  ``masks'' that sum up to
  \emph{zero}, and then applies mask to $\Arr_{b+1}$ on server
  $\S_b$.
Specifically, the client does the following, for each $i \in [n]$:
\begin{compactitem}
\item Generate block ``masks'' that sum up to \emph{zero}, i.e., sample independent random blocks $\mathsf{B}^i_0$
and $\mathsf{B}^i_1$, and compute $\mathsf{B}^i_2 := \mathsf{B}^i_0 \oplus \mathsf{B}^i_1$.

\item Apply mask $\mathsf{B}^i_{b+1}$ to $\Arr_{b+1}[i]$ stored
  on server $\S_b$, i.e., for each $i \in [b]$, the client
writes $\widehat{\Arr}_{b+1}[i] \gets {\Arr}_{b+1}[i] \oplus \mathsf{B}^i_{b+1}$
on server $\S_b$.


\end{compactitem}

\item \emph{Permute share of $\S_{b+1}$ and send result to
    $\S_{b+1}$.}
The client uses $\pi_{b+1}$ to  permute a share on the
permutation server and then sends this permuted
share to the storage server, i.e.,
for each $b \in \zthree$, the client computes 
computes $\Arr_{b+1}' := \pi_{b+1}(\widehat{\Arr}_{b+1})$ on
server $\S_b$,
and sends the result $\Arr'_{b+1}$ to $\S_{b+1}$.
Each server $\S_b$ stores $\Arr'_b$ and $(\pi_{b+1}, \Arr'_{b+1})$; hence, the new layout $\{\pi_b, \Arr'_b\}_{b \in \zthree}$
is achieved.
\end{enumerate}

\begin{theorem}\label{thm:permute}
The ${\sf Permute}$ protocol perfectly securely realizes the
ideal functionality $\mcal{F}_{\rm perm}$ in the presence of a
semi-honest adversary corrupting a single server with $O(n)$ bandwidth.

\end{theorem}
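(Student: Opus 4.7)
The plan is to establish the three required properties — correctness of the output distribution, perfect security against one semi-honest corruption, and the $O(n)$ bandwidth bound — separately, treating them in that order. For correctness, I would first observe that the client's three masks $(\mathsf{B}^i_0,\mathsf{B}^i_1,\mathsf{B}^i_2)$ XOR to zero by construction, so $\oplus_b \widehat{\Arr}_b[i] = \oplus_b \Arr_b[i]$ for every~$i$, and hence $\oplus_b \pi_b^{-1}(\Arr'_b) = \oplus_b \widehat{\Arr}_b = \oplus_b \Arr_b$, matching the ideal functionality's data-preservation requirement. For the joint distribution of $(\widehat{\Arr}_0,\widehat{\Arr}_1,\widehat{\Arr}_2)$, since $\mathsf{B}_0$ and $\mathsf{B}_1$ are independent uniform, the arrays $\widehat{\Arr}_0=\Arr_0\oplus \mathsf{B}_0$ and $\widehat{\Arr}_1=\Arr_1\oplus \mathsf{B}_1$ are independent uniform, and $\widehat{\Arr}_2$ is then exactly the XOR-determined value prescribed by $\mcal{F}_{\rm perm}$. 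Applying $\pi_b$ to each $\widehat{\Arr}_b$ yields the output layout distributed identically to the ideal one.

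For perfect security, I would fix an arbitrary corrupt server $\S_b$ (the argument is symmetric in $b\in\zthree$) and build a simulator $\Sim$ that receives from $\mcal{F}_{\rm perm}$ the corrupt party's input $(\Arr_b,\Arr_{b+1},\pi_{b+1})$ and output $(\pi_b,\Arr'_b,\pi_{b+1},\Arr'_{b+1})$. Internally, $\Sim$ reconstructs the intermediate array it needs to expose on $\S_b$'s storage by setting $\widehat{\Arr}_{b+1} := \pi_{b+1}^{-1}(\Arr'_{b+1})$, and then replays the real protocol's sequence of client actions touching $\S_b$: writing the masked blocks one-by-one to form $\widehat{\Arr}_{b+1}$, permuting them via $\pi_{b+1}$ to produce $\Arr'_{b+1}$, transmitting $\Arr'_{b+1}$ to $\S_{b+1}$, and eventually receiving $\Arr'_b$ from $\S_{b-1}$. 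The remaining piece of the view — the honest communication pattern between $\C$ and $\S_{b-1}$ — depends only on~$n$, so $\Sim$ produces it verbatim.

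The crux of the indistinguishability argument is that, from $\S_b$'s vantage point, the intermediate blocks $\widehat{\Arr}_{b+1}[i]$ in the real execution are information-theoretically uniform: if $b+1\in\{0,1\}$ the mask $\mathsf{B}^i_{b+1}$ is directly sampled uniform, and if $b+1=2$ the mask $\mathsf{B}^i_2=\mathsf{B}^i_0\oplus \mathsf{B}^i_1$ is uniform because $\S_b$ sees neither summand (they are generated by the trusted client and applied on honest servers). Hence the joint distribution of $(\widehat{\Arr}_{b+1},\Arr'_{b+1})$ matches that simulated by $\Sim$, and an analogous argument handles the incoming $\Arr'_b$ received from $\S_{b-1}$: it is a $\pi_b$-permutation of a uniformly masked array, and its only correlation with the inputs is the one dictated by the ideal output. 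I expect the main obstacle to be bookkeeping — carefully enumerating every round in which $\S_b$ observes a memory write, a message on its own channel, or a client-to-$\S_{b-1}$ communication event — and verifying in each case that the simulator's transcript is identically distributed to the real one; no hybrid argument should be needed since all randomness is information-theoretic.

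Finally, for the bandwidth bound, each of the $n$ indices contributes only $O(1)$ client operations: sampling two random blocks, writing three masked blocks, and one block transmission from $\S_b$ to $\S_{b+1}$ after permutation. The Fisher–Yates-style application of $\pi_{b+1}$ uses $O(n)$ block operations on $\S_b$'s side. Summing across the three symmetric share operations gives $O(n)$ bandwidth, as claimed.
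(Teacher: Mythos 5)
Your proposal follows the same route as the paper's own proof: you verify correctness of the output distribution, then argue that the corrupt server $\S_b$'s view is determined by its inputs $(\Arr_b,\Arr_{b+1},\pi_{b+1})$ and outputs $(\Arr'_b,\Arr'_{b+1},\pi_{b+1})$, with the intermediate array reconstructed as $\widehat{\Arr}_{b+1}=\pi_{b+1}^{-1}(\Arr'_{b+1})$ and the communication pattern depending only on~$n$. Your write-up is somewhat more verbose about the uniformity of the masks (which the paper folds into the correctness claim) but the simulator construction and the $O(n)$ cost accounting are identical to the paper's argument.
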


\begin{proof}
By construction, the implementation of the protocol
applies the correct permutation on each server's array
and re-distributes the secret shares using fresh independent randomness.
Hence, the marginal distribution of the protocol's outputs
is exactly the same as that of the ideal functionality.

Fix some $b \in \zthree$
and consider the $\view^b$ of the corrupt server $\S_b$.
Since the leakage of $\mcal{F}_{\rm perm}$ is empty,
we will in fact show that
given the inputs and the outputs to server $\S_b$,
the $\view^b$ is totally determined and has no more randomness.
Hence, given $\Inp_b$ and conditioning on $\Out_b$,
the $\view^b$ is trivially independent of the outputs
of the client and other servers.

Then, given the inputs $\Inp_b$ and the outputs $\Out_b$ to
$\S_b$, a simulator simply returns 
the view of $\S_b$ uniquely determined by $\Inp_b$ and $\Out_b$.


The inputs to $\S_b$ 
are the arrays $\Arr_b$ and $\Arr_{b+1}$,
and the permutation $\pi_{b+1}$.
The outputs are the arrays $\Arr'_b$ and $\Arr'_{b+1}$,
and also the permutation $\pi_{b+1}$.
We next consider each part of $\view^b$.

\begin{enumerate}

\item \textbf{Communication Pattern.}  The communication pattern between the client and all the servers only depends on~$n$.

\item \textbf{Data Structure.}  We next analyze the intermediate data that is observed by
$\S_b$.  The arrays $\Arr'_b$ and $\Arr'_{b+1}$ are in $\S_b$'s outputs.
Hence, it suffices to consider the intermediate
array $\widehat{\Arr}_{b+1} = \pi_{b+1}^{-1}(\Arr'_b)$,
which is totally determined by the outputs.
%
\end{enumerate}

We have shown that the $\view^b$
is actually a deterministic function of the inputs and the outputs of $\S_b$,
as required.
\paragraph{Efficiency. }   Recall that it takes $O(1)$ time to process one block.
	From the construction, it is straightforward that
	linear scans are performed on the relevant arrays.
			
	In particular, the two steps -- masking shares and permuting the
  shares -- can be done with $O(n)$ bandwidth. 
	Moreover, the client can generate a permutation on the
  server with $O(n)$ bandwidth, when each block has $\Omega(\log n)$ bits, using the
  Fisher-Yates algorithm~\cite{donald1998art}.

\end{proof}

\ignore{
The proof can be found in the supplemental material, Section
\ref{ap:permute}.} \kartik{sec theorems are not tied with the defn.}

\subsubsection{Definition of $\sf Unpermute$.}
${\sf Unpermute}$ is a protocol that realizes an ideal functionality 
$\mcal{F}_{\rm unperm}$ 
as defined below.
Intuitively, this functionality 
reverses the effect of $\mcal{F}_{\rm perm}$.
It takes some permuted input layout,
and returns the corresponding unpermuted layout.
However, to avoid each server knowing its original permutation,
the contents of each entry 
needs to be secret-shared again.

\begin{itemize}[leftmargin=5mm]

\item $\{\iperm, \Arr'_b\}_{b \in \zthree}\leftarrow{\sf Unpermute (\{\pi_{b},\Arr_b\}_{b \in \zthree},\bot)}$:

\begin{itemize}[leftmargin=5mm]
\item 
{\it Input}: Let $\{\pi_b, \Arr_b\}_{b \in \zthree}$
 be the layout provided as input. (Recall that $\Arr_b$ and $(\pi_{b+1}, \Arr_{b+1})$ are
stored in server $\S_b$.) 


The arrays have the same length $|\Arr_0| = |\Arr_1| = |\Arr_2| = n$, for some $n$.
The client obtains $\bot$ as the input.

\item 
{\it Ideal functionality $\mcal{F}_{\rm unperm}$}:

Sample independently and uniformly random ${\Arr'}_0, {\Arr'}_1$ of length $n$.

Now, define $\Arr'_2 := \Arr'_0 \oplus \Arr'_1 \oplus (\oplus_{b} \pi_b^{-1}(\Arr_b))$,
i.e., $\oplus_b \Arr'_b  = \oplus_{b} \pi_b^{-1}(\Arr_b)$.


The output layout is 
$\{\iperm, \Arr'_b\}_{b \in \zthree}$,
and the client's 
output is $\bot$. 


\end{itemize}
\end{itemize}

\subsubsection{Protocol ${\sf Unpermute}$.}

The implementation of $\mcal{F}_{\rm unperm}$ proceeds as follows:
\begin{enumerate}[leftmargin=5mm]


\item \emph{Compute inverse permutations.} For each $b \in
  \zthree$, the client computes the inverse permutation
 $\widehat{\Arr}_{b+1} := \pi^{-1}_{b+1}(\Arr_{b+1})$ on server $\S_{b}$.

\item \emph{Mask shares.} For each data block, the client
  generates
block ``masks'' that sum up to \emph{zero} and then applies the
mask to $\Arr_{b+1}$ on server $\S_b$. 
Specifically, the client performs the following, for each $i \in
[n]$: 
\begin{compactitem}
\item Generate block ``masks'' that sum up to zero, i.e., Sample independent random blocks $\mathsf{B}^i_0$
and $\mathsf{B}^i_1$, and compute $\mathsf{B}^i_2 := \mathsf{B}^i_0 \oplus \mathsf{B}^i_1$.

\item Apply mask $\mathsf{B}^i_{b+1}$ to $\Arr_{b+1}[i]$ stored
  on server $\S_b$, i.e., for each $i \in [b]$, the client 
writes $\Arr'_{b+1}[i] \gets \widehat{\Arr}_{b+1}[i] \oplus \mathsf{B}^i_{b+1}$
on server $\S_b$.


\end{compactitem}

\item For each $b \in \zthree$, the server $\S_b$
sends $\Arr'_{b+1}$
to $\S_{b+1}$.

Hence, the new layout $\{\iperm, \Arr'_b\}_{b \in \zthree}$
is achieved.

\end{enumerate}

\ignore{
  \begin{lemma}[Perfectly Secure Implementation of $\mcal{F}_{\rm unperm}$]
\label{lemma:sec_unperm}
The above implementation of $\mcal{F}_{\rm unperm}$
is perfectly secure as per Definition~\ref{secdef}.
\end{lemma}
}

\begin{theorem}\label{thm:unpermute}
The ${\sf Unpermute}$ protocol perfectly securely realizes the
ideal functionality $\mcal{F}_{\rm unperm}$ in the presence of a
semi-honest adversary corrupting a single server with  $O(n)$
bandwidth blowup.

\end{theorem}

\begin{proof}
The proof is essentially the same as Theorem~\ref{thm:permute}.
It can be checked that
for each $b \in \zthree$,
the $\view^b$ is a deterministic function
of the inputs and the outputs of $\S_b$.
\end{proof}

\ignore{
\subsubsection{Definition of ${\sf Unpermute}$ and Protocol Description.}
Similar to {\sf Permute}, we also need a complementary {\sf
  Unpermute} protocol. Its definition and protocol
are described in the
supplemental material, Section~\ref{appendix:unpermute}.
}

\subsection{Stable Compaction}
\label{sec:stable-compact}
\subsubsection{Definition of ${\sf StableCompact}$.}
${\sf StableCompact}$
is a protocol that realizes an ideal functionality
$\mcal{F}_{\rm compact}$,
as defined below:
\begin{itemize}[leftmargin=5mm]
\item  $ \{\iperm, \Arr'_b\}_{b \in \zthree} \leftarrow{\sf
    StableCompact}(\{\iperm, \Arr_b\}_{b \in \zthree}, \bot)$:
\begin{itemize}[leftmargin=5mm]
\item {\it Input layout}:
A {\it semi-sorted}, unpermuted layout
denoted $\{\iperm, \Arr_b\}_{b \in \zthree}$.


\item
{\it Ideal functionality $\mcal{F}_{\rm compact}$}:
$\mcal{F}_{\rm compact}$
computes $\Arr^* := \Arr_0 \oplus \Arr_1 \oplus \Arr_2$; it then
moves all dummy blocks in $\Arr^*$ to the end of the array,
while keeping the relative order of real blocks unchanged.

Now, $\mcal{F}_{\rm compact}$
randomly samples $\Arr'_0, \Arr'_1$ of appropriate length
and computes $\Arr'_2$ such that
$\Arr^* = \Arr'_0 \oplus \Arr'_1 \oplus \Arr'_2$.
The output layout is a {\it sorted}, unpermuted layout
$\{\iperm, \Arr'_b\}_{b \in \zthree}$.

\end{itemize}
\end{itemize}

\subsubsection{${\sf StableCompact}$ Protocol.}
\ignore{
\kartik{Elaine's text starts}
The input is a {\it semi-sorted},
unpermuted layout, and we would like to
turn it into a {\it sorted}, unpermuted layout obliviously.
The key idea is to permute each share of the list (stored
on the 3 servers respectively), such that the storage server
for each share does not know the permutation.
Now, the client accesses all real elements in a sorted order,
and then accesses all dummy elements, writing down the elements
in a secret-shared manner as the accesses are made.
We can achieve this if
each real or dummy element is tagged with a pointer to its next
element, and the pointer is in fact a $3$-tuple
that is also secret shared on the 3 servers --- each element in the 3-tuple
indicates where the next element is in one of the 3 permutations.

Therefore, the crux of the algorithm is to tag each (secret-shared) element with
a (secret-shared) position tuple, indicating where its next element is ---
this will effectively create two linked list structures (one for real
and one for dummy): each element in the linked
lists is secret shared in to 3 shares,
and each share resides on its storage server at an independent random location.

Our algorithm below proceeds in the following steps.
\begin{itemize}[leftmargin=5mm]
\item
First, each server ${\bf S}_b$, acting as the permutation server
for ${\bf S}_{b + 1}$, generates and writes down
a random permutation $\pi_{b+1}$.
Basically, each index $i$ of the original list
writes down, on each ${\bf S}_b$, that its $(b+1)$-th
share (out of 3 shares), wants to be in position $\pi_{b+1}(i)$.

\item
Next, the client makes a reverse scan of
$({\sf T}_0, \pi_0(i)), ({\sf T}_1, \pi_1(i)),
({\sf T}_2, \pi_2(i)))$ for $i = n$ down to $1$
--- the client can access $({\sf T}_{b+1}, \pi_{b+1}(i))$ by talking to ${\bf S}_{b}$.
In this reverse scan, the client always locally remembers
the position tuple of the last real element encountered
(henceforth denoted $\mathfrak{p}_{\rm real}$)
and the position tuple of the last dummy element encountered (henceforth
denoted $\mathfrak{p}_{\rm dummy})$.
\kartik{how is $\mathfrak{p}_{\rm real}$ related to $\pi_{b+1}$}

During this scan, whenever a real element is encountered,
the client writes down on each of ${\bf S}_{b}$
a secret share of the current $\mathfrak{p}_{\rm real}$;
and whenever a dummy element is encountered,
the client writes down on each of ${\bf S}_{b}$
a secret share of the current $\mathfrak{p}_{\rm dummy}$.
These are secret shares of the next pointers for each element.
\kartik{how are the links/pointers denoted? (this will be needed
  in the last step.)}

At the end of this reverse scan, the client remembers
the position tuple for the first real
denoted $\mathfrak{p}^1_{\rm real}$
and position tuple for the first dummy
denoted $\mathfrak{p}^1_{\rm dummy}$.

\item
Next, we call $\mcal{F}_{\rm perm}$
inputting
1) the original layout --- but importantly, now
each element is tagged with a position tuple (that is also secret shared);
and 2)
the three permutations
chosen by each ${\bf S}_b$ (acting as the permutation server for ${\bf S}_{b+1}$).

\item
Finally, the client
traverses first the real linked list
(whose start position tuple is $\mathfrak{p}^1_{\rm real}$)
and then the dummy linked list
(whose start position tuple is $\mathfrak{p}^1_{\rm dummy}$).
During this traversal, the client secretly writes
each element encountered to produce the sorted and unpermuted
output layout.
\kartik{we need to mention how data is reconstructed using
  $\Arr_{b}$'s and $\mathfrak{p}^1_{\rm real}$, how the next
  pointer is constructed using the links created and
  $\mathfrak{p}^1_{\rm real}$, and what layout is created.}
\end{itemize}
\kartik{Elaine's text ends}
}

\ignore{
Let us first assume that all the blocks are encrypted using a
perfectly secure encryption scheme and the adversary views them
black-boxes. A first attempt to {\sf StableCompact} would be the
following. The client scans through the data blocks that are
stored
semi-sorted on one server  one-by-one. For every real block, the
client writes the block on to the next server and for every dummy
block the client does nothing. At the end, the client pads to the
total number of data blocks by adding dummy blocks. While the
access pattern of this solution is entirely oblivious to each of
the servers individually, a server can potentially distinguish
between two input lists that do not have dummies at the same
locations by inferring timing patterns. In order to fix this, our
solution
1) accesses data blocks after randomly permuted list, and 2) to
achieve a perfectly-secure encryption scheme, secret shares the
list between multiple servers, as described below.
}

The input is a {\it semi-sorted},
unpermuted layout, and we would like to
turn it into a {\it sorted}, unpermuted layout obliviously.
The key idea is to permute each share of the list (stored
on the 3 servers respectively), such that the storage server
for each share does not know the permutation.
Now, the client accesses all real elements in a sorted order,
and then accesses all dummy elements, writing down the elements
in a secret-shared manner as the accesses are made.
We can achieve this if
each real or dummy element is tagged with a pointer to its next
element, and the pointer is in fact a $3$-tuple
that is also secret shared on the 3~servers --- each element in the 3-tuple
indicates where the next element is in one of the 3~permutations.

Therefore, the crux of the algorithm is to tag each (secret-shared) element with
a (secret-shared) position tuple, indicating where its next element is ---
this will effectively create two linked list structures (one for real
and one for dummy): each element in the linked
lists is secret shared in to 3~shares,
and each share resides on its storage server at an independent
random location.

The detailed protocol is as follows:

\begin{enumerate}[leftmargin=5mm]

\item \label{compact:step1}
First, each server ${\bf S}_b$ acts as the permutation server
for ${\bf S}_{b + 1}$.
Thus, the client generates a random permutation $\pi_{b+1}$ on
the permutation server
$\S_{b}$ using the Fisher-Yates algorithm described in
Section~\ref{sec:fisher-yates}.
Basically, for each index $i$ of the original list the client
writes down, on each ${\bf S}_b$, that its $(b+1)$-th
share (out of 3 shares), wants to be in position $\pi_{b+1}(i)$.


\item \label{compact:step2} 
  Next, the client makes a reverse scan of
  $({\sf T}_0, \pi_0), ({\sf T}_1, \pi_1),
  ({\sf T}_2, \pi_2)$ for $i = n$ down to $1$.
  The client can access $({\sf T}_{b+1}[i], \pi_{b+1}(i))$ by
  talking to ${\bf S}_{b}$. 
  In this reverse scan, the client always locally remembers
  the position tuple of the last real element encountered
  (henceforth denoted $\mathfrak{p}_{\rm real}$)
  and the position tuple of the last dummy element encountered
  (henceforth
  denoted $\mathfrak{p}_{\rm dummy})$.
  Thus, if $\Arr[k_{\rm real}]$ is the last seen real
  element, then the client remembers $\mathfrak{p}_{\rm real} =
  (\pi_b(k_{\rm real}): b \in \zthree)$. $\mathfrak{p}_{\rm
    dummy}$ is updated analogously.
  Initially,
  $\mathfrak{p}_{\rm real}$ and $\mathfrak{p}_{\rm dummy}$ are
  set to $\bot$.

  During this scan, whenever a real element $\Arr[i]$
  is
  encountered,
  the client secretly writes the link $\link[i] :=
  \mathfrak{p}_{\rm real}$, i.e., $\link[i]$ represents
  secret shares of the next pointers for the real element and
  $\link$ itself represents an abstract linked list of real
  elements.
  The links for dummy elements are updated analogously using
  $\mathfrak{p}_{\rm dummy}$.

  At the end of this reverse scan, the client remembers
  the position tuple for the first real of the linked list
  denoted $\mathfrak{p}^1_{\rm real}$
  and position tuple for the first dummy
  denoted $\mathfrak{p}^1_{\rm dummy}$.

\ignore{
**********

  By synchronous
  \textbf{reverse} scanning of $\Arr = \oplus_b \Arr_b$
      and the permutations $(\pi_b: b \in \zthree)$ as arrays,
			the client
  $\C$ secretly writes an abstract array $\link$ (with the same
	length as $\Arr$) in reverse that
	is supposed to represent two linked lists connecting real and dummy elements.

        In this reverse scan, at any point, the client locally
        remembers 1) the position-tuple ${\sf j}_{\rm real}$ for
        the last seen real, and 2) the position-tuple ${\sf
          j}_{\rm dummy}$ for the
        last seen dummy.
                Thus, if $\Arr[k_{\rm real}]$ and $\Arr[k_{\rm
                  dummy}]$ are the
        last seen real and dummy elements respectively, then the
        client remembers ${\sf j}_{\rm real} = (\pi_b(k_{\rm real}): b
        \in \zthree)$ and ${\sf j}_{\rm dummy} = (\pi_b(k_{\rm dummy}):
        b \in \zthree)$. Initially, both ${\sf j}_{\rm real}$ and ${\sf j}_{\rm
          dummy}$ are
        set to $\bot$.

        When some $\Arr[k]$ is processed in the reverse scan, if
        it is a real element, then the client secretly writes the
        link $\link[k] := {\sf j}_{\rm real}$; moreover, it
        updates its local register ${\sf j}_{\rm real} :=
        (\pi_b(k): b \in \zthree)$. The case when $\Arr[k]$ is
        dummy is analogous. After this step, if we combine the
        information on the
        three servers, every real (dummy) element $\Arr[k]$ is
        tagged with link
        $\link[k]$ of the next real (dummy) element. But both data
        and the links are secret-shared between the servers.

        At the end of this step, the client remembers the
        link/position-tuple $
{\sf j}_{\rm real}$ for the first real element and the first
        dummy element ${\sf j}_{\rm dummy}$.

        \ignore{
	The invariant is that
	the client remembers
	$j_{\rm real} = (\pi_b(i_{\rm real}): b \in \zthree)$
	and $j_{\rm dummy} = (\pi_b(i_{\rm dummy}): b \in \zthree)$,
	where $\Arr(i_{\rm real})$ and
	$\Arr(i_{\rm dummy})$ are the real and the dummy elements last seen, respectively.
	Initially, both $j_{\rm real}$ and $j_{\rm dummy}$ are
        set to $\bot$.\anti{improve this text} \kartik{don't use
          j?}\kartik{ i real is a tuple, i is not?} \kartik{if
          the state is combined, every real is tagged with the
          position of the next real, and similarly for dummies.}
	
	Then, when $\Arr[i]$ is processed,
	if it is a real element,
	then the client secretly writes $\link[i] := j_{\rm real}$;
	moreover, it updates $j_{\rm real} := (\pi_b(i): b \in \zthree)$.
	The case when $\Arr[i]$ is dummy is analogous.
        }

	\emph{Example.} Suppose $k_1 < k_2 < \cdots < k_m$ are the indices
		of $\Arr$ that correspond to real elements. Recall that the client
		performs reverse scan of $\Arr$ (and the $\pi_b$'s).  Hence,
		when the index $k_m$ is reached, ${\sf j}_{\rm real}$ is still $\bot$;
		in this case, the client secret writes
                $\link[k_m] \gets {\sf j}_{\rm real}$
		and updates ${\sf j}_{\rm real} \gets (\pi_b[k_m]: b \in \zthree)$.
		When the scanning continues from $k = k_m-1$ to $k_{m-1} + 1$,
		the dummy list is constructed.  When the index reaches $k_{m-1}$,
		the tuple ${\sf j}_{\rm real}$ still stores $(\pi_b[k_m]: b \in \zthree)$.
		The client secretly writes $\link[k_{m-1}] \gets
                {\sf j}_{\rm real}$,
		thereby linking the element $\Arr[k_{m-1}]$ to $\Arr[k_m]$,
		and updates ${\sf j}_{\rm real} \gets (\pi_b[k_{m-1}]: b \in \zthree)$.
		This continues until the first real element $\Arr[k_1]$ is processed,
		after which ${\sf j}_{\rm real}$ takes the tuple $(\pi_b[k_1]: b \in \zthree)$,
		which acts like a pointer to the head of the list of real elements.

		}
	%
	
\item \label{compact:step3}
Next, we call ${\sf Permute}$
inputting
1) the original layout --- but importantly, now
each element is tagged with a position tuple (that is also secret shared);
and 2)
the three permutations
chosen by each ${\bf S}_b$ (acting as the permutation server for
${\bf S}_{b+1}$).
Thus,
${\sf Permute}$ is applied
to the combined layout $\{\iperm, (\Arr_b, \link_b)\}_{b \in \zthree}$,
where $\S_b$ has input permutation $\pi_{b+1}$.
Let the output of {\sf Permute} be denoted by
$\{\pi_b, (\Arr'_b, \link'_b)\}_{b \in \zthree}$.

\item \label{compact:step4}
Finally, the client
traverses first the real linked list
(whose start position tuple is $\mathfrak{p}^1_{\rm real}$)
and then the dummy linked list
(whose start position tuple is $\mathfrak{p}^1_{\rm dummy}$).
During this traversal, the client secretly writes
each element encountered to produce the sorted and unpermuted
output layout.

More precisely, the client secretly writes an abstract array
$\Arr''$ element by element.
Start with $k \gets 0$ and $\mathfrak{p} \gets \mathfrak{p}^1_{\rm
  real}$.

The client reconstructs element $\mathsf{B} := \oplus
\Arr'_b[\mathfrak{p}_b]$
and the next pointer of the linked list $\mathsf{next} := \oplus
\link'_b[\mathfrak{p}_b]$;
the client secretly writes to the abstract array $\Arr''[k] :=
\mathsf{B}$. 

Then, it updates $k \gets k + 1$ and $\mathfrak{p} \gets \mathsf{next}$,
and continues to the next element;
if the end of the real list is reached, then it sets $\mathfrak{p} \gets
\mathfrak{p}^1_{\rm dummy}$.
This continues until the whole (abstract) $\Arr''$ is secretly
written to the three servers.
	
\item \label{compact:step5}	The new layout $\{\iperm, \Arr''_b\}_{b \in \zthree}$ is constructed.
\end{enumerate}

\begin{theorem}\label{thm:compaction}
The ${\sf StableCompact}$ protocol perfectly securely realizes
the ideal functionality $\mcal{F}_{\rm compact}$
in the presence of a
semi-honest adversary corrupting a single server with $O(n)$ bandwidth.

\end{theorem}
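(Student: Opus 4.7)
The plan is to follow the template of the proofs of Theorems~\ref{thm:permute} and~\ref{thm:unpermute}: first argue correctness by tracing the behavior of the secret-shared linked lists, then prove security by a hybrid argument that replaces the ${\sf Permute}$ invocation in Step~\ref{compact:step3} with $\mcal{F}_{\rm perm}$ (justified by Theorem~\ref{thm:permute}), and finally show that for each corrupt $\S_b$ the view $\view^b$ can be simulated from $\Inp_b$ and $\Out_b$ together with independent fresh randomness.

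For correctness, the key observation is that after Step~\ref{compact:step2} the abstract array $\link$ together with the head pointers $\mathfrak{p}^1_{\rm real}$ and $\mathfrak{p}^1_{\rm dummy}$ encodes two chains through the input: one threading the real entries of $\Arr$ in their original left-to-right order and one threading the dummy entries. Because the subsequent call to ${\sf Permute}$ in Step~\ref{compact:step3} only relabels positions via the $\pi_b$'s and freshly re-randomizes shares while preserving $\oplus_b \pi_b^{-1}(\Arr'_b) = \oplus_b \Arr_b$, the chains remain well-defined in the permuted layout, with each logical node now residing at positions $(\pi_0(i),\pi_1(i),\pi_2(i))$. Step~\ref{compact:step4} traverses the real chain followed by the dummy chain, secretly writing the reconstructed payloads into a fresh abstract array $\Arr''$; this is exactly the output specified by $\mcal{F}_{\rm compact}$, and by Definition~\ref{defn:secret_write} the final shares are independent uniform blocks conditioned on their XOR equaling the compacted array.

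For security, I would fix a single corrupt server $\S_b$ and build a simulator that decomposes the view step by step. In Step~\ref{compact:step1} the client writes a uniformly random Fisher--Yates permutation $\pi_{b+1}$ on $\S_b$, which is input-independent and directly simulatable. In Step~\ref{compact:step2} the reverse scan touches $\Arr_b$, $\Arr_{b+1}$, and $\pi_{b+1}$ in a fixed schedule, and the writes of the link shares are freshly random blocks by Definition~\ref{defn:secret_write}. After replacing ${\sf Permute}$ with $\mcal{F}_{\rm perm}$ in Step~\ref{compact:step3}, $\S_b$ simply receives the re-randomized shares $(\Arr'_b, \link'_b)$ and keeps $\pi_{b+1}$. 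The main obstacle is Step~\ref{compact:step4}: here $\S_b$ observes a length-$n$ sequence of read indices $\mathfrak{p}_b^{(1)},\ldots,\mathfrak{p}_b^{(n)}$ into its own share $\Arr'_b$, and we must argue this sequence is distributed as a uniformly random permutation of $[n]$ independently of the input. The argument is that this sequence equals $\pi_b\circ \sigma$, where $\sigma$ is the (input-dependent) traversal order over original indices and $\pi_b$ is the random permutation chosen by $\S_{b-1}$ and \emph{never revealed} to $\S_b$; since $\pi_b$ is uniform and independent of $\pi_{b+1}$ and of $\S_b$'s inputs, $\pi_b\circ\sigma$ is uniform over permutations of $[n]$, perfectly masking $\sigma$. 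The simulator can thus emit a fresh random access sequence, and $\view^b$ matches the real distribution exactly. For bandwidth, Steps~\ref{compact:step1}, \ref{compact:step2}, and~\ref{compact:step4} are linear scans or length-$n$ traversals each costing $O(n)$ blocks, Step~\ref{compact:step3} costs $O(n)$ by Theorem~\ref{thm:permute}, and the client storage remains $O(1)$, giving total bandwidth $O(n)$.
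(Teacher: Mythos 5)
Your proposal is correct and follows essentially the same decomposition-by-component strategy as the paper's proof: argue correctness by tracing the linked-list structure, invoke the perfect security of ${\sf Permute}$ compositionally for Step~\ref{compact:step3}, and observe that the traversal indices in Step~\ref{compact:step4} are masked by the hidden permutation $\pi_b$. Your explicit factoring of the access sequence as $\pi_b\circ\sigma$ is a slightly more detailed justification of what the paper states directly (``from $\S_b$'s point of view, its array $\Arr'_b$ is traversed in an independent and uniform random order''), but the underlying idea is identical.
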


\begin{proof}
By construction, the protocol correctly removes dummy elements and preserves
the original order of real elements, where the secret shares are re-distributed
using independent randomness.  Hence, the marginal distribution on the outputs
is the same for both the protocol and the ideal functionality.

We fix the inputs of all servers, and some $b \in \zthree$. 
The goal is to show that (1) the $\view^b$ follows a distribution
that is totally determined by the inputs $\mathbf{I}_b$ and the outputs $\mathbf{O}_b$ of the corrupt $\S_b$;
(2) conditioning on $\Out_b$,
$\view^b$ is independent of the outputs of the client and all other servers.

The second part is easy, because the inputs are fixed.
Hence, conditioning on $\Out_b$ (which includes $\Arr''_b$ and $\Arr''_{b+1}$),  
$\Arr''_{b+2}$ has no more randomness and totally determines the outputs of other servers.
%
%

To prove the first part, our strategy is to decompose $\view^b$ into a list components,
and show that fixing $\mathbf{I}_b$ and  conditioning on  $\mathbf{O}_b$ and a prefix 
of the components, the distribution of the next component can be determined.
Hence, this also gives the definition of a simulator.

First, observe that in the last step, the client
re-distributes the shares, and gives
output~$\mathbf{O}_b$ to $\S_b$;
moreover, the shares of $\Arr''$ are generated with fresh independent
randomness.  Hence, the distribution of the part of $\view^b$ excluding $\mathbf{O}_b$ is independent of $\mathbf{O}_b$.
We consider the components of $\view^b$ in the following order,
which is also how a simulator generates a view after seeing $\Inp_b$.

\begin{enumerate}

\item \emph{Communication Pattern.}
Observe that from the description of the algorithm,
the communication pattern between the client and the servers
depends only on the length~$n$ of the input array.

\item \emph{Random permutation $\pi_{b+1}$.}  This is independently generated
using fresh randomness.

\item \emph{Link $\link$ creation.}  The (abstract) array $\link$
is created by reverse linear scan.  The shares $\link_b$ and $\link_{b+1}$
received by $\S_b$ are generated by fresh independent randomness.\anti{again need to refer to the privacy of the sharing for the linked lists}

\item \emph{$\mathsf{Permute}$ subroutine.}  By the correctness
of the $\mathsf{Permute}$, the shares of the outputs $(\Arr', \link')$ 
received by $\S_b$ follow an independent and uniform random distribution.
By the perfect security of 
$\mathsf{Permute}$, the component of $\view^b$ due to
$\mathsf{Permute}$ depends only on the inputs (which include the shares
of $\Arr$ and $\link$) and the outputs of the subroutine.

\item \emph{List traversal.} Since $\S_b$ does not know $\pi_b$ (which is generated
using independent randomness by $\S_{b-1}$),
from $\S_b$'s point of view,
its array $\Arr'_b$ is traversed in an independent and uniform random order.
\end{enumerate}

Therefore, we have described a simulator procedure that
samples the $\view^b$ step-by-step, given $\mathbf{I}_b$
and $\mathbf{O}_b$.

\paragraph{Efficiency. }

  Each of the steps in the protocol can be executed with a
  bandwidth of
  $O(n)$. 
	Step~\ref{compact:step1} can be performed using
  Fisher-Yates shuffle algorithm. In steps~\ref{compact:step2} and
  \ref{compact:step4}, the client linearly scans the abstract
  lists $\Arr, \Arr''$ and the links $\link, \link'$. Accessing
  each array costs $O(n)$ bandwidth.
	Finally,
  step~\ref{compact:step3} invokes ${\sf permute}$, which
  requires $O(n)$ bandwidth
  (Theorem~\ref{thm:permute}).

\end{proof}

\ignore{
The proof can be found in the supplemental material, Section
\ref{ap:compaction}. 
}
\kartik{correspond to definition 2.1}

\ignore{

\hubert{Delete the following, after the above protocol is checked.}

\begin{enumerate}[leftmargin=5mm]
\item We invoke $\mcal{F}_{\rm perm}$ with input layout $((\bot,
  \Arr_0), (\bot, \Arr_1), (\bot, \Arr_2))$ to obtain an output
  layout $((\pi_1, \Arr''_0), (\pi_2, \Arr''_1), (\pi_0,
\Arr''_2))$.

\item In a reverse scan of $\Arr$ (the input layout), the client
  $\C$ marks each real block of $\Arr$ with the permuted indices of
  its next real block for each of the three permutations: for
  example, if $i$ and $j > i$ are real
  blocks such that $i+1, i+2, \ldots, j-1$ are all dummies in
  $\Arr$, then $\Arr[i]$ will acquire the permuted indices of its
  next block $\Arr[j]$, i.e., $(\pi_0[j], \pi_1[j], \pi_2[j])$.
  Let these links between real blocks be denoted by $\link$,
  i.e., $\link[i] := (\pi_0[j], \pi_1[j], \pi_2[j])$.
  The client \emph{secretly writes} $\link$ with the three servers.
  While linking, dummy blocks, and the last real block obtain a
  dummy tag that is of the same length as a real permuted index
  -- this way all entries of $\link$ maintain the same bit-length.

  At the end of the operation, the servers have a layout
  $((\pi_1, \Arr''_0), (\pi_2, \Arr''_1), (\pi_0,
  \Arr''_2))$ from $\mcal{F}_{\rm perm}$. In addition, each server $\S_b$
  has a share $\link_{b+1}$. The client remembers the location of
  the first real block for each of
  the three shares.

\item
  Each server $\S_b$ now applies the same permutation $\pi_{b+1}$
  to $\link_{b+1}$ to obtain $\link''_{b+1} :=
  \pi_{b+1}(\link_{b+1})$.

\item
  Each server $\S_b$ sends $(\Arr''_{b+1}, \link''_{b+1})$ to
  $\S_{b+1}$.

\item The client now accesses the real blocks in the permuted
  list $\Arr''$ using $\link'$.

  The client remembers the location
  of the first real
  block in each of the three lists.
  The client reads the three shares in $\Arr''$ to construct the block $B$, and
  the shares of the links to construct link $l$.
  The client $\C$ secretly writes $B$ to construct the output $\Arr'$.
  Link $l$ is a
  3-tuple representing  the three  locations in $\Arr''$ and $\L''$
  for the next real block.
  The client repeats this process until it accesses all real
  blocks, i.e., it intercepts a dummy location tag.
\end{enumerate}
}

\subsection{Merging}

\subsubsection{Definition of ${\sf Merge}$.}
${\sf Merge}$
is a protocol that realizes an ideal functionality
$\mcal{F}_{\rm merge}$
as defined below:

\begin{itemize}[leftmargin=5mm]
\item $\{\iperm, \U''_b\}_{b \in \zthree}\leftarrow{\sf{Merge}\big(\{\iperm, (\Arr_b, \Arr'_b)\}_{b \in \zthree},\bot\big)}$:
\begin{itemize}[leftmargin=5mm]
\item
{\it Input layout}:
Two {\it semi-sorted}, unpermuted layouts
denoted $\{\iperm, \Arr_b\}_{b \in \zthree}$
and $\{\iperm, \Arr'_b\}_{b \in \zthree}$ denoting abstract
lists $\Arr$ and $\Arr'$,
where all the arrays have the same length~$n$.

\item
{\it Ideal functionality $\mcal{F}_{\rm merge}$}:
First, $\mcal{F}_{\rm merge}$
merges the two lists $\Arr_0 \oplus \Arr_1 \oplus \Arr_2$ and
$\Arr'_0 \oplus \Arr'_1 \oplus \Arr'_2$,
such that the resulting array is sorted with all dummy blocks
at the end. Let $\U''$ be this merged result.
Now, $\mcal{F}_{\rm merge}$
randomly samples $\U''_0$ and $\U''_1$
independently of appropriate length
and computes $\U''_2$ such that
$\U'' = \U''_0 \oplus \U''_1 \oplus \U''_2$.
The output layout is a {\it sorted}, unpermuted layout
$\{\iperm, \U''_b\}_{b \in \zthree}$.

\end{itemize}
\end{itemize}

\subsubsection{${\sf Merge}$ Protocol.}
\ignore{
A first attempt to merge two lists  is the following. For
simplicity, let us assume that the lists are encrypted using a
perfectly-secure encryption scheme. Since the lists are stored on
different servers, accesses
made
to one server is not visible to the other server. Thus, one can
perform a \emph{non-oblivious} merge operation by reading heads of the
list from the two servers and then writing the result to a third
server. At the end, the client pads the result to a fixed length
with dummy blocks.
From each servers perspective, the client only sequentially reads
the lists and the access pattern is individually oblivious.
Moreover, from
the third server's perspective, a data block is
written to it at every time unit independent of the relative
ordering of blocks between the two lists. However, similar to the
na\"ive solution described in the stable compact protocol,
the time interval between accesses of blocks can reveal the
relative ordering of the two lists. Our solution uses three
servers, storing both lists secret-shared and appropriately permuting
them before accessing the lists.
}

The protocol receives as input, two semi-sorted, unpermuted
layouts and produces a merged, sorted, unpermuted layout
as the output. The key idea is to permute the concatenation of
the two semi-sorted
inputs such that the storage servers do not know
the permutation.
Now, the client accesses real elements in both lists
in the
sorted order using the storage servers to produce a merged
output. Given that a
\emph{concatenation of the lists} is permuted together, elements
from \emph{which} list is accessed is not revealed during the
merge operation,
thereby allowing us to merge
the two lists obliviously.
In order to access the two lists in a sorted order, the client
creates a linked list of real and dummy elements using the
permutation servers, similar to the {\sf
StableCompact} protocol in Section~\ref{sec:stable-compact}.

The detailed protocol works as follows:
\begin{enumerate}[leftmargin=5mm]

\item First, the client concatenates the two abstract lists
  $\Arr$ and $\Arr'$ to obtain an abstract list $\U$ of size
  $2n$, i.e.,
  we interpret $\U_b$ as the concatenation of
  $\Arr_b$ and $\Arr'_b$ for each $b \in \zthree$.
  Specifically, $\U_b[0,n-1]$ corresponds to $\Arr_b$
  and $\U_b[n,2n-1]$ corresponds to $\Arr'_b$.

\item Now, each server $\S_b$ acts as the permutation server for
  $\S_{b+1}$.
  The client generates a random permutation $\pi_{b+1}: [2n] \ra
  [2n]$ on server
  $\S_{b+1}$ using the Fisher-Yates algorithm described in
  Section~\ref{sec:fisher-yates}. $\pi_{b+1}(i)$ represents the
  position of
  the $(b+1)$-th share and is stored on server $\S_b$.

%
%

\item The client now performs a reverse scan of $(\U_0, \pi_0),
  (\U_1, \pi_1), (\U_2, \pi_2)$ for $i = n$ down to $1$.
  During this reverse scan, the client always locally remembers
  the position tuples of the last real element and last dummy
  element encountered for both the lists. Let them be denoted by
  $\mathfrak{p}_{\rm real}$, $\mathfrak{p}'_{\rm real}$,
  $\mathfrak{p}_{\rm dummy}$, and $\mathfrak{p}'_{\rm dummy}$.
  Thus, if $\U[k_{\rm real}]$ is the last seen real element from
  the first list, the client remembers $\mathfrak{p}_{\rm real} =
  (\pi_b(k_{\rm real}) : b \in \zthree)$. The other position
  tuples are updated analogously. Each of these tuples are
  initially set to $\bot$.

  During the reverse scan, the client maintains an abstract linked list
  $\link$ in the following manner.
  When $\U[i]$ is processed, if it is a real
  element from the first list, then the client secretly writes
  the link $\link[i] := \mathfrak{p}_{\rm real}$. $\link[i]$
  represents secret shares of the next pointers for a real
  element from the first list. The cases for
  $\mathfrak{p}'_{\rm real}, \mathfrak{p}_{\rm dummy}$, and
  $\mathfrak{p}'_{\rm dummy}$ are analogous.

  At the end of this reverse scan, the client remembers
  the position tuple for the first real and first dummy elements of
  both
  linked lists. They are denoted by $\mathfrak{p}^1_{\rm real}$,
  $\mathfrak{p}'^1_{\rm real}$, $\mathfrak{p}^1_{\rm dummy}$, and
  $\mathfrak{p}'^1_{\rm dummy}$.

  \ignore{
  Using a synchronous reverse scan of the
reconstructed $\U = \oplus_b \U_b$ and $(\pi_b: b \in \zthree)$,
the client secretly writes an abstract array $\link[0, 2n-1]$
that represents 4 linked lists. For each of $\Arr$ and $\Arr'$,
there is a list connecting real elements following their original
relative order,
and there is a list of dummy elements.

During reverse scanning, the client locally remembers four tuples --
 the position-tuples
${\sf j}^1_{\rm real}$ and ${\sf j}^2_{\rm real}$ for the last
seen real element in the two lists, and ${\sf j}^1_{\rm dummy}$ and
${\sf j}^2_{\rm dummy}$ for the last seen dummy element.
Each of these tuples
have the form
$(\pi_b[k]: b \in \zthree)$
and corresponds to the last seen element in each of the 4 linked
lists.
Thus, if $\U[k^1_{\rm real}]$ is the last
seen real element from the first list, the client remembers ${\sf
j}_{\rm real} = (\pi_b[k^1_{\rm real}] : b \in \zthree)$.
Initially, all the four tuples are set to $\bot$.

When $\U[k]$ is processed in the reverse scan, if it is a
real element from the first list, then the client secretly writes
the link $\link[k] := {\sf j}^1_{\rm real}$; moreover, it updates
its local register ${\sf j}^1_{\rm real} := (\pi_b[k] : b \in
\zthree)$. The case for the other three lists is analogous.
Observe that this list linking step is similar to that in the
compaction protocol, except that in compaction, only 2 lists are
built.

At the end of this step, the client remembers the
links/position-tuples ${\sf j}^1_{\rm real}, {\sf j}^2_{\rm real},
{\sf j}^1_{\rm dummy}$, and ${\sf j}^1_{\rm dummy}$.
}
\ignore{
For each $i \in [2n]$,
the client secretly writes $\link[i] := (\pi_b[j]: b \in \zthree)$,
where in the list containing $\U[i]$,
the element immediately following $\U[i]$ is $\U[j]$;
if $\U[i]$ is the last element in its list,
then $\link[i] := \bot$.

\emph{Explanation.} In one single reverse scan, the 4 lists are built.
Observe that this is similar to the list linking step in the compaction protocol,
except that in compaction, only 2 lists are built.
After the reverse scan,
the four variables $j^1_{\rm real}$, $j^1_{\rm dummy}$, $j^2_{\rm real}$ and $j^2_{\rm dummy}$
represent the heads to the 4 lists.\anti{enhance this text }
}

\item We next call
${\sf Permute}$
to the combined layout $\{\iperm, (\U_b, \link_b)\}_{b \in \zthree}$,
where each server $\S_b$ has input $\pi_{b+1}$,
to produce $\{\pi_b, (\U'_b, \link'_b)\}_{b \in \zthree}$  as
output.

\item The linked lists can now be accessed using the four
  position tuples $\mathfrak{p}^1_{\rm real}$,
  $\mathfrak{p}'^1_{\rm real}$, $\mathfrak{p}^1_{\rm dummy}$, and
  $\mathfrak{p}'^1_{\rm dummy}$. The client first starts accessing real
  elements in the
  two lists using $\mathfrak{p}^1_{\rm real}$ and
  $\mathfrak{p}'^1_{\rm real}$ to merge them. When a real list
  ends, it starts
  accessing the corresponding dummy list.

  More precisely, the client secretly writes the merged result to the abstract
  output array $\U''$.

  Start with $k \gets 0$, $\mathfrak{p}^1 \gets \mathfrak{p}^1_{\rm real}$,
  $\mathfrak{p}^2 \gets \mathfrak{p}^2_{\rm real}$.

  For each $s \in \{1,2\}$,
  the client reconstructs $\mathsf{B}^s := \oplus_b \U'_b[\mathfrak{p}^s_b]$
  and $\mathsf{next}^s := \oplus_b \link'_b[\mathfrak{p}^s_b]$ at most once,
  i.e., if $\mathsf{B}^s$ and $\mathsf{next}^s$ have already been reconstructed once
  with the tuple $(\mathfrak{p}^p_b: b \in \zthree)$,
  then they will not be reconstructed again.

  If $\mathsf{B}^1$ should appear before $\mathsf{B}^2$,
  then the client secretly writes $\U''[k] \gets \mathsf{B}^1$ and updates $k \gets k+1$,
  $\mathfrak{p}^1 \gets \mathsf{next}^1$;
  if the end of the real list is reached, then it updates
  $\mathfrak{p}^1 \gets \mathfrak{p}^1_{\rm dummy}$.
  The case when $\mathsf{B}^2$ should appear before $\mathsf{B}^1$
  is analogous.

The next element is processed until the
client has secretly constructed the whole abstract array $\U''$.

\item The new merged layout $\{\iperm, \U''_b\}_{b \in \zthree}$ is produced.

\end{enumerate}

\begin{theorem}\label{thm:merge}
The ${\sf Merge}$ protocol perfectly securely realizes the ideal
functionality $\mcal{F}_{\rm merge}$ in the presence of a
semi-honest adversary corrupting a single server
with $O(n)$ bandwidth.

\end{theorem}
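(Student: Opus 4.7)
The plan is to mirror the structure of the proof of Theorem~\ref{thm:compaction}, since the ${\sf Merge}$ protocol is assembled from the same ingredients (random permutation generation, link construction via a reverse scan, a call to ${\sf Permute}$, and a linked-list traversal) with only the traversal logic changed to interleave two real lists (and fall back to the corresponding dummy lists when a real list is exhausted). I would split the argument into three parts: correctness of the marginal output distribution, perfect security of each corrupt server's view, and efficiency.

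For correctness, I would first argue that the linked-list construction in the reverse scan correctly chains the real (resp.\ dummy) elements of each input list in their original order, using the fact that the two input layouts are already semi-sorted. Then the traversal step produces, at each position, the smaller of the two currently-pointed real elements and advances the corresponding pointer, which is precisely the standard merge of two sorted lists; once a real list is exhausted the traversal switches to the dummy chain, so all remaining slots are filled with dummies, matching $\mcal{F}_{\rm merge}$'s output. Finally, the secret writes used to produce $\{\iperm,\U''_b\}_{b\in\zthree}$ sample fresh independent randomness for two shares and set the third by XOR, exactly as the ideal functionality does; thus the marginal distribution of the output layout equals that of $\mcal{F}_{\rm merge}$.

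For security, fix a corrupt server $\S_b$ with inputs $\Inp_b$ and outputs $\Out_b$. I would describe a simulator that generates $\view^b$ component by component, showing that each component's distribution is determined by $(\Inp_b,\Out_b)$ together with previously sampled components: (i) the client--server communication pattern depends only on $n$; (ii) the random permutation $\pi_{b+1}$ that $\S_b$ sees as a permutation server is drawn from fresh independent randomness via Fisher--Yates and is independent of the inputs; (iii) the shares $\link_b,\link_{b+1}$ written during link creation are generated with fresh independent randomness by the secret-write procedure and hence are uniform and independent of the underlying link values; (iv) the view contributed by the ${\sf Permute}$ subroutine is simulatable from its inputs and outputs by Theorem~\ref{thm:permute}, and the output shares $(\U'_b,\link'_b)$ delivered to $\S_b$ are uniform; and (v) during the final traversal $\S_b$ sees accesses to $\U'_b$ at positions $\pi_b(\cdot)$, but $\pi_b$ was sampled independently by $\S_{b-1}$ and is unknown to $\S_b$, so from $\S_b$'s viewpoint the access pattern is a uniformly random sequence of $2n$ distinct indices, independent of the input contents. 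Combining these, $\view^b$ depends on the protocol execution only through $(\Inp_b,\Out_b)$; moreover, conditioning on $\Out_b$, the outputs of the client and the other two servers are determined by $\Out_b$ (the third share is the XOR of the other two with the abstract merged array), so $\view^b$ is independent of those outputs given $\Out_b$, as required by Definition~\ref{secdef}.

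The step I expect to require the most care is (v): I need to argue that when two real lists are being interleaved, the sequence of positions accessed on $\S_b$ is still uniformly random among all $2n$-long sequences of distinct indices in $[2n]$, conditioned on what $\S_b$ already knows. The crux is that the concatenated layout $\U$ was permuted by a fresh $\pi_b$ generated by $\S_{b-1}$, so the relative order in which positions of $\U'_b$ are visited reveals nothing beyond the multiset of visited positions (which is all of $[2n]$). Once this is established, the remaining content is bookkeeping. For efficiency, each step performs a constant number of linear scans or a single invocation of ${\sf Permute}$ on arrays of size $O(n)$; the Fisher--Yates generation of $\pi_{b+1}$ is $O(n)$, the reverse scan that builds $\link$ is $O(n)$, ${\sf Permute}$ is $O(n)$ by Theorem~\ref{thm:permute}, and the traversal performs $2n$ pointer dereferences and secret writes, for a total of $O(n)$ bandwidth.
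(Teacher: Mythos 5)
Your proposal mirrors the paper's proof almost exactly: it establishes the correct marginal output distribution from the re-sharing with fresh randomness, reduces the security argument to a component-by-component simulator for $\view^b$ (communication pattern, $\pi_{b+1}$, link shares, the ${\sf Permute}$ subcall, and the traversal hidden by the unknown $\pi_b$), and bounds each phase by $O(n)$ linear work. The extra care you flag in item (v) is precisely the right place to pause, and your resolution via the hidden $\pi_b$ is the same one the paper relies on.
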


\ignore{\begin{lemma}
\label{thm:merge}
The above ${\sf Merge}$ protocol
is perfectly secure as per Definition~\ref{secdef}.
\end{lemma}
}

\begin{proof}
We follow the same strategy as in Theorem~\ref{thm:compaction}.
Again, from the construction, the protocol performs merging correctly and re-distributes secretes
using independent randomness.  Hence, the marginal distribution of
the outputs is the same for both the protocol and the ideal functionality.

We fix the inputs of all servers, and some $b \in \zthree$. 
Recall that the goal is to show that (1) the $\view^b$ follows a distribution
that is totally determined by the inputs $\mathbf{I}_b$ and the outputs $\mathbf{O}_b$ of $\S_b$;
(2) conditioning on $\Out_b$,
$\view^b$ is independent of the outputs of the client and all other servers.

The second part is easy, because the inputs are fixed.
Hence, conditioning on $\Out_b$ (which includes $\U''_b$ and $\U''_{b+1}$),  
$\U''_{b+2}$ has no more randomness and totally determines the outputs of other servers.
%
%

To prove the first part, our strategy is to decompose $\view^b$ into a list components,
and show that fixing $\mathbf{I}_b$ and  conditioning on  $\mathbf{O}_b$ and a prefix 
of the components, the distribution of the next component can be determined.
Hence, this also gives the definition of a simulator.

First, observe that in the last step, the client
re-distributes the shares of $\U''$, and gives
output~$\mathbf{O}_b$ (including $\U''_b$ and $\U''_{b+1}$) to $\S_b$;
moreover, the shares of $\U''$ are generated with fresh independent
randomness.  Hence, the distribution of the part of $\view^b$ excluding $\mathbf{O}_b$ is independent of $\mathbf{O}_b$.
We consider the components of $\view^b$ in the following order,
which is also how a simulator generates a view after seeing $\Inp_b$.

\begin{enumerate}

\item \emph{Communication Pattern.}
Observe that from the description of the algorithm,
the communication pattern between the client and the servers
depends only on the length~$n$ of the input arrays.

\item \emph{Random permutation $\pi_{b+1}$.}  This is independently generated
using fresh randomness.

\item \emph{Link $\link$ creation.}  The (abstract) array $\link$
is created by reverse linear scan.  The shares $\link_b$ and $\link_{b+1}$
received by $\S_b$ are generated by fresh independent randomness.

\item \emph{$\mathsf{Permute}$ subroutine.}  By the correctness
of the $\mathsf{Permute}$, the shares of the outputs $(\U', \link')$ 
received by $\S_b$ follow an independent and uniform random distribution.
By the perfect security of 
$\mathsf{Permute}$, the component of $\view^b$ due to
$\mathsf{Permute}$ depends only on the inputs (which include the shares
of $\Arr$ and $\link$) and the outputs of the subroutine.

\item \emph{List traversal.} Since $\S_b$ does not know $\pi_b$ (which is generated
using independent randomness by $\S_{b-1}$),
from $\S_b$'s point of view,
while the elements from the two underlying lists are being merged,
its array $\U'_b$ is traversed in an independent and uniform random order.
\end{enumerate}

Therefore, we have described a simulator procedure that
samples the $\view^b$ step-by-step, given $\mathbf{I}_b$
and $\mathbf{O}_b$.

\paragraph{Efficiency.} The analysis for the ${\sf Merge}$ protocol is similar to that for
  Theorem~\ref{thm:compaction} except that the operations are
  performed on lists of size $2n$ instead of $n$.

\end{proof}
\ignore{
The proof can be found in the supplemental material, Section~\ref{proof:merge}.
}

\ignore{\begin{fact}
  \label{fact:merge}
  The ${\sf Merge}$ protocol requires $O(n)$ runtime.
\end{fact}}

\ignore{
\begin{enumerate}[leftmargin=5mm]
\item
For $b \in \{0, 1\}$, $\S_b$ sends $\Inp_b$ and $\Inp'_b$ to $\P$.
Henceforth let $n := |\Inp_0|$.
$\P$ computes $\Inp := \Inp_0 \oplus \Inp_1$ and
$\Inp' := \Inp'_0 \oplus \Inp'_1$.
\item
$\P$ picks a random permutation $\pi : [2n] \rightarrow [2n]$.
Now, $\P$ scans each of $\Inp$ and $\Inp'$,
and marks each block (dummy and real) with its
index in the permuted outcome $\pi(\Inp||\Inp')$, i.e., if
$\pi$ is applied to the concatenation of $\Inp$ and $\Inp'$.

Abusing notation, we still use $\Inp$ and $\Inp'$
to denote the two outcome arrays where each block is tagged with an index.

\item
In a reverse scan of $\Inp$, 
$\P$ marks each real block in $\Inp$
with the permuted index of its next real block:
for example, if $i$ and $j > i$ are real blocks
and $i+1, i+2, \ldots, j-1$ are all dummies
in $\Inp$, then $\Inp[i]$
will acquire the permuted index $\pi(j)$.
$\P$ repeats the same for $\Inp'$.
Dummy blocks and the last real block obtain
a dummy tag that is of the same length as a real permuted index --- this
way, all entries of $\Inp$ maintain the same bit-length.

Abusing notation, we still use $\Inp$ and $\Inp'$
to denote the two outcome arrays
residing on $\P$ where each real
block is now additionally tagged with the permuted index
of its next real block.

\elaine{TO FIX: this doesn't work, only the client
can see the dummy/real flag}

\item
$\P$ now applies the permutation $\pi$ to permute
$\Inp||\Inp'$, and
secretly writes  the outcome
to $\S_0$ and $\S_1$.

\item

\end{enumerate}
}



\section{Three-Server One-Time Oblivious Memory}

We construct an abstract datatype  to process
non-recurrent memory lookup 
requests, i.e., 
between rebuilds of the data structure, each distinct address is requested at most once.
Our abstraction is similar to the perfectly secure one-time oblivious 
memory by Chan et
al.~\cite{perfectopram}.
However, while Chan et
al. only consider perfect security with respect to access pattern, 
our three-server one time memory in addition 
information-theoretically encrypts the data itself. 
Thus, in \cite{perfectopram}, since the algorithm does not provide
guarantees for the data itself, it can modify the data structure
while performing operations.
In contrast, our one-time oblivious memory is a read-only data
structure. In this data structure, we
assume every request is tagged with a position label indicating
which memory location to lookup in each of the
servers. In this section, we assume that such a position is
magically available during lookup; but in subsequent sections we
show how this data structure can be maintained and provided
during a lookup.
 
\subsection{Definition: Three-server One-Time Oblivious Memory}
\label{sec:defin-multi-serv}

Our (three-server) one-time oblivious memory supports three
operations: 1) {\sf Build}, 2) {\sf Lookup}, and 3) {\sf
  Getall}. {\sf Build} is called once 
 upfront to create the data structure: it takes in a set of data
 blocks (tagged with its logical address), permutes shares of the
 data blocks at each of the servers to create a data structure
 that facilitates subsequent lookup from the servers. Once the
 data structure is built, {\sf lookup} 
 operations can be performed on it. Each lookup request consists
 of a logical address to lookup and a position label for each of
 the three 
 servers, thereby enabling them to perform the lookup
 operation. The lookup 
 can be performed for a real logical address, in which case the
 logical address and the 
 position labels for each of the three servers are provided; or
 it can be a 
 dummy request, in which case 
 $\bot$ is provided.
Finally, a {\sf Getall } operation is called to obtain a list $\U$ of
all the 
 blocks that were provided during the {\sf Build}
 operation. Later, in our 
 ORAM scheme, the elements in the list $\U$ will be combined with those in other lists
to construct
a potentially larger one-time oblivious memory. 

Our three-server one-time oblivious memory maintains
obliviousness as long as 1) for each real block in the one-time
memory, a lookup is 
performed at most once, 2) at most $n$ total lookups (all of
which could potentially be dummy lookups) are performed, and 3) no two
servers collude with each other to learn the shares of the other
server. 

\subsubsection{Formal Definition.}
\label{sec:formal-definition-motm}
Our three-server one-time oblivious memory scheme $\TOTM[n]$ is
parameterized by $n$, the number of memory lookup requests
supported by the data structure. It is comprised of the following
randomized, stateful algorithms: 

\begin{itemize}[leftmargin=5mm]
\item $\Big(U, \big(\big\{\pi_b, (\widehat{\Arr_b},
  \widehat{\link_b})\big\}_{b \in \zthree}, {\sf dpos}\big)\Big) 
  \leftarrow {\sf Build}(\Arr, \bot)$: 
  \begin{itemize}[leftmargin=3mm]
  \item \emph{Input: } A sorted, unpermuted layout denoted
    $\{\iperm, \Arr_b\}_{b \in \zthree}$ representing an
    abstract sorted list $\Arr$. $\Arr[i]$ represents a key-value pair
    $(\key_i, 
    v_i)$ which
    are either real and contains a real address $\key_i$ and value
    $v_i$, or dummy and contains a $\bot$. The list
    $\Arr$ is sorted by the key $\key_i$. 
    The client's
    input is $\bot$.
  \item \emph{Functionality: } 
    The {\sf Build} algorithm creates a layout
    $\{\pi_b, (\widehat{\Arr_b}, \widehat{\link_b})\}_{b \in
    \zthree}$ of size $2n$ 
    that will facilitate subsequent lookup requests;
		intuitively, $n$ extra dummy elements are added,
		and the $\widehat{\link_b}$'s maintain a singly-linked
		list for these $n$ dummy elements.
		Moreover, the tuple of head positions
		is secret-shared $\oplus_b {\sf dpos}_b$ among the three servers.

    It also outputs a sorted list $U$ of $n$ key-value pairs $(\key,
    \pos)$ 
    sorted by $\key$
    where each $\pos := (\pos_0, \pos_1, \pos_2)$;
		the invariant is that if $\key \neq \bot$,
		then the data for $\key$ is $\oplus_b
                \widehat{\Arr_b}[\pos_b]$. \kartik{I changed the
                  build interface to include dpos too; similarly
                  I changed the lookup interface too!}

    The output list $U$ is stored as a sorted, unpermuted layout
    $\{\iperm, U_b\}_{b \in \zthree}$. 
    Every real key from $\Arr$ appears exactly once in $U$ and
    the remaining 
    entries of $U$ are $\bot$'s.
    The client's output is $\bot$.

    Later in our scheme, $U$ will be propagated back to the 
		corresponding data structure with preceding
		recursion depth during a coordinated rebuild. Hence, $U$ does
    not need to carry the value $v_i$'s. 
\end{itemize}
\item $v \gets {\sf Lookup}\Big(\big(\big\{\pi_b, (\widehat{\Arr_b},
    \widehat{\link_b})\big\}_{b 
    \in \zthree}, \dpos\big),\big(\key,\pos\big)\Big)$: 
  \begin{itemize}[leftmargin=3mm]
  \item \emph{Input: } The client provides a key $\key$ and a position label 
    tuple $\pos := (\pos_0, \pos_1, \pos_2)$. The servers input
    the data structure $\{\pi_b, (\widehat{\Arr_b},
    \widehat{\link_b})\}_{b 
    \in \zthree}$ and ${\sf dpos}$ created during {\sf Build}.

  \item \emph{Functionality: } If $\key \neq \bot$, return
    $\oplus_b \widehat{\Arr_b}[\pos_b]$ 
    else, return
    $\bot$. 
  \end{itemize}

\item $R \gets {\sf Getall}\Big(\big\{\pi_b, (\widehat{\Arr_b},
    \widehat{\link_b})\big\}_{b 
    \in \zthree}, \bot\Big)$: 
  \begin{itemize}[leftmargin=3mm]
  \item \emph{Input: } The servers input
    the  data structure $\{\pi_b, (\widehat{\Arr_b},
    \widehat{\link_b})\}_{b 
    \in \zthree}$ created during {\sf Build}.
  \item \emph{Functionality: }
  the {\sf Getall} algorithm returns a sorted, unpermuted layout
  $\{\iperm, R_b\}_{b \in \zthree}$ 
  of length 
  $n$. This layout represents an  abstract sorted 
  list $R$ of key-value pairs where each entry is
  either real and of the form $(\key, v)$ or dummy and of the form
  $(\bot, \bot)$.
  The list $R$ contains all real elements inserted
  during ${\sf Build}$ including those that have been looked up,
  padded with $(\bot, \bot)$ to a length of $n$.\footnote{The
    {\sf Getall} function returns as output the unpermuted layout
    that was input to {\sf Build}. It primarily exists for ease
    of exposition.}
\end{itemize}
\end{itemize}
\paragraph{Valid request sequence.}
Our three-server one-time oblivious memory ensures obliviousness
only if lookups are non-recurrent (i.e., the same real key is never looked up more than once); and the number of lookups is upper bounded by $n$, the size
of the input list provided to ${\sf Build}$. More formally,
a sequence of operations is valid, iff the following holds:

\begin{itemize}
\item The sequence begins with a single call to {\sf Build}, followed
by a sequence of at most $n$ {\sf Lookup} calls, and finally the sequence
ends with a call to {\sf Getall}.

\item All real keys in the input provided to {\sf Build} have distinct
keys.

\item For every ${\sf Lookup}$ concerning a real element with client's input $(\key, \pos := (\pos_0, \pos_1, \pos_2))$, the $\key$ should have 
existed in the input to {\sf Build}. Moreover, the position label
tuple $(\pos_0, \pos_1, \pos_2)$  must be the correct position labels
for each of the three servers.

\item No two {\sf Lookup} requests should request the same real key.
\end{itemize}

\paragraph{Correctness.} 
Correctness requires that:
\begin{enumerate}
\item For any valid request sequence, with probability 1, every 
{\sf Lookup} request must return the correct value $v$ associated
with key $\key$ that was supplied in the {\sf Build} operation.

\item For any valid request sequence, with probability 1, {\sf Getall}
must return an array $R$ containing every $(\key, v)$ pair
that was supplied to {\sf Build}, padded with dummies to have 
$n$ entries.\end{enumerate}

\paragraph{Perfect obliviousness.}  Suppose the following
sequence of operations are executed: the initial ${\sf Build}$,
followed by a valid request sequence of $\ell$ ${\sf Lookup}$'s,
and the final ${\sf Getall}$.  Perfect obliviousness
requires that for each $b \in \zthree$,
the joint distribution of the communication pattern
(between the client and the servers) and the $\view^b$ of $\S_b$
is fully determined by the parameters~$n$ and~$\ell$.


\subsection{Construction}
\label{sec:construction}

\subsubsection{Intuition.}
\label{sec:intuition}

The intuition is to store shares of the input list on storage
servers such that each share is independently permuted 
and each server storing a share
does not know its
permutation (but some other server does).
In order to lookup a real element, if a  position label
for all three shares are provided, then the client can directly
access the shares. 
Since the shares are permuted and the server storing a share does
not know the permutation, each lookup
corresponds to accessing a completely random location and is thus
perfectly oblivious.
This is true so far as each element is accessed exactly once and
the position label provided is correct; both of these constraints
are satisfied by a valid request sequence.
However, in an actual request sequence, some of the requests may
be dummy and these requests do not carry a position label with
them.
To accommodate dummy requests, before permuting the shares, we
first append 
shares of dummy elements to shares of the unpermuted input list.
We add dummy elements enough to support all lookup requests
before the one time memory is destroyed.
Then we create a linked list of dummy elements so that a  dummy
element stores the 
position label of the location where the next dummy element is
destined to be after permutation. The client maintains
the head of this linked list, updating it every time a dummy
request is made.
To ensure obliviousness, the links (position
labels) in the dummy linked list are also stored
secret-shared and permuted along with the input list. \anti{refine this}
\ignore{
For dummy elements, we append dummy elements enough to
support potentially all dummy lookup requests after the one time
memory is destroyed.
In order to lookup a real
element, we assume that a correct position label for all three
shares are 
provided; this is ensured by the constraints on a valid request
sequence for the one time memory.
Since the shares are permuted and the server storing a share does
not know the permutation, for each of the servers, each lookup
corresponds to accessing a completely random location and is thus
perfectly oblivious.
However, lookups for dummy requests is slightly tricky as they do
not have a position label associated 
with them. Thus, we maintain an (abstract) oblivious linked list
of  dummies so that each dummy can point to the next dummy using
position labels. Thus the client can access a dummy element by
just accessing the head of the list and then updating the head to
the next element. To ensure obliviousness, the links (position
labels) in the dummy linked list are 
secret-shared and permuted along with the input list.
}

\subsubsection{Protocol}
\label{sec:deta-constr}

\underline{\textbf{Build.}}
Our oblivious {\sf Build} algorithm proceeds as follows. Note
that the input list $\Arr$ is stored as an unpermuted layout
$\{\iperm, \Arr_b\}_{b \in \zthree}$ on the three servers. 

\begin{enumerate}
\item \emph{Initialize to add dummies.} 
Construct an extended abstract  $\Arr'[0..2n-1]$ of length $2n$ such that
the first $n$ entries are key-value pairs copied from the input~$\Arr$ (some of which 
may be dummies). 

The last $n$ entries of $\Arr'$ contain \emph{special}
dummy keys. For each $i \in [1..n]$, the special dummy key~$i$ is stored in $\Arr'[n-1+i]$,
and the entry has a key-value pair denoted by $\bot_i$.
For each~$i \in [1..n]$, the
client secretly writes $\bot_i$ to $\Arr'[n-1+i]$.


\item \emph{Generate permutations for $\OTM$.} Each server $\S_{b}$ acts
  as the permutation server for $\S_{b+1}$. For each 
  $b \in \zthree$, the client generates a random permutation
  $\pi_{b+1}: [2n] \rightarrow [2n]$ on permutation server 
  $\S_{b}$. 

\item \emph{Construct a dummy linked list.} Using the newly generated
permutation $\pi_{b+1}$ on server $\S_b$, the client constructs
a linked list of dummy blocks. This is to enable accessing the
dummy blocks linearly, i.e.,
for each $i \in [1..n-1]$,
after accessing dummy block $\bot_i$,
the client should be able to access $\bot_{i+1}$.

 The client simply leverages $\pi_{b+1}(n .. 2n-1)$ stored on 
server $\S_b$ to achieve this.
Specifically, 
for $i$ from $n-1$ down to 1,
to create a link between
$i$-th  and $(i+1)$-st dummy,  the
client reads $\pi_{b+1}(n + i)$ from server $\S_b$ and secretly
writes the tuple $(\pi_{b+1}(n+i) : b \in \zthree)$
to the abstract link $\link[n+i-1]$.

There are no links between real elements,
 i.e., for $j \in [0..n-1]$,
the client secretly writes $(\bot,
\bot, \bot)$ to (abstract) $\link[j]$.

Observe that these
links are secret-shared and stored as an unpermuted layout
$\{\iperm, \link_b\}_{b \in \S_b}$.

Finally, the client records the positions
of the head of the lists
and secretly writes the tuple across the three servers,
i.e.,
$\oplus_b {\sf dpos}_b := (\pi_b(n): b \in \zthree)$,
where ${\sf dpos}_{b}$ is stored on server $\S_b$.


\item \emph{Construct the key-position map $U$}. The client can
construct the (abstract) key-position map $U[0..n-1]$ sorted by
the key from the 
first $n$ entries of $\Arr'$ 
and the $\pi_{b}$'s.
Specifically, for each $i \in [0..n-1]$,
the client secretly writes 
$(\key_i, (\pi_b(i): b \in \zthree))$
to $U[i]$.

Recall that
$U$ is stored as a sorted, unpermuted layout $\{\iperm, U_b\}_{b \in \zthree}$. 

\item \emph{Permute the lists along with the links.} Invoke
  {\sf Permute} with input $\{\iperm,
  (\Arr'_b, \link_b)\}_{b \in \zthree}$, and permutation $\pi_{b+1}$ as the
  input for $\S_b$.  
  The ${\sf Permute}$ protocol returns a permuted output layout
  $\{\pi_b, 
  (\widehat{\Arr_b}, \widehat{\link_b})\}_{b \in \zthree}$. 

\item As the  data structure, each server $\S_b$
  stores $(\widehat{\Arr_b}, \widehat{\link_b})$, $(\pi_{b+1},
  (\widehat{\Arr_{b+1}}, \widehat{\link_{b+1}}))$, and ${\sf
    dpos_{b+1}}$. 
  The algorithm returns key-position map list $U$ as output,
  which is stored as 
 an unpermuted layout $\{\iperm, U_b\}_{b \in \zthree}$.
This list will later be passed to the preceding recursion depth 
in the ORAM scheme during a coordinated rebuild operation.
\end{enumerate}

\begin{fact}
  \label{fact:otm-build}
  The {\sf Build} algorithm for building an $\TOTM$ supporting
  $n$ lookups requires an $O(n)$ bandwidth. 
\end{fact}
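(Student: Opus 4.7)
The plan is to walk through the six steps of the {\sf Build} algorithm and argue that each contributes only $O(n)$ bandwidth, so the total is $O(n)$. Since each secret write and each reconstruct touches $O(1)$ blocks on each of the three servers, a linear scan of an array of length $O(n)$ that performs $O(1)$ work per entry costs $O(n)$ bandwidth overall.

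First I would handle the ``easy'' steps. Step~1 appends $n$ special dummy entries by performing $n$ secret writes, which is $O(n)$. Step~2 generates three random permutations on domains of size $2n$, each via Fisher--Yates as recalled in Section~\ref{sec:fisher-yates}; each permutation is produced with $O(n)$ bandwidth, so the total is still $O(n)$. Step~3 builds the dummy linked list by a single reverse scan of the $n$ dummy slots, reading one position label and secretly writing one link tuple per iteration, for $O(n)$ bandwidth; in the same scan the client also records the head tuple ${\sf dpos}$ via a single secret write. Step~4 builds the key--position map $U$ of length~$n$ by a single linear scan, again $O(n)$. Step~6 is purely bookkeeping and incurs no additional communication beyond what has already been accounted for.

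The only step that is not obviously linear is Step~5, which invokes ${\sf Permute}$ on the combined layout $\{\iperm, (\Arr'_b, \link_b)\}_{b \in \zthree}$ of length $2n$. Here I would simply appeal to Theorem~\ref{thm:permute}, which guarantees that ${\sf Permute}$ on arrays of length~$m$ uses $O(m)$ bandwidth; instantiated with $m = 2n$ this is $O(n)$.

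Summing the contributions of the six steps yields $O(n)$ total bandwidth, proving the fact. The only potential subtlety is that the blocks being shuffled in Step~5 now carry both a payload component and a link component; but since each block is still of size $O(B) = \Omega(\log n)$ bits and Theorem~\ref{thm:permute} is stated per block, this does not change the asymptotics. I do not expect a genuine obstacle here: the entire argument reduces to summing constantly many $O(n)$ terms.
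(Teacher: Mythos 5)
Your proposal is correct and follows essentially the same reasoning as the paper's proof: each step of {\sf Build} is either a linear scan of an $O(n)$-length array, a Fisher--Yates permutation generation, or an invocation of ${\sf Permute}$, each costing $O(n)$ bandwidth by Theorem~\ref{thm:permute}. You simply make the per-step accounting more explicit than the paper does, but there is no substantive difference in approach.
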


\begin{proof}
  Each of the steps in the protocol either generate a random permutation
  using Fisher-Yates, invoke {\sf Permute}, or linearly scan
  lists of size $O(n)$ blocks. Each of these steps can be
  performed with an $O(n)$ bandwidth.
\end{proof}
\noindent \textbf{Protocol} \underline{\textbf{Lookup.}} Our oblivious ${\sf
  Lookup}\Big(\big(\big\{\pi_b, (\widehat{\Arr_b},
    \widehat{\link_b})\big\}_{b 
    \in \zthree}, \dpos\big),\big(\key, (\pos_0, \pos_1, \pos_2)\big)\Big)$ algorithm proceeds as follows: 

\begin{enumerate}

\item The client reconstructs $(\pos'_0, \pos'_1, \pos'_2)
\gets \oplus_b {\sf dpos}_b$.

\item \emph{Decide position to fetch from.} If $\key \neq \bot$, set
$\pos \gets (\pos_0, \pos_1, \pos_2)$,
i.e., we want to use the position map supplied from the input;
if $\key = \bot$,
set $\pos \gets (\pos'_0, \pos'_1, \pos'_2)$,
i.e., the dummy list will be used.

\item \emph{Reconstruct data block.}
Reconstruct $v
  \gets \oplus \widehat{\Arr_b}[\pos_b]$
	and $(\widehat{\pos}_0, \widehat{\pos}_1, \widehat{\pos}_2)
	\gets \oplus \widehat{\link_b}[\pos_b]$.

\item \emph{Update head of the dummy linked list.}
If $\key \neq \bot$,
the client  re-shares the secrets
 $\oplus_b {\sf dpos}_b \gets (\pos'_0, \pos'_1, \pos'_2)$ with the same head;
if $\key = \bot$,
the client secretly shares the updated head
$\oplus_b {\sf dpos}_b \gets (\widehat{\pos}_0, \widehat{\pos}_1,
\widehat{\pos}_2)$. 

  \ignore{
  If this was the $i$-th
  lookup request after rebuild, if $i < n$, for each server
  $\S_b$ set ${\sf dpos}_{b+1} = \pi_{b+1}[n - 1 +
  i+1]$. \kartik{maintain i? Make lookup stateful?}}

\item \label{step2} \emph{Read value and return.} Return
  $v$.
	
\end{enumerate}

\begin{fact}
  \label{fact:otm-lookup}
  The {\sf Lookup} algorithm for looking up a block in $\TOTM$
  requires $O(1)$ bandwidth.
\end{fact}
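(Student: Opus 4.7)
The plan is to bound the bandwidth of each of the five steps of {\sf Lookup} by a constant number of block transfers, then sum. Since no step performs a linear scan or recursive call, each step only touches a constant number of indexed array entries (one per server), which should yield an overall $O(1)$ bandwidth.

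First I would handle Step~1: reconstructing $(\pos'_0, \pos'_1, \pos'_2) \gets \oplus_b {\sf dpos}_b$ requires the client to read a single secret-share block ${\sf dpos}_b$ from each server $\S_b$, for a total of three blocks, i.e., $O(1)$. Step~2 is purely local client computation (choosing between two tuples already held in registers) and incurs no bandwidth.

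Next, Step~3 reconstructs $v \gets \oplus_b \widehat{\Arr_b}[\pos_b]$ and the tuple $(\widehat{\pos}_0, \widehat{\pos}_1, \widehat{\pos}_2) \gets \oplus_b \widehat{\link_b}[\pos_b]$; each requires one indexed read of a single entry from each of the three servers, so again $O(1)$ blocks in total. Step~4 writes a fresh secret sharing of the updated head ${\sf dpos}$ (either the unchanged $(\pos'_0,\pos'_1,\pos'_2)$ or the successor $(\widehat{\pos}_0,\widehat{\pos}_1,\widehat{\pos}_2)$) across the three servers, which is again $O(1)$ block writes. Step~5 only returns a value held locally by the client.

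Summing the constant bandwidth contributions of the five steps yields the $O(1)$ bound. I do not expect a real obstacle here: the only subtle point worth stating explicitly is that the protocol never needs a scan to locate the requested entry because the position tuple $\pos$ (in the real case) is supplied as input, and in the dummy case the head pointer ${\sf dpos}$ is maintained across invocations as part of the data structure, so the client can jump directly to the next dummy via its secret-shared link without any search.
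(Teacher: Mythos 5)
Your proposal is correct, and it matches the only reasonable argument for this fact (which the paper omits as immediate from the protocol description): each of the five steps of {\sf Lookup} touches only a constant number of indexed entries per server (one share of ${\sf dpos}$, one share of $\widehat{\Arr_b}[\pos_b]$ and $\widehat{\link_b}[\pos_b]$, and one share-write for the updated head), with no scans or recursive calls, so the total bandwidth is $O(1)$.
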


\noindent {\textbf{Protocol} \underline{\textbf{Getall.}} For ${\sf Getall}$, the
client simply 
invokes the {\sf Unpermute} protocol 
on input layout $\{\pi_b, 
(\widehat{\Arr_b}, \widehat{\link_b})\}_{b \in \zthree}$ and
returns the first $n$ entries of the 
sorted, unpermuted layout (and ignores the links created). 
This output is also stored as a sorted, unpermuted layout
$\{\iperm, \Arr_b\}_{b \in \zthree}$.
The data structure created  
on the servers during {\sf Build} can now be destroyed.
\begin{fact}
  \label{fact:otm-lookup}
  The $\TOTM$ {\sf Getall} algorithm 
  requires an $O(n)$ bandwidth.
\end{fact}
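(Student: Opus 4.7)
The plan is to reduce the bandwidth bound directly to the efficiency guarantee of the ${\sf Unpermute}$ subroutine established in Theorem~\ref{thm:unpermute}. Recall that the data structure produced by ${\sf Build}$ consists of the layout $\{\pi_b, (\widehat{\Arr_b}, \widehat{\link_b})\}_{b \in \zthree}$, where each of $\widehat{\Arr_b}$ and $\widehat{\link_b}$ is an array of length $2n$ (since ${\sf Build}$ pads the $n$ real/dummy input entries with $n$ extra special dummies before permuting). Thus the combined payload on which ${\sf Unpermute}$ is invoked has length $2n$ blocks (treating each $(\widehat{\Arr_b}[i], \widehat{\link_b}[i])$ as a single composite block, or equivalently, paying a constant-factor overhead).

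First I would invoke Theorem~\ref{thm:unpermute}, which asserts that ${\sf Unpermute}$ on a layout of length $m$ incurs $O(m)$ bandwidth. Plugging in $m = 2n$ yields $O(n)$ bandwidth for producing the unpermuted, sorted layout $\{\iperm, (\Arr'_b, \link'_b)\}_{b \in \zthree}$.

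Next I would account for the remaining work of ${\sf Getall}$: the client discards the link component and retains only the first $n$ entries of the unpermuted array. This trimming is a purely local bookkeeping step (or at worst a single linear scan to mark/ignore the tail entries and the link shares), costing $O(n)$ bandwidth. Destroying the old data structure is likewise bounded by the size of the data structure, namely $O(n)$ blocks.

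Summing the two contributions gives $O(n) + O(n) = O(n)$ bandwidth overall. There is no real obstacle here; the only subtlety worth flagging is that the underlying arrays have length $2n$ rather than $n$, so one must be careful to note that the constant-factor blowup from padding with dummies is absorbed into the asymptotic bound.
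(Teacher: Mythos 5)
Your proof is correct and follows the natural route: the paper itself gives no explicit proof for this fact, but the argument you supply (invoke Theorem~\ref{thm:unpermute} on the permuted layout of size $2n$, then note the trailing linear scan and truncation cost $O(n)$) is exactly the calculation the paper implicitly relies on, and you are right to flag that the arrays produced by ${\sf Build}$ have length $2n$ so the dummy padding is absorbed into the constant.
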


\begin{lemma}
\label{lemma:44}
The subroutines ${\sf Build}$, 
${\sf Lookup}$ and ${\sf Getall}$ are correct and perfectly oblivious
in the presence of a
semi-honest adversary corrupting a single server.
\end{lemma}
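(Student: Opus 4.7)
The plan is to handle correctness first by inspection of the three subroutines, and then to establish perfect obliviousness by following the same decomposition-of-view strategy that was used in the proofs of Theorems~\ref{thm:permute}, \ref{thm:unpermute}, \ref{thm:compaction}, and~\ref{thm:merge}.

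For correctness, I would argue as follows. In \textsf{Build}, after padding with $n$ special dummies and invoking \textsf{Permute}, the invariant $\oplus_b \widehat{\Arr_b}[\pi_b(i)] = \Arr'[i]$ holds for every $i \in [0..2n{-}1]$, so the key-position map $U$ populated with tuples $(\key_i, (\pi_b(i): b \in \zthree))$ indeed lets one recover $\Arr'[i]$ by XORing the three shares at the advertised positions. The dummy linked list stores, at the permuted location of the $i$-th dummy, the permuted-position tuple of the $(i{+}1)$-st dummy, so traversal from the secret-shared head $\dpos$ visits the $n$ dummies in order $\bot_1, \bot_2, \ldots, \bot_n$. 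Consequently, a \textsf{Lookup} on a real key with the position tuple supplied by the caller returns the correct value, while a \textsf{Lookup} on $\bot$ returns a fresh special dummy and advances $\dpos$; under a valid request sequence (at most $n$ lookups, no key repeated), no position is ever accessed twice. Finally, \textsf{Getall} invokes \textsf{Unpermute} and truncates to the first $n$ entries, which by the correctness of $\mcal{F}_{\rm unperm}$ returns exactly the original input $\Arr$.

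For perfect obliviousness, I would fix an arbitrary corrupt server $\S_b$ and a valid request sequence consisting of \textsf{Build}, followed by $\ell \le n$ \textsf{Lookup}s (with some known pattern of real and dummy requests), followed by \textsf{Getall}. I then decompose $\view^b$ into an ordered list of components and argue that a simulator can sample each component given only the parameters $n$ and $\ell$ (and the inputs/outputs of $\S_b$, which here consist only of the share $\Arr_b$ of the sorted input, the returned values for real lookups, and the output share of $\textsf{Getall}$). The components are: (i) the communication pattern, which depends only on $n$ and $\ell$; (ii) the secret-written dummies $\bot_i$, whose shares at $\S_b$ are uniform and independent by definition of secret write; (iii) the permutation $\pi_{b+1}$ generated on $\S_b$, which is drawn with fresh randomness; (iv) the secretly-written dummy links $\link_b, \link_{b+1}$ on $\S_b$ and the secret-shared head $\dpos_b$, all of whose $\S_b$-visible shares are independently uniform; (v) the $\textsf{Permute}$ subroutine's contribution, which by Theorem~\ref{thm:permute} is a deterministic function of $\S_b$'s inputs and outputs to that subroutine; (vi) the \textsf{Lookup} accesses; and (vii) the $\textsf{Unpermute}$ subroutine's contribution inside \textsf{Getall}, handled by Theorem~\ref{thm:unpermute}.

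The main obstacle, and the only step that is not immediate from prior theorems or fresh re-sharing, is component (vi): showing that from $\S_b$'s viewpoint the sequence of positions accessed during the $\ell$ \textsf{Lookup}s is a uniformly random sequence of $\ell$ distinct indices in $[2n]$. The key observation is that $\S_b$'s array $\widehat{\Arr}_b$ was permuted by $\pi_b$, which was generated using fresh randomness on $\S_{b-1}$ and never revealed to $\S_b$. For each real lookup the client's access to $\S_b$ is at position $\pi_b(i)$, where $i$ is the (distinct) index in $\Arr'$ of the requested key; for each dummy lookup, the client's access to $\S_b$ is at $\pi_b(n-1+j)$ for the next unused $j$, obtained by chasing the dummy linked list. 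By the validity of the request sequence, the underlying indices $i$ (and $j$) are all distinct, so the tuple of accessed positions at $\S_b$ is the image under the secret permutation $\pi_b$ of a fixed, distinct set of $\ell$ indices; marginalizing over $\pi_b$, this is distributed exactly as a uniformly random $\ell$-tuple of distinct elements of $[2n]$, independent of everything in $\view^b$ so far. The simulator can therefore sample it from this distribution. Combining the components yields a simulator whose output is identically distributed to $\view^b$, and together with correctness this establishes the lemma.
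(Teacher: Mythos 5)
Your proof is correct and follows essentially the same strategy as the paper's: a step-by-step decomposition of $\view^b$ for a fixed corrupt $\S_b$, relying on the perfect security of {\sf Permute}/{\sf Unpermute} for those steps, on fresh secret sharing for all intermediate data, and on the crucial observation that the $\ell$ lookup positions at $\S_b$ are the image under a permutation $\pi_b$ hidden from $\S_b$ of $\ell$ distinct indices, hence uniformly distributed. Your treatment of the lookup step is actually slightly more explicit than the paper's (you spell out the marginalization over $\pi_b$), but the argument is the same; the only minor imprecision is your characterization of $\S_b$'s inputs as ``only the share $\Arr_b$'' when in fact $\S_b$ also holds $\Arr_{b+1}$ as the permutation server for $\S_{b+1}$, though this does not affect the validity of the simulator.
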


\begin{proof}
We go through each subroutine and explain its purpose,
from which correctness follows;
moreover, we also argue why it is perfectly
oblivious. 
Fix some $b \in \zthree$, and we consider the $\view^b$ of corrupt $\S_b$.

\noindent \textbf{Build subroutine.}  We explain each step, and why
the corresponding $\view^b$ satisfies perfect obliviousness.

\begin{enumerate}
\item \emph{Initialize to add dummies.} 
This step appends $n$ dummies at the end of the abstract array.
The data access pattern depends only $n$.  The contents of the extra dummy entries
are secretly shared among the servers using independent randomness.

\item \emph{Generate permutations for $\OTM$.} 
The client generates an independent random permutation for each server.
In particular, the permutation $\pi_{b+1}$ received by $\S_b$ is independent and uniformly random.

\item \emph{Construct a dummy linked list.} 
In this step, a linked list is created for the dummy entries.
Using $\pi_{b+1}$, server~$\S_b$ observes a linear scan on its array
and creates a linked list that is meant for $\S_{b+1}$.
Similarly, the linked list for $\S_b$ is created by $\S_{b-1}$.

The links themselves are secretly shared and stored in the abstract $\link$,
and so are the head positions~$\oplus_b {\sf dpos}_b$.
Hence, the data seen by $\S_b$ appears like independent randomness.

\item \emph{Construct the key-position map $U$}. 
This step is just a linear scan on the abstract $\Arr'$ and
each permutation array $\pi_b$.
The resulting abstract array $U$ is also secretly shared.

\item \emph{Permute the lists along with the links.}
This steps uses the building block ${\sf Permute}$,
which is proved to be perfectly secure in Lemma~\ref{thm:permute}.

\item Finally, as guaranteed by ${\sf Permute}$,
the resulting arrays $\widehat{\Arr}$ and $\widehat{\link}$ are secretly shared 
using independent randomness.  Moreover, the abstract list $U$
is returned as required.  Hence, the ${\sf Build}$ subroutine 
is correct and perfectly oblivious, as required.

\end{enumerate}

\noindent \textbf{Lookup Subroutine.} By construction,
each call to ${\sf Lookup}$ is supplied with the correct position labels,
and hence, correctness is achieved.  We next argue
why the distribution of the $\view^b$ of corrupt $\S_b$ depends only on
the number $\ell$ of lookups.

Observe that the entries accessed in $\S_b$ are permuted randomly by $\pi_b$, which is unknown to
$\S_b$.  Since the request sequence is non-recurrent, each real key is requested at most once.
On the other hand, the dummy entries are singly linked, and each dummy entry is accessed also at most once.
Hence, the $\ell$ lookups correspond to $\ell$ distinct uniformly random accesses in $\S_b$'s corresponding arrays.

\noindent \textbf{Getall Subroutine.}  This subroutine calls the building block ${\sf Unpermute}$,
which is proved to be perfectly secure in Lemma~\ref{thm:unpermute}.

As mentioned before, all data stored on $\S_b$ is secretly shared using independent randomness.
Hence, the distribution of the overall $\view^b$ depends only on the number $\ell$ of lookups.

\end{proof}

\ignore{
The proof can be found in the supplemental
material, Section~\ref{proof:OTM}.
}


\section{3-Server ORAM with $O(\log^2 N)$ Simulation Overhead}
\label{sec:posoram}


\subsection{Position-Based ORAM}
\label{sec:position-based-oram}
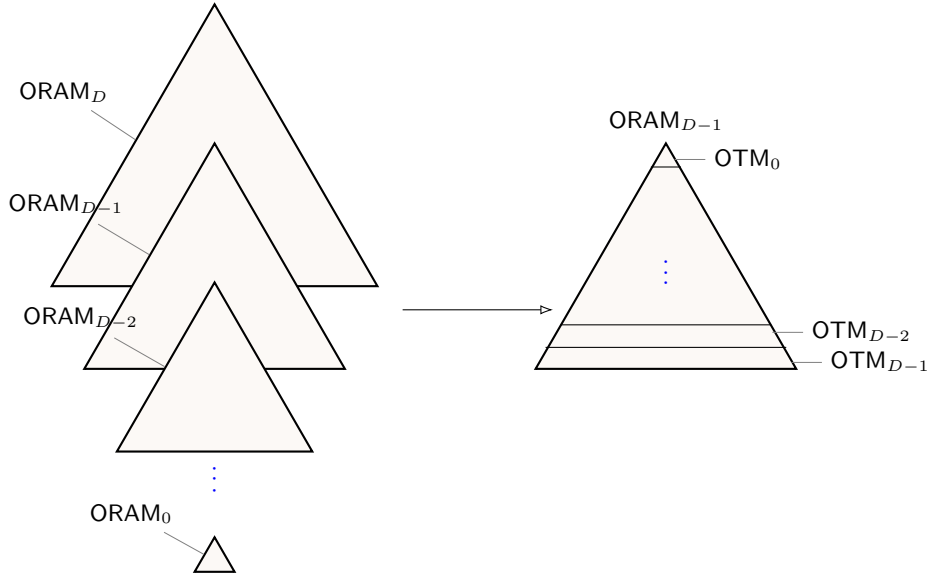
\begin{figure}[!tbp]

  \tikzset{
  triangle/.style = {draw=black, fill=brown!5, thick},
  squiggle/.style = {decoration={snake, segment length=5mm}, decorate}
}

\begin{tikzpicture}

   \centering
  \node[shape=regular polygon,regular polygon sides=3,draw=black, fill=brown!5, thick,
minimum size=5cm,anchor=south,pin=140:$\ORAM_D$] (A1) at (0,4.3) {};
\node[shape=regular polygon,regular polygon sides=3,draw=black, fill=brown!5, thick,
minimum size=4cm,anchor=south,pin=140:$\ORAM_{D-1}$] (A2) at (0,3.2) [above]{};
\node[shape=regular polygon,regular polygon sides=3,draw=black, fill=brown!5, thick,
minimum size=3cm,anchor=south,pin=140:$\ORAM_{D-2}$] (A3) at (0,2.1) [above]{};
\node[shape=regular polygon,regular polygon sides=3,draw=black, fill=brown!5, thick,
minimum size=0.5cm,anchor=south,pin=140:$\ORAM_0$] (A4) at (0,0.5) [above]{};
  

 \node[align=center] (n0)  at (0,2.2)  {$$};
 \node[align=center] (n1)  at (0,1.3)  {$$};
 \path (n0) -- (n1) node [blue, midway, sloped] {$\ldots$};

\draw[-{Latex[open]}] (2.5,4) -- ++(2,0);

  \node[shape=regular polygon,regular polygon sides=3,draw=black, fill=brown!5, thick,
minimum size=4cm,anchor=south,label=above:$\ORAM_{D-1}$] (B2) at (6,3.2) [above]{};
  
\draw (4.4,3.5) -- (7.6,3.5);
 \node[pin=0:$\OTM_{D-1}$] (a1)  at (7.55,3.3)  {$$};

\draw (4.6,3.8) -- (7.4,3.8);
 \node[pin=0:$\OTM_{D-2}$] (a2)  at (7.3,3.7)  {$$};

\draw (5.8,5.9) -- (6.2,5.9);
 \node[pin=0:$\OTM_0$] (a3)  at (6,6)  {$$};

 \node[align=center] (n2)  at (6,3.5)  {$$};
 \node[align=center] (n3)  at (6,5.5)  {$$};
 \path (n2) -- (n3) node [blue, midway, sloped] {$\ldots$};

\end{tikzpicture}

 \caption{ORAM Construction from Position-Based ORAM
 }\label{fig}
\end{figure}

Our ORAM scheme will consist of logarithmically many
position-based ORAMs of geometrically increasing sizes,
henceforth denoted $\ORAM_0$, $\ORAM_1$, $\ldots$, $\ORAM_D$ where
$D := \log_2 N$. See Figure~\ref{fig} for a high level sketch of our construction. 

 Specifically, $\ORAM_d$ stores
$\Theta(2^d)$ blocks where $d \in \{0, 1, \ldots, D\}$.
The actual data blocks are stored in $\ORAM_D$ whereas all
other $\ORAM_d, d < D$ recursively stores position labels
for the next depth $d+1$.

In this subsection, we focus on describing $\ORAM_d$ assuming the
position labels are magically available. In the next 
subsection, we will describe how position labels are maintained
across different depths.

\subsubsection{Data Structure.}
\label{sec:data-structure}

For $0 \leq d \leq D$
each $\ORAM_d$ consists of $d+1$ levels
of three-server one-time oblivious memory 
that are geometrically increasing in size. We denote these
one-time oblivious memories as $(\TOTM_j : j = 0, \ldots, d)$
where $\TOTM_j := \TOTM[2^j]$ stores at most $2^j$ real blocks.

Every level $j$ is marked as either \emph{empty} (when the
corresponding $\TOTM_j$ has not been built) or \emph{full}
(when $\TOTM_j$ is ready and in operation). Initially, all
levels are empty.

\paragraph{Position label.} 
To access a block stored in $\ORAM_d$,
its position label specifies
1) the level $l \in [0..d]$ such that the block resides in
$\TOTM_\ell$; and
2) the tuple $\pos := (\pos_0, \pos_1, \pos_2)$ to 
reconstruct the block from $\TOTM_\ell$.

\subsubsection{Operations}
\label{sec:operations}

Each position-based ORAM supports two operations, {\sf Lookup}
and {\sf Shuffle}.

\noindent\textbf{Protocol }\underline{\textbf{Lookup:}}

\begin{itemize}
\item \emph{Input:} The client provides $\big(\key, \pos := (l, (\pos_0, \pos_1, \pos_2))\big)$ as input,
where $\key$ is the logical address for the lookup request, $l$ represents the level
such that the block is stored in $\TOTM_l$, and $(\pos_0, \pos_1, \pos_2)$ is 
used as an argument for $\TOTM_l.{\sf Lookup}$.

The servers store $\TOTM_j$ for $0 \leq j \leq d$ where
$\TOTM$ stores layout $\big\{\pi_b, (\widehat{\Arr_b},
    \widehat{\link_b})\big\}_{b 
    \in \zthree}$ and $\dpos$ for the level. Moreover, some of the OTMs may be empty.

\item \emph{Algorithm:} The lookup operation proceeds as follows:

\begin{enumerate}
\item\noindent For each non-empty level $j = 0, \ldots, d$, perform the following:
  \begin{itemize}
  \item The position label specifies that the block is stored at level $\TOTM_l$.
    For level $j = l$, set $\key' := \key$ and $\pos' := (\pos_0, \pos_1, \pos_2)$. For all other levels, set $\key' := \bot$, $\pos' := \bot$.
    \item $v_j \gets \TOTM_j.{\sf Lookup}\Big(\big(\big\{\pi_b, (\widehat{\Arr_b},
    \widehat{\link_b})\big\}_{b 
    \in \zthree}, \dpos \big), (\key', \pos')\Big)$.
    
  \end{itemize}
  \item Return $v_l$.
\end{enumerate}
\end{itemize}

\begin{fact}
\label{fact:lookup_pos}
  For $\ORAM_d$, ${\sf Lookup}$ requires an $O(d)$ bandwidth.
\end{fact}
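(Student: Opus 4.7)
The plan is to bound the bandwidth of ${\sf Lookup}$ by decomposing it into its constituent $\TOTM_j.{\sf Lookup}$ calls and invoking the per-level bandwidth bound already established for one-time oblivious memory. First I would observe from the algorithm description that ${\sf Lookup}$ on $\ORAM_d$ consists of a single pass over the levels $j = 0, 1, \ldots, d$, and for each level it performs exactly one call to $\TOTM_j.{\sf Lookup}$, with the arguments chosen so that the real request is routed to the target level $l$ and dummy requests are issued at all other non-empty levels. Beyond these calls, the client's bookkeeping (selecting whether to pass $(\key,(\pos_0,\pos_1,\pos_2))$ or $(\bot,\bot)$, and returning $v_l$ at the end) is purely local $O(1)$ work per level and contributes no server communication.

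Next I would invoke the one-time oblivious memory lookup bound (the fact labeled \texttt{fact:otm-lookup}), which guarantees that each $\TOTM_j.{\sf Lookup}$ costs $O(1)$ bandwidth, \emph{independent} of the size $2^j$ of that level. This independence is the key point: although the levels grow geometrically in capacity, each lookup only reconstructs a single block from three secret shares via the position tuple, so the cost does not scale with~$j$. Summing over at most $d+1$ levels therefore yields total bandwidth
\[
\sum_{j=0}^{d} O(1) \;=\; O(d),
\]
as claimed.

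I do not anticipate a substantive obstacle here; the statement is essentially an accounting lemma that follows immediately from the per-level $\TOTM$ lookup cost and the fact that position-based ${\sf Lookup}$ visits each level a constant number of times. The only subtlety worth flagging explicitly in the writeup is that a dummy lookup at a non-target level still incurs $O(1)$ bandwidth (rather than zero), which is why we traverse all non-empty levels regardless of where the block actually resides; this uniform traversal is also what preserves obliviousness, but the bandwidth count is insensitive to it.
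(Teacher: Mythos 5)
Your proof is correct and matches the paper's (implicit) reasoning: the ${\sf Lookup}$ procedure for $\ORAM_d$ issues one $\TOTM_j.{\sf Lookup}$ call per non-empty level $j = 0, \ldots, d$, each of which costs $O(1)$ bandwidth by the one-time-memory lookup fact, and the remaining client-side selection logic incurs no server communication, so the total is $O(d)$. The paper does not spell out a proof for this fact, but your accounting argument is exactly the intended one.
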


\noindent\textbf{Protocol} \underline{\textbf{Shuffle.}} 
The shuffle operation is used in hierarchical ORAMs to shuffle
data blocks in consecutive smaller levels and place them in the
first empty level (or the largest level). 
Our shuffle operation, in addition, accepts another input $U$ that
is used to update the contents of data blocks stored in the position
based ORAM. In the final ORAM scheme, the list $U$ passed as an input to
$\ORAM_d$ will contain the (new) position labels of blocks in $\ORAM_{d+1}$.
Similarly, the shuffle operation returns an output $U'$ that will be
passed as input to $\ORAM_{d-1}$. \anti{we should have defined shuffle formally as we did with the rest, if there is time we should do it}
More formally, our shuffle operation can be specified as follows:



$(U',\widehat{\Arr} ) \leftarrow {\sf
  Shuffle}_d\big((\OTM_0,\ldots, \OTM_l, U), l \big)$:

\kartik{I modified the input, $\widehat{\Arr}$ as input did not
  make sense. The servers input OTMs!}
\begin{itemize}[leftmargin=5mm]
\item \emph{Input:} 
The shuffle operation for $\ORAM_d$ accepts as input from the
client a level $l$
in order to build $\OTM_l$ 
from data blocks currently in levels $0, \ldots, l$. In addition,
$\ORAM_d$ consists of an extra $\OTM$, denoted by $\OTM'_{0}$,
containing only a single element. Jumping ahead, this single
element represents a freshly fetched block.   

The inputs of the servers consist of $\OTM$s for levels up to
level $l$, each of which is
stored as a permuted layout $\{\pi_b, (\widehat{\Arr}_b, \widehat{\link_b})\}_{b \in
  \zthree}$ 
and an array of key-value pairs $U$, stored as a sorted,
unpermuted layout $\{\iperm, U_b\}_{b \in \zthree}$. The array
$U$ is used to update the blocks during the shuffle operation.  

Throughout the shuffle operation we maintain the following invariant:
\begin{itemize}[leftmargin=5mm]
\item For every $\ORAM_d$, $l \leq d$. Moreover,
either level $l$ is the smallest empty level of $\ORAM_d$ or $l$
is the largest level, i.e., $l = d$. 

\item Each logical address appears at most once in $U$.

\item The input $U$ contains a subset of logical addresses that appear in levels 
$0,\ldots,l$ of the $\ORAM_d$ (or $\OTM'_{0}$). 

Specifically, given a key-value pair $(\key,v)$, the corresponding block $(\key,v')$ \anti{in $U'$: does $v'$ is the value that will be in $U'$, it is confusing with the matrix A} should
already appear in some level in $[0..l]$ or $\OTM'_0$.  An ${\sf update}$ rule
will determine how $v$ and $v'$ are combined to produce a new
value $\widehat{v}$ for~$\key$.

\end{itemize}

\item The Shuffle algorithm proceeds as follows:

\begin{enumerate}[leftmargin=3mm]
\item \textbf{Retrieve key-value pairs from $(\TOTM_0,\ldots,\TOTM_l )$.} 
The client first retrieves the key-value pairs of real blocks from $(\TOTM_0,\ldots,\TOTM_l )$ and restore each array to its unpermuted form.
More specifically, the client constructs the unpermuted sorted
$\Arr^j \gets \TOTM_j.{\sf Getall}(\{\pi_b, (\widehat{\Arr}_b,
\widehat{\link_b})\}_{b \in 
  \zthree},\bot)$, for $0 \leq j \leq l$,\anti{it should be $<$
  instead of $\leq$ but then aren't we missing $\TOTM_0$}
and ${\Arr}^0 \gets \TOTM'_0.{\sf Getall}(\{\pi_b,
(\widehat{\Arr}_b, \widehat{\link_b})\}_{b \in
  \zthree},\bot)$\footnote{The layout inputs to the $\sf Getall$
  operation are restricted to the ones stored in $\TOTM_j$ for $0
  \leq j \leq l$, respectively.}\anti{what do you think about
  this footnote?}
Now, the 
old $\TOTM_0, \ldots, \TOTM_l$ instances can be destroyed.


\item \textbf{Create a list for level $l$.} The client then
  creates a level $l$ list 
of keys from $(\TOTM_0,\ldots,\TOTM_l )$.

\begin{itemize}[leftmargin=2mm]
\item
\emph{Merge lists from consecutive levels to form level $l$ list.} The merge procedure proceeds as follows: 
\begin{itemize}[leftmargin=3mm]

\item[] For $j = 0, \ldots, l-1$ do:

\qquad$\widehat{\Arr}^{j+1} \gets {\sf
  Merge}((\widehat{\Arr}^j,\Arr^j), \bot)$
where $\Arr^j$ and $\widehat{\Arr}^j$ are of size $2^j$

Moreover, the lists are individually sorted
but may contain blocks that have already been accessed.
In the ${\sf Merge}$ protocol, for two elements with the same key 
and belonging to different $\TOTM$ levels, we prefer the one at
the smaller level first.
For the case where $l = d$, perform another merge $\widehat{\Arr}^d \gets {\sf
Merge}((\widehat{\Arr}^d, {\Arr}^d), \bot)$
to produce an array of size $2^{d+1}$; Jumping ahead, the size will be reduced back to $2^d$ in subsequent steps.

\end{itemize}
At the end of this step, we obtain a merged sorted list $\widehat{\Arr}^l$, stored as $\widehat{\Arr}^l := \{\iperm, \widehat{\Arr}^l_b\}_{b \in \zthree}$, containing
duplicate keys that are stored multiple times (with potentially different values).

\item \emph{Mark duplicate keys as dummy.} 
From the stored duplicate keys, we only need the value of the one that 
corresponds to the latest access. All other duplicate entries can
be marked as dummies.
At a high level, this can be performed in a single pass by the client 
by scanning consecutive elements of the unpermuted sorted layout $\widehat{\Arr}^l$.
The client keeps the most recent version, i.e.,
the version that appears first (and has come from the smallest $\TOTM$),
and marks other versions as dummies.
To maintain obliviousness, the secret shares need to be re-distributed for each scanned
entry.

More specifically, suppose that there are $\lambda$ duplicate keys. Then, the client scans through the unpermuted layout $\widehat{\Arr}^l := \{\iperm, \widehat{\Arr}^l_b\}_{b \in \zthree}$.
For consecutive $\lambda$ elements, $j, \ldots, j+\lambda-1$ with the same
key, 
the client
re-distributes the secret for $\widehat{\Arr}^l[j]$ for position $j$,
and secretly writes $\bot$ for positions $j+1, \ldots, j+\lambda-1$.

After this step, the resulting (abstract) $\widehat{\Arr}^l$ is semi-sorted.

\item \emph{Compaction to remove dummies.}
The client invokes the {\sf StableCompact} protocol with input
$\widehat{\Arr}^l := \{\iperm, \widehat{\Arr}^l_b\}_{b \in
\zthree}$, i.e., $\widehat{\Arr}^l \leftarrow{\sf
StableCompact}(\widehat{\Arr}^l, \bot)$ to obtain a sorted, 
unpermuted layout (where the dummies are at the end). We keep the
first $2^l$ entries.
\end{itemize}

\item \textbf{Update $\widehat{\Arr}^l$ with values from $U$.}
  The client 
  updates $\widehat{\Arr}^l$ so that it contains updated position
  values 
from $U$.  Looking ahead, in our final scheme, $U$ will contain
the new position 
labels from an ORAM at a larger depth. 
Given that $\ORAM_D$ is the largest depth and does not store position
values, this step is skipped for $\ORAM_D$.

We do this as follows:
\begin{itemize}[leftmargin=3mm]
\item \emph{Merge $\widehat{\Arr}^l$ with $U$.} The client
  performs $A \gets {\sf Merge}((\widehat{\Arr}^l, U), \bot)$ to obtain a
  sorted, unpermuted layout. Ties on the same key break by
  choosing the blocks in $\widehat{\Arr}^l$.

\item \emph{Scan and Update $A$.} 
In a single pass through the
sorted, unpermuted layout $A$, it can operate on every adjacent
pair of entries. If they share the same key, 
the following {\sf update} rule is used to update both the
values (the precise {\sf update} rule is provided in the ${\sf Convert}$ subroutine in Section \ref{sec:oram-scheme}). In particular, in the final ORAM scheme, the keys in $A$ 
correspond to logical 
addresses. 
Each address in a position-based ORAM
at depth-$d$
stores 
position labels for two children addresses at depth-$(d+1)$.
The entries in $A$ that come from $\widehat{\Arr}^l$ contain the
old position labels for both children. For the entries 
from $U$, if children position labels exist, they correspond to
the new labels.
For each of the child addresses, if $U$ contains
a new position label, the {\sf update} function chooses the new
one; otherwise, it chooses 
the old label from $\widehat{\Arr}^l$.
\ignore{
Scan consecutive elements in the sorted, unpermuted layout $A$. 

If they share the same logical address, 
the {\sf update} rule is used to generate a new value.
The first entry is preserved
and updated to the new value, and the second entry is set to dummy.\anti{Kartik will update these two sentences}
}

\item \emph{Compaction to remove dummies.}
The client invokes the {\sf StableCompact} protocol $A
\leftarrow{\sf StableCompact}(A, \bot)$ to obtain
an updated sorted, unpermuted layout $A$. We keep the first $2^l$ entries. \end{itemize}

\item \textbf{Build $\TOTM_l$.} The client invokes $U' \leftarrow
  {\sf Build}(A, \bot)$ to generate a
 data structure $\TOTM_l$ and $U'$. Mark $\TOTM_l$
as \emph{full} and $\TOTM_i$, for $i < l$, as \emph{empty}.

\end{enumerate}
\end{itemize}

We now prove that the above position-based ORAM is correct and
satisfies perfect obliviousness 
in the presence of a
semi-honest adversary corrupting a single server. 
We also prove that an
  $\ORAM_d$, and $l \leq d$, ${\sf Shuffle}_d$ for level $l$
  requires a bandwidth of
  $O(2^l)$. We use the same
proof framework as \cite{perfectopram}.
However, instead of having only one server, we 
argue perfect obliviousness from an adversary that 
has access to only one of the three servers.

\begin{fact}[Correctness]
Our Position-based ORAM maintains correctness.
More specifically, at every recursion depth $d$, the correct position
labels will be input to the ${\sf Lookup}$ operations
of $\ORAM_d$; and every request
will return the correct answers.
\end{fact}
\begin{proof}
Straightforward by construction.
\end{proof}

For every $\ORAM_d$ at recursion depth $d$, 
the following invariant is respected by construction
as stated in the following facts. 

\begin{fact}[Lifetime of $\OTM$]
\label{fct:otmlife}
For every $\ORAM_d$, every $\OTM_i$ instance at level $i \leq d$
that is created needs to answer 
at most $2^i$ requests 
before $\OTM_i$ instance is destroyed. 
\end{fact}
\begin{proof}
For every $\ORAM_d$, the following is true:
imagine that there is a $(d+1)$-bit binary counter initialized to $0$ that increments
whenever a request comes in.
Now, for $0 \leq \ell < d$, 
whenever the $\ell$-th bit flips from $1$ to $0$, the $\ell$-th level 
of $\OTM_\ell$ is destroyed;
whenever the $\ell$-th bit flips from $0$ to $1$, the $\ell$-th level
of $\OTM_\ell$ is reconstructed. 
For the largest level $d$ of $\ORAM_d$, 
whenever the $d$-th (most significant)
bit of this binary counter flips from $0$ to $1$ or from $1$ to $0$, 
the $(d+1)$-st level is destroyed and reconstructed.
The fact follows in a straightforward manner by observing 
this binary-counter argument.
\end{proof}

\begin{fact}\label{fct:otmnonrecur}
For every $\ORAM_d$ and every $\OTM_\ell$ instance at level $\ell \leq d$, 
during the lifetime of the $\OTM_\ell$ instance:
(a) no two real requests 
will ask for the same key;
and (b) for every request that asks for a real $\key$,
a block with $\key$ must exist in $\OTM_i$.
\end{fact}
\begin{proof}
We first prove claim (a).
Observe that for any $\ORAM_d$, if the block with some 
$\key$
is fetched from 
some level $\ell \leq d$, 
at this moment,
that block
will either enter a smaller level $\ell' < \ell$;
or some level $\ell'' \geq \ell$ will be rebuilt and
the block with $\key$
will go into level $\ell''$ --- in the latter case,
level $\ell$ will be destroyed prior to the rebuilding of level $\ell''$. 
In either of the above cases, 
due to correctness of the construction, if the  
$\key$
is needed again from $\ORAM_d$, 
a correct position label will be 
provided for 
$\key$
such that the request 
will not go to level $\ell$ (until the level is reconstructed).
Finally, claim (b) follows from correctness of the position labels.
\end{proof}

Given the above facts, our construction maintains perfect obliviousness
as in~\cite{perfectopram}.  We emphasize that while the proof structure
is the same, our notion of perfect obliviousness is defined in terms
of the view of one server plus the communication pattern between the client and all the servers.

\begin{lemma}[Obliviousness]
\label{lemma:posoram_obliv}
The above position-based ORAM construction satisfies perfect obliviousness
in the presence of a
semi-honest adversary corrupting a single server.
\end{lemma}

\begin{proof}
We would like to point out the major difference from~\cite{perfectopram}.
Since we do not rely on cryptographic assumptions,
the data content is protected by secret-sharing among the three servers.

For every $\OTM$ instance
constructed during the lifetime of the ORAM,  
Facts~\ref{fct:otmlife}
and \ref{fct:otmnonrecur} are satisfied, and thus every 
one-time memory instance
receives a valid request sequence.
The lemma then follows 
in a straightforward fashion by the perfect obliviousness
of the $\OTM$ scheme (and the perfect security of the underlying building blocks such as ${\sf Merge}$), 
and by observing
that all other access patterns of the ORAM construction are
deterministic
and independent of the input requests.
\end{proof}

%
%
%

\begin{fact}
\label{fact:shuffle_pos}
  Suppose that the {\sf update} function can be evaluated in
  $O(1)$ time.
	Then, for an
  $\ORAM_d$, and $l \leq d$, ${\sf Shuffle}_d$ for level $l$
  requires a bandwidth of
  $O(2^l)$.
\end{fact}
\begin{proof}
  Observe that to obtain real key-value pairs from $\TOTM$'s and
  to construct a level $l$ list, the client invokes {\sf Getall}
  and {\sf Merge} protocols on (abstract) lists of sizes $2^0,
  \ldots, 2^l$, and {\sf StableCompact} on size $O(2^l)$. Since
  each of these steps require linear number of block operations,
	the total number of block operations for these
  steps is $\sum_{i = 1}^{l} (2^i) + O(2^l) = O(2^l)$.
  In order to update the list to contain values from $U$, we
  again invoke {\sf Merge}, {\sf StableCompact}, and an {\sf
    update}  on a list of
  size $O(2^l)$. This requires $O(2^l)$ block operations
  assuming the {\sf update} rule itself requires $O(1)$ operations
  per block.
  Finally, the call to ${\sf Build}$ $\OTM_l$ requires $O(2^l)$
  block operations.
\end{proof}


\subsection{ORAM Construction from Position-Based ORAM}
\label{sec:oram-scheme}

Our ORAM scheme consists of $D+1$ position-based ORAMs denoted as
$\ORAM_0$, $\ldots$, $\ORAM_D$ where
$D = \log_2 N$.
$\ORAM_D$ stores data blocks whereas $\ORAM_d$ for $d < D$ stores
a position map for $\ORAM_{d+1}$.
The previous section specified the construction of a position-based
ORAM. However, it assumed that position labels are magically available
at some $\ORAM_d$. In this section, we show a full ORAM scheme 
and specify 1)~how these position labels for $\ORAM_d$
are obtained from $\ORAM_{d-1}$, and 2)~after a level of $\ORAM_d$ is built, 
how the position labels of blocks from the new level
are updated at $\ORAM_{d-1}$.

\paragraph{Format of block address at depth $d$.}
Suppose that a block's logical 
address is a $\log_2 N$-bit string denoted by $\addrd{D} := \addr[1..(\log_2 N)]$ (expressed
in binary format), where $\addr[1]$ is the most significant bit.
In general, at depth~$d$, an address $\addrd{d}$ is the length-$d$ 
prefix of the full address $\addrd{D}$.
Henceforth, we refer to $\addrd{d}$ as a depth-$d$ address
(or the depth-$d$ truncation of $\addr$).

When we look up a data block, we would look up 
the full address $\addrd{D}$ in recursion depth $D$; 
we look up $\addrd{D-1}$
at depth $D-1$, 
$\addrd{D-2}$ at depth 
$D-2$, and so on. Finally at depth $0$, 
only one block is stored at $\ORAM_0$.

{\it A block with the address $\addrd{d}$
in $\ORAM_d$ 
stores the position labels for two 
blocks in $\ORAM_{d+1}$, 
at addresses $\addrd{d} || 0$ 
and 
$\addrd{d} || 1$  respectively.}
Henceforth, we say that the two addresses
$\addrd{d} || 0$ and $\addrd{d} || 1$ 
are {\it siblings} to each other; $\addrd{d} || 0$ 
is called the left sibling and 
$\addrd{d} || 1$ is called the right sibling.
We say that $\addrd{d} || 0$ 
is the left child of $\addrd{d}$ and 
$\addrd{d} || 1$ 
is the right child of $\addrd{d}$.

\subsubsection{An ORAM Lookup}
\label{sec:oram-lookup}

An ORAM lookup request is denoted as $(\op, \addr, \data)$ where
$\op \in \{\Read, \Write\}$. 
If  $\op = \Read$ then $\data := \bot$.
Here, $\addr$ denotes the address to lookup from
the ORAM.
The inputs are all provided by the client whereas the servers
store position-based $\ORAM_0, \ldots, \ORAM_D$ as discussed in
the previous section.
We perform the following operations:
\begin{enumerate}[leftmargin=3mm]
\item \textbf{Fetch.} For $d := 0$ to $D$, perform the following:
  \begin{itemize}[leftmargin=2mm]
  \item Let $\addrd{d}$ denote the depth-$d$ truncation of $\addrd{D}$.
  \item Call $\ORAM_d.{\sf Lookup}$ to lookup $\addrd{d}$. Recall that the position
    labels for the block will be obtained from the lookup of $\ORAM_{d-1}$.
    For $\ORAM_0$, no position label is needed.
		
	\item The block returned from ${\sf Lookup}$
	is placed in a special $\TOTM'_0$ in $\ORAM_d$. Jumping ahead, this will be merged
	with the rest of the data structure in the maintain phase.

  \item If $d < D$, each lookup will return two positions for addresses $\addrd{d}||0$
    and $\addrd{d}||1$. One of these will correspond to the position of $\addrd{d+1}$
    which will be required in the lookup for $\ORAM_{d+1}$.
  \item If $d = D$, the outcome of ${\sf Lookup}$ will contain the data block fetched.
  \end{itemize}

\item \textbf{Maintain.} We first consider depth $D$. Set depth-$D$'s update array
  $\Ud{D} := \emptyset$.
	
	Suppose $l^{D}$ is the smallest empty level in $\ORAM_D$.
  We have the invariant that for all $0 \leq d < D$, if $l^{D} < d$, then
  $l^{D}$ is also the smallest empty level in $\ORAM_d$.
  
  For $d := D$ to $0$, perform the following:
  \begin{enumerate}[leftmargin=2mm]
  \item If $d < l^{D}$, set $l := d$; otherwise, set $l := l^{D}$.
  \item Call $U \leftarrow \ORAM_d.{\sf Shuffle}((\OTM^d_0,
    \ldots, \OTM^d_l, \Ud{d}), l)$\anti{Shuffle should get as input the distributed layout}.
	
	Recall that to complete the description of ${\sf Shuffle}$,
			we need to specify the ${\sf update}$ rule
			that determines how to combine 
			the values of the same address that appears in both the current $\ORAM_d$
			and $\Ud{d}$.

		For $d  < D$, in $\Ud{d}$ and $\ORAM_{d}$, 
    each depth-$d$ logical address $\addrd{d}$ stores the  
    position labels for both children addresses $\addrd{d}||0$ and $\addrd{d}||1$ (in depth $d+1$).
		For each of the child addresses, if $\Ud{d}$ contains
    a new position label, choose the new one; otherwise, choose
    the old label previously in $\ORAM_{d-1}$.

  \item If $d \geq 1$, we need to send the updated positions
    involved in $U$ to
    depth $d-1$.
    We use the 
    ${\sf Convert}$ subroutine (detailed description below)  to convert
    $U$ into an update array for depth-$(d-1)$ addresses, where
    each entry may pack  
    the position labels for up to two sibling depth-$d$ addresses.
    
    Set $\Ud{d-1} \leftarrow {\sf Convert}(U, d)$,
    which will be used
    in the next iteration for recursion depth  $d-1$
    to perform its shuffle. 
  \end{enumerate}
\end{enumerate}

\subsubsection{The ${\sf Convert}$ subroutine.}
\label{sec:convertsubroutine}

$U$ is a sorted, unpermuted layout
representing the abstract array
$\{(\addrsd{i}{d}, \pos_i) : i \in [|U|]\}$.
The subroutine ${\sf Convert}(U, d)$ proceeds as follows.

For $i := 0$ to $|U|$, the client reconstructs
$(\addrsd{i-1}{d}, \pos_{i-1}), (\addrsd{i}{d}, \pos_{i})$ and
$(\addrsd{i+1}{d}, \pos_{i+1})$, computes $u'_i$ using the rules
below and 
secretly writes $u'_{i}$ to $U^{d-1}$.
\begin{itemize}[leftmargin=5mm]
\item 
If $\addrsd{i}{d} = \addr || 0$ 
and $\addrsd{i+1}{d} = \addr || 1$ for some $\addr$, i.e., if my right neighbor
is my sibling, then write down
$u'_i := (\addr, (\pos_i, \pos_{i+1}))$,
i.e., both siblings' positions need to be updated.

\item 
If $\addrsd{i-1}{d} = \addr || 0$ 
and $\addrsd{i}{d} = \addr || 1$ for some $\addr$, i.e., if my left neighbor
is my sibling, then write down
$u'_i := \bot$.

\item 
Else if $i$ does not have a neighboring sibling, 
parse $\addrsd{i}{d} = \addr || b$ for some $b \in \{0, 1\}$,
then write down 
$u'_i := (\addr, (\pos_i, *))$
if $b = 0$ or 
write down $u'_i := (\addr, (*, \pos_i))$ if $b = 1$.
In these cases, only the position of one of the siblings needs to be updated
in $\ORAM_{d-1}$.

\item 
Let $\Ud{d-1} := \{u'_i : i \in [|U|]\}$. 
Note here  that each entry of $\Ud{d-1}$ contains a depth-$(d-1)$
address of the form $\addr$,  as well as 
the update instructions for
two position labels of the depth-$d$ addresses $\addr||0$ 
and $\addr||1$
respectively. 

We emphasize that when $*$ appears, this means that the position of
the corresponding depth-$d$ address does not need to be updated in $\ORAM_{d-1}$.

\item Output $\Ud{d-1}$.
\end{itemize}

\begin{lemma}\label{final}
The above ORAM scheme is perfectly oblivious in the presence of a
semi-honest adversary corrupting a single server.
\end{lemma}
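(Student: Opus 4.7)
The plan is to reduce perfect obliviousness of the full ORAM to the already established perfect obliviousness of each position-based $\ORAM_d$ (Lemma~\ref{lemma:posoram_obliv}) and of the underlying building blocks (${\sf Permute}$, ${\sf Unpermute}$, ${\sf Merge}$, ${\sf StableCompact}$, and the $\TOTM$ subroutines). Fix a corrupt server $\S_b$ for some $b \in \zthree$. We must show that for any two request sequences ${\bf R}_0, {\bf R}_1$ of equal length, $\view^b({\bf R}_0)$ and $\view^b({\bf R}_1)$ are identically distributed; our approach is to exhibit a simulator whose output depends only on the sequence length.

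First, I would argue that the outer control flow of the scheme is deterministic and depends only on the length of the request sequence: each logical request triggers a Fetch phase that calls ${\ORAM_d}.{\sf Lookup}$ exactly once at every depth $d = 0, \ldots, D$, followed by a Maintain phase that, at every depth, calls ${\sf Shuffle}_d$ on a level $l$ determined by the binary-counter schedule of Fact~\ref{fct:otmlife}. In particular, the identities of the depths, the levels touched at each depth, the sizes of the arrays passed to ${\sf Merge}$, ${\sf StableCompact}$, ${\sf Build}$, and ${\sf Convert}$, and the communication pattern they induce are all fixed functions of $N$ and the request index, independent of the actual addresses or data.

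Next, I would handle the data and index contents through a hybrid argument across the depths and across the subroutine calls. The key invariants, already packaged by Facts~\ref{fct:otmlife} and~\ref{fct:otmnonrecur}, guarantee that every $\TOTM$ instance created inside any $\ORAM_d$ sees a valid (non-recurrent, correctly position-labeled) request sequence during its lifetime. Therefore Lemma~\ref{lemma:posoram_obliv} applies at each depth: given only the length of the interaction, we can simulate $\S_b$'s view of each $\ORAM_d$ call. For the auxiliary logic outside the $\TOTM$'s -- namely the ${\sf Merge}$/${\sf StableCompact}$/${\sf Permute}$/${\sf Unpermute}$ calls inside ${\sf Shuffle}$, and the ${\sf Convert}$ scan -- Theorems~\ref{thm:permute},~\ref{thm:unpermute},~\ref{thm:compaction},~\ref{thm:merge} show that $\S_b$'s view in each call is a deterministic function of its input and output shares, and every intermediate value touched (positions $\pos_i$, update tuples $u'_i$, merged arrays $A$, $\widehat{\Arr}^l$, the lists $\Ud{d}$ propagated between depths) is written back via \emph{secret writes} with fresh independent randomness. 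Thus the shares seen by $\S_b$ at every intermediate step are uniformly random conditioned on the previous view. I would walk through one hybrid per subroutine invocation, replacing each real call by its simulator and each secretly-written value by uniformly random shares, and observe that the distribution does not change.

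The main obstacle, and the part that deserves most care, is the information flow \emph{between} depths via the arrays $U$ and $\Ud{d}$: these contain newly updated position labels that genuinely depend on the request sequence, and they are consumed by the Shuffle at depth $d-1$. The argument here is that $\Ud{d}$ lives only in the form of a $3$-out-of-$3$ secret-shared unpermuted layout, and every operation that touches it ($\S_b$'s share of ${\sf Merge}$, of the scan-and-update step, of ${\sf StableCompact}$, and of ${\sf Build}$) has already been shown to leak nothing beyond lengths to a single corrupt server. Since ${\sf Convert}$ itself is a deterministic linear scan whose only output is again a secretly written array, the new position labels never appear in the clear in $\S_b$'s view. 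Composing all the hybrids, the end-to-end view of $\S_b$ is a function of $N$ and the sequence length alone, so $\view^b({\bf R}_0) \equiv \view^b({\bf R}_1)$, which together with Lemma~\ref{lemma:posoram_obliv}'s correctness gives perfect obliviousness of the full scheme.
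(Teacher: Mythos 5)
Your proof takes essentially the same route as the paper's: decompose the full ORAM's view into the views of the position-based $\ORAM_d$'s at every recursion depth plus the deterministic-access-pattern, secret-shared data passing between depths, then invoke Lemma~\ref{lemma:posoram_obliv} and the perfect security of the building blocks. The paper's version is terser (it does not spell out the hybrid structure or call out the $U/\Ud{d}$ flow explicitly), but your elaboration fills in exactly the details the paper leaves implicit.
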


\begin{proof}
Observe that we adopt the framework
of building a recursive ORAM, where each depth
is a position-based ORAM.  Such a framework
has been used for 
perfectly secure ORAM with one-server~\cite{perfectopram} assuming perfect secure cryptography.

As mentioned in Lemma~\ref{lemma:posoram_obliv},
the major difference here is that the data is secretly shared between
three servers, thereby eliminating the need of perfectly secure cryptography.
In a fashion similar to~\cite{perfectopram},
the security of our recursive ORAM derives its security from the underlying position-based ORAMs,
because the overall view of the adversary is composed of the views corresponding
to the position-based ORAMs in all the recursion depths,
plus deterministic access pattern to pass data (which is secretly shared by independent randomness)
between successive recursion depths.
Since our position-based ORAM in each recursion depth is perfectly oblivious,
so is our overall recursive ORAM.
\end{proof}
\ignore{
The proof can be found in the supplemental
material, Section~\ref{sec:final}.
}

\begin{fact}
Each ORAM access takes an amortized bandwidth blowup of $O(\log^2 N)$.
\end{fact}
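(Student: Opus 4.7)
The plan is to charge the cost of each ORAM access to two phases separately -- the Fetch phase and the Maintain phase -- and show that each contributes $O(\log^2 N)$ amortized bandwidth.

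For the Fetch phase, I would argue directly (not amortized): the access performs one $\ORAM_d.{\sf Lookup}$ for each depth $d \in \{0, 1, \ldots, D\}$, where $D = \log_2 N$. By Fact~\ref{fact:lookup_pos}, a lookup at $\ORAM_d$ scans one $\TOTM_j$ for each of the at most $d+1$ levels, costing $O(d)$ bandwidth. Summing over all depths gives $\sum_{d=0}^{D} O(d) = O(D^2) = O(\log^2 N)$.

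For the Maintain phase, I would use the standard binary-counter charging argument (the same one underlying Fact~\ref{fct:otmlife}). Fix a depth $d$. Level $\ell$ of $\ORAM_d$ is rebuilt exactly once every $2^\ell$ accesses for $\ell < d$, and whenever the top level is rebuilt it is also rebuilt at most once per $2^d$ accesses. By Fact~\ref{fact:shuffle_pos}, a ${\sf Shuffle}_d$ that rebuilds level $\ell$ costs $O(2^\ell)$ bandwidth, provided that the ${\sf update}$ rule runs in $O(1)$ time per block (which holds since the rule only combines position-label pairs entry-wise, and the ${\sf Convert}$ subroutine invoked between consecutive depths is a single linear scan over $U$ and thus also $O(|U|) = O(2^\ell)$). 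Therefore the amortized cost per access for rebuilds at level $\ell$ of $\ORAM_d$ is $O(2^\ell)/2^\ell = O(1)$. Summing over the $d+1$ levels of $\ORAM_d$ gives $O(d)$ amortized bandwidth at depth $d$, and summing over all depths yields $\sum_{d=0}^{D} O(d) = O(\log^2 N)$.

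The only mildly nontrivial step is justifying that the per-block work of the ${\sf update}$ rule and the ${\sf Convert}$ subroutine are $O(1)$, so that the $O(2^\ell)$ bound from Fact~\ref{fact:shuffle_pos} applies cleanly; this follows because at each depth $d < D$ an entry of $U$ is a constant-size record (a depth-$d$ address together with one or two position labels for its depth-$(d+1)$ children), and the rules in Section~\ref{sec:convertsubroutine} combine only a constant number of adjacent entries at a time. Adding the Fetch and Maintain contributions, the overall amortized bandwidth blowup per ORAM access is $O(\log^2 N)$, as claimed.
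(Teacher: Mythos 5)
Your proof is correct and follows essentially the same route as the paper: decompose into Fetch and Maintain, use Fact~\ref{fact:lookup_pos} for the $O(d)$ Fetch cost at depth $d$, use Fact~\ref{fact:shuffle_pos} together with the binary-counter rebuild schedule to get $O(1)$ amortized per level and hence $O(d)$ amortized Maintain cost at depth $d$, then sum over the $O(\log N)$ depths. You also explicitly verify the $O(1)$-per-block hypothesis on the ${\sf update}$ rule and ${\sf Convert}$, which the paper leaves implicit; this is a welcome but minor addition.
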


\begin{proof}
There are $D = O(\log N)$ position-based ORAMs in the construction.
For each $0 \leq d \leq D$, we analyze the
number of block operations associated with $\ORAM_d$.
From Fact~\ref{fact:lookup_pos}, the bandwidth required by the
fetch phase is $O(d)$.
From Fact~\ref{fact:shuffle_pos},
building the $\OTM$ at level $\ell$ takes a bandwidth of $O(2^\ell)$.
However, since an level-$\ell$ is rebuilt every $\Theta(2^\ell)$ ORAM accesses,
the amortized bandwidth to rebuild each level is $O(1)$.
Hence, the amortized bandwidth for the maintain phase is $O(d)$.
Therefore, the total amortized bandwidth per ORAM access is
$\sum_{d=1}^D d = O(\log^2 N)$, which is also the bandwidth blowup.
\end{proof}

Summarizing the above, we arrive at the following main theorem:

\begin{theorem}[Perfectly secure 3-server ORAM]
There exists a 3-server ORAM 
scheme that satisfies perfect correctness and 
perfect security 
against a semi-honest adversary in control of any single server,
and achieves $O(\log^2 N)$ amortized bandwidth blowup (where $N$
denotes the total number of logical blocks). 
\kartik{does this correspond to definition 2.3?}
\end{theorem}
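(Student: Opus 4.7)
My plan is to combine the three component results established in the preceding development: perfect correctness of the position-based ORAM together with the recursive address/position propagation, perfect obliviousness as captured by Lemma~\ref{final}, and the amortized bandwidth analysis in the Fact just above. Concretely, the argument has three separate threads (correctness, security, complexity), each of which reduces to building blocks already proved.

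For \emph{perfect correctness}, I would argue by induction on the recursion depth $d$. The base case is $\ORAM_0$, which stores a single block and needs no position label. For the inductive step, assume that every access to $\ORAM_{d-1}$ returns the correct (and up-to-date) position tuple for the depth-$d$ address being queried. Then the ${\sf Lookup}$ protocol for $\ORAM_d$ is invoked with a valid request for $\OTM_l$ (the unique level currently holding $\addrd{d}$), which by the $\TOTM$ correctness guarantee returns the correct value with probability~$1$; in particular, for $d<D$ this value contains the position labels for both siblings $\addrd{d}\|0$ and $\addrd{d}\|1$, and for $d=D$ it contains the requested data block. The inductive hypothesis about freshness is maintained by the ${\sf Shuffle}$ step: the array $\Ud{d}$ passed up from depth~$d+1$ (obtained via ${\sf Convert}$) carries exactly the new position labels produced by the rebuild at depth $d+1$, and the {\sf update} rule inside ${\sf Shuffle}_d$ prefers these fresh labels over the stale ones, so after the rebuild $\ORAM_{d-1}$ reflects the correct pointers into the newly built level of $\ORAM_d$. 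Facts~\ref{fct:otmlife} and~\ref{fct:otmnonrecur} ensure that every $\OTM$ sees a valid (non-recurrent, within-capacity) request sequence, which is exactly the precondition $\TOTM$'s correctness needs.

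For \emph{perfect security against any single semi-honest corrupt server}, the claim follows directly from Lemma~\ref{final}, but I would spell out the reduction chain to make it self-contained. The view of any one corrupt server decomposes into the concatenation of its views inside each of the $D+1$ position-based ORAMs, plus the deterministic inter-depth communication pattern used to pass $\Ud{d}$ from depth $d$ to depth $d-1$ and the freshly secret-shared contents of those arrays. By Lemma~\ref{lemma:posoram_obliv}, each per-depth view has a distribution that is fully determined by the sequence of $\OTM$ builds/lookups/getalls, and by Lemma~\ref{lemma:44} (together with Theorems~\ref{thm:permute},~\ref{thm:unpermute},~\ref{thm:compaction}, \ref{thm:merge}) each such $\OTM$ view is perfectly simulatable from the $\OTM$ parameters alone. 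Since the schedule of rebuilds follows the deterministic binary-counter rule of Fact~\ref{fct:otmlife} and therefore depends only on the number of requests and not on their contents, and since all data passed across depths is freshly XOR-secret-shared, a simulator can produce an identically distributed transcript given only the length of the request sequence. Thus for any two request sequences ${\bf R}_0,{\bf R}_1$ of equal length, $\view^\adv({\bf R}_0)$ and $\view^\adv({\bf R}_1)$ coincide in distribution, matching Definition~\ref{secdef}.

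For the \emph{$O(\log^2 N)$ amortized bandwidth blowup}, I would simply cite the preceding Fact: within each $\ORAM_d$, a fetch costs $O(d)$ by Fact~\ref{fact:lookup_pos}, and by Fact~\ref{fact:shuffle_pos} a shuffle rebuilding level $\ell$ costs $O(2^\ell)$ but is triggered only once every $\Theta(2^\ell)$ requests, so the amortized maintenance cost at depth $d$ is $\sum_{\ell=0}^{d} O(1) = O(d)$. Summing fetch and maintain costs across $d=0,\ldots,D$ gives $\sum_{d=0}^{D} O(d) = O(\log^2 N)$, and since the client storage is $O(1)$ and servers do no computation, bandwidth blowup coincides with overall runtime. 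I expect the main subtle point, and the only place where something beyond routine accounting happens, to be verifying that the interleaving of ${\sf Convert}$ outputs across depths indeed keeps each $\OTM$ invocation within its valid request-sequence envelope — but this is exactly what Facts~\ref{fct:otmlife}--\ref{fct:otmnonrecur} already establish, so the final theorem follows by assembling the three threads above.
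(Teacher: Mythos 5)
Your proposal is correct and follows essentially the same path as the paper: the theorem is obtained by assembling Lemma~\ref{final} for obliviousness, the amortized-cost Fact for bandwidth, and the correctness Fact, exactly as you decompose it. The only difference is that you spell out the inductive freshness argument for correctness (which the paper compresses into a one-line "straightforward by construction"), and you unfold the lemma dependencies underneath Lemma~\ref{final}; neither changes the substance of the argument.
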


Finally,  similar to existing works
that rely on the recursion technique~\cite{asiacrypt11,PathORAM},  
we can achieve better bandwidth blowup with larger block sizes:
suppose 
each data block is at least $\Omega(\log^2 N)$
in size, and we still set 
the position map blocks to be $O(\log N)$ bits long, 
then our scheme achieves $O(\log N)$
 bandwidth blowup.

\ignore{
\begin{corollary}
For block sizes $\Omega(\log^2 N)$, each ORAM access takes an
amortized $O(\log N)$ block operations.
\end{corollary}
}


\section*{Acknowledgments}
T-H. Hubert Chan was supported in part by the Hong Kong
RGC under grant~17200418. Jonathan Katz was supported
in part by NSF award~\#1563722. Kartik Nayak was
supported by a Google Ph.D.\ fellowship. Antigoni Polychroniadou
was supported by the Junior Simons Fellowship. Elaine Shi was
supported in part by NSF 
award~CNS-1601879, 
a Packard Fellowship, and
a DARPA Safeware grant (subcontractor under IBM).


\bibliographystyle{abbrv}
\bibliography{cache_obliv,refs,crypto,ref,another,hardware}

\newpage

\ignore{
\chapter*{Supplemental Materials}\label{supmaterial}

\section{Building Block: Unpermute}
\label{appendix:unpermute}
\subsubsection{Definition of $\sf Unpermute$.}
${\sf Unpermute}$ is a protocol that realizes an ideal functionality 
$\mcal{F}_{\rm unperm}$ 
as defined below.
Intuitively, this functionality 
reverses the effect of $\mcal{F}_{\rm perm}$.
It takes some permuted input layout,
and returns the corresponding unpermuted layout.
However, to avoid each server knowing its original permutation,
the contents of each entry 
needs to be secret-shared again.

\begin{itemize}[leftmargin=5mm]

\item $\{\iperm, \Arr'_b\}_{b \in \zthree}\leftarrow{\sf Unpermute (\{\pi_{b},\Arr_b\}_{b \in \zthree},\bot)}$:

\begin{itemize}[leftmargin=5mm]
\item 
{\it Input}: Let $\{\pi_b, \Arr_b\}_{b \in \zthree}$
 be the layout provided as input. (Recall that $\Arr_b$ and $(\pi_{b+1}, \Arr_{b+1})$ are
stored in server $\S_b$.) 


The arrays have the same length $|\Arr_0| = |\Arr_1| = |\Arr_2| = n$, for some $n$.
The client obtains $\bot$ as the input.

\item 
{\it Ideal functionality $\mcal{F}_{\rm unperm}$}:

Sample independently and uniformly random ${\Arr'}_0, {\Arr'}_1$ of length $n$.

Now, define $\Arr'_2 := \Arr'_0 \oplus \Arr'_1 \oplus (\oplus_{b} \pi_b^{-1}(\Arr_b))$,
i.e., $\oplus_b \Arr'_b  = \oplus_{b} \pi_b^{-1}(\Arr_b)$.


The output layout is 
$\{\iperm, \Arr'_b\}_{b \in \zthree}$,
and the client's 
output is $\bot$. 


\end{itemize}
\end{itemize}

\subsubsection{Protocol ${\sf Unpermute}$.}

The implementation of $\mcal{F}_{\rm unperm}$ proceeds as follows:
\begin{enumerate}[leftmargin=5mm]


\item \emph{Compute inverse permutations.} For each $b \in
  \zthree$, the client computes the inverse permutation
 $\widehat{\Arr}_{b+1} := \pi^{-1}_{b+1}(\Arr_{b+1})$ on server $\S_{b}$.

\item \emph{Mask shares.} For each data block, the client
  generates
block ``masks'' that sum up to \emph{zero} and then applies the
mask to $\Arr_{b+1}$ on server $\S_b$. 
Specifically, the client performs the following, for each $i \in
[n]$: 
\begin{compactitem}
\item Generate block ``masks'' that sum up to zero, i.e., Sample independent random blocks $\mathsf{B}^i_0$
and $\mathsf{B}^i_1$, and compute $\mathsf{B}^i_2 := \mathsf{B}^i_0 \oplus \mathsf{B}^i_1$.

\item Apply mask $\mathsf{B}^i_{b+1}$ to $\Arr_{b+1}[i]$ stored
  on server $\S_b$, i.e., for each $i \in [b]$, the client 
writes $\Arr'_{b+1}[i] \gets \widehat{\Arr}_{b+1}[i] \oplus \mathsf{B}^i_{b+1}$
on server $\S_b$.


\end{compactitem}

\item For each $b \in \zthree$, the server $\S_b$
sends $\Arr'_{b+1}$
to $\S_{b+1}$.

Hence, the new layout $\{\iperm, \Arr'_b\}_{b \in \zthree}$
is achieved.

\end{enumerate}

\ignore{
  \begin{lemma}[Perfectly Secure Implementation of $\mcal{F}_{\rm unperm}$]
\label{lemma:sec_unperm}
The above implementation of $\mcal{F}_{\rm unperm}$
is perfectly secure as per Definition~\ref{secdef}.
\end{lemma}
}

\begin{theorem}\label{thm:unpermute}
The ${\sf Unpermute}$ protocol perfectly securely realizes the
ideal functionality $\mcal{F}_{\rm unperm}$ in the presence of a
semi-honest adversary corrupting a single server with  $O(n)$
bandwidth blowup.

\end{theorem}

\begin{proof}
The proof is essentially the same as Theorem~\ref{thm:permute}.
It can be checked that
for each $b \in \zthree$,
the $\view^b$ is a deterministic function
of the inputs and the outputs of $\S_b$.
\end{proof}

\ignore{
\begin{fact}
  The ${\sf Unpermute}$ protocol
	takes $O(n)$ runtime,
	where each block has size $\Omega(\log n)$ bits.
\end{fact}
\begin{proof}
  Similar to the proof for Fact~\ref{fact:permute}.
\end{proof}
}

\section{Proofs}
\label{appendix:proofs}

\subsection{Proof of Theorem~\ref{thm:permute}}\label{ap:permute}

\ignore{\begin{lemma}[Perfectly Secure Implementation of $\mcal{F}_{\rm perm}$]
\label{lemma:sec_perm}
The above implementation of $\mcal{F}_{\rm perm}$
is perfectly secure as per Definition~\ref{secdef}.
\end{lemma}}

\begin{proof}
By construction, the implementation of the protocol
applies the correct permutation on each server's array
and re-distributes the secret shares using fresh independent randomness.
Hence, the marginal distribution of the protocol's outputs
is exactly the same as that of the ideal functionality.

Fix some $b \in \zthree$
and consider the $\view^b$ of the corrupt server $\S_b$.
Since the leakage of $\mcal{F}_{\rm perm}$ is empty,
we will in fact show that
given the inputs and the outputs to server $\S_b$,
the $\view^b$ is totally determined and has no more randomness.
Hence, given $\Inp_b$ and conditioning on $\Out_b$,
the $\view^b$ is trivially independent of the outputs
of the client and other servers.

Then, given the inputs $\Inp_b$ and the outputs $\Out_b$ to
$\S_b$, a simulator simply returns 
the view of $\S_b$ uniquely determined by $\Inp_b$ and $\Out_b$.


The inputs to $\S_b$ 
are the arrays $\Arr_b$ and $\Arr_{b+1}$,
and the permutation $\pi_{b+1}$.
The outputs are the arrays $\Arr'_b$ and $\Arr'_{b+1}$,
and also the permutation $\pi_{b+1}$.
We next consider each part of $\view^b$.

\begin{enumerate}

\item \textbf{Communication Pattern.}  The communication pattern between the client and all the servers only depends on~$n$.

\item \textbf{Data Structure.}  We next analyze the intermediate data that is observed by
$\S_b$.  The arrays $\Arr'_b$ and $\Arr'_{b+1}$ are in $\S_b$'s outputs.
Hence, it suffices to consider the intermediate
array $\widehat{\Arr}_{b+1} = \pi_{b+1}^{-1}(\Arr'_b)$,
which is totally determined by the outputs.
%
\end{enumerate}

We have shown that the $\view^b$
is actually a deterministic function of the inputs and the outputs of $\S_b$,
as required.
\paragraph{Efficiency. }   Recall that it takes $O(1)$ time to process one block.
	From the construction, it is straightforward that
	linear scans are performed on the relevant arrays.
			
	In particular, the two steps -- masking shares and permuting the
  shares -- can be done with $O(n)$ bandwidth. 
	Moreover, the client can generate a permutation on the
  server with $O(n)$ bandwidth, when each block has $\Omega(\log n)$ bits, using the
  Fisher-Yates algorithm~\cite{donald1998art}.

\end{proof}

\ignore{\begin{fact}
\label{fact:permute}
  The ${\sf Permute}$ protocol takes $O(n)$ runtime,
	where each block has size $\Omega(\log n)$ bits.
	
\end{fact}
\begin{proof}
  Recall that it takes $O(1)$ runtime to process one block.
	From the construction, it is straightforward that
	linear scans are performed on the relevant arrays.
			
	In particular, the two steps -- masking shares and permuting the
  shares -- can be done in $O(n)$ time. 
	Moreover, the client can generate a permutation on the
  server with $O(n)$ runtime, when each block has $\Omega(\log n)$ bits, using the
  Fisher-Yates algorithm~\cite{donald1998art}.
	\end{proof}}

\subsection{Proof of Theorem~\ref{thm:compaction}.}\label{ap:compaction}

\begin{proof}
By construction, the protocol correctly removes dummy elements and preserves
the original order of real elements, where the secret shares are re-distributed
using independent randomness.  Hence, the marginal distribution on the outputs
is the same for both the protocol and the ideal functionality.

We fix the inputs of all servers, and some $b \in \zthree$. 
The goal is to show that (1) the $\view^b$ follows a distribution
that is totally determined by the inputs $\mathbf{I}_b$ and the outputs $\mathbf{O}_b$ of the corrupt $\S_b$;
(2) conditioning on $\Out_b$,
$\view^b$ is independent of the outputs of the client and all other servers.

The second part is easy, because the inputs are fixed.
Hence, conditioning on $\Out_b$ (which includes $\Arr''_b$ and $\Arr''_{b+1}$),  
$\Arr''_{b+2}$ has no more randomness and totally determines the outputs of other servers.
%
%

To prove the first part, our strategy is to decompose $\view^b$ into a list components,
and show that fixing $\mathbf{I}_b$ and  conditioning on  $\mathbf{O}_b$ and a prefix 
of the components, the distribution of the next component can be determined.
Hence, this also gives the definition of a simulator.

First, observe that in the last step, the client
re-distributes the shares, and gives
output~$\mathbf{O}_b$ to $\S_b$;
moreover, the shares of $\Arr''$ are generated with fresh independent
randomness.  Hence, the distribution of the part of $\view^b$ excluding $\mathbf{O}_b$ is independent of $\mathbf{O}_b$.
We consider the components of $\view^b$ in the following order,
which is also how a simulator generates a view after seeing $\Inp_b$.

\begin{enumerate}

\item \emph{Communication Pattern.}
Observe that from the description of the algorithm,
the communication pattern between the client and the servers
depends only on the length~$n$ of the input array.

\item \emph{Random permutation $\pi_{b+1}$.}  This is independently generated
using fresh randomness.

\item \emph{Link $\link$ creation.}  The (abstract) array $\link$
is created by reverse linear scan.  The shares $\link_b$ and $\link_{b+1}$
received by $\S_b$ are generated by fresh independent randomness.\anti{again need to refer to the privacy of the sharing for the linked lists}

\item \emph{$\mathsf{Permute}$ subroutine.}  By the correctness
of the $\mathsf{Permute}$, the shares of the outputs $(\Arr', \link')$ 
received by $\S_b$ follow an independent and uniform random distribution.
By the perfect security of 
$\mathsf{Permute}$, the component of $\view^b$ due to
$\mathsf{Permute}$ depends only on the inputs (which include the shares
of $\Arr$ and $\link$) and the outputs of the subroutine.

\item \emph{List traversal.} Since $\S_b$ does not know $\pi_b$ (which is generated
using independent randomness by $\S_{b-1}$),
from $\S_b$'s point of view,
its array $\Arr'_b$ is traversed in an independent and uniform random order.
\end{enumerate}

Therefore, we have described a simulator procedure that
samples the $\view^b$ step-by-step, given $\mathbf{I}_b$
and $\mathbf{O}_b$.

\paragraph{Efficiency. }

  Each of the steps in the protocol can be executed with a
  bandwidth of
  $O(n)$. 
	Step~\ref{compact:step1} can be performed using
  Fisher-Yates shuffle algorithm. In steps~\ref{compact:step2} and
  \ref{compact:step4}, the client linearly scans the abstract
  lists $\Arr, \Arr''$ and the links $\link, \link'$. Accessing
  each array costs $O(n)$ bandwidth.
	Finally,
  step~\ref{compact:step3} invokes ${\sf permute}$, which
  requires $O(n)$ bandwidth
  (Theorem~\ref{thm:permute}).

\end{proof}

\ignore{\begin{fact}
  \label{fact:stable-compact}
  The ${\sf StableCompact}$ protocol
  has a runtime $O(n)$. 
	.
\end{fact}}

\subsection{Proof of Theorem~\ref{thm:merge}}
\label{proof:merge}
\begin{proof}
We follow the same strategy as in Theorem~\ref{thm:compaction}.
Again, from the construction, the protocol performs merging correctly and re-distributes secretes
using independent randomness.  Hence, the marginal distribution of
the outputs is the same for both the protocol and the ideal functionality.

We fix the inputs of all servers, and some $b \in \zthree$. 
Recall that the goal is to show that (1) the $\view^b$ follows a distribution
that is totally determined by the inputs $\mathbf{I}_b$ and the outputs $\mathbf{O}_b$ of $\S_b$;
(2) conditioning on $\Out_b$,
$\view^b$ is independent of the outputs of the client and all other servers.

The second part is easy, because the inputs are fixed.
Hence, conditioning on $\Out_b$ (which includes $\U''_b$ and $\U''_{b+1}$),  
$\U''_{b+2}$ has no more randomness and totally determines the outputs of other servers.
%
%

To prove the first part, our strategy is to decompose $\view^b$ into a list components,
and show that fixing $\mathbf{I}_b$ and  conditioning on  $\mathbf{O}_b$ and a prefix 
of the components, the distribution of the next component can be determined.
Hence, this also gives the definition of a simulator.

First, observe that in the last step, the client
re-distributes the shares of $\U''$, and gives
output~$\mathbf{O}_b$ (including $\U''_b$ and $\U''_{b+1}$) to $\S_b$;
moreover, the shares of $\U''$ are generated with fresh independent
randomness.  Hence, the distribution of the part of $\view^b$ excluding $\mathbf{O}_b$ is independent of $\mathbf{O}_b$.
We consider the components of $\view^b$ in the following order,
which is also how a simulator generates a view after seeing $\Inp_b$.

\begin{enumerate}

\item \emph{Communication Pattern.}
Observe that from the description of the algorithm,
the communication pattern between the client and the servers
depends only on the length~$n$ of the input arrays.

\item \emph{Random permutation $\pi_{b+1}$.}  This is independently generated
using fresh randomness.

\item \emph{Link $\link$ creation.}  The (abstract) array $\link$
is created by reverse linear scan.  The shares $\link_b$ and $\link_{b+1}$
received by $\S_b$ are generated by fresh independent randomness.

\item \emph{$\mathsf{Permute}$ subroutine.}  By the correctness
of the $\mathsf{Permute}$, the shares of the outputs $(\U', \link')$ 
received by $\S_b$ follow an independent and uniform random distribution.
By the perfect security of 
$\mathsf{Permute}$, the component of $\view^b$ due to
$\mathsf{Permute}$ depends only on the inputs (which include the shares
of $\Arr$ and $\link$) and the outputs of the subroutine.

\item \emph{List traversal.} Since $\S_b$ does not know $\pi_b$ (which is generated
using independent randomness by $\S_{b-1}$),
from $\S_b$'s point of view,
while the elements from the two underlying lists are being merged,
its array $\U'_b$ is traversed in an independent and uniform random order.
\end{enumerate}

Therefore, we have described a simulator procedure that
samples the $\view^b$ step-by-step, given $\mathbf{I}_b$
and $\mathbf{O}_b$.

\paragraph{Efficiency.} The analysis for the ${\sf Merge}$ protocol is similar to that for
  Theorem~\ref{thm:compaction} except that the operations are
  performed on lists of size $2n$ instead of $n$.

\end{proof}

\subsection{Proof of Lemma~\ref{lemma:44}}
\label{proof:OTM}
\begin{proof}
We go through each subroutine and explain its purpose,
from which correctness follows;
moreover, we also argue why it is perfectly
oblivious. 
Fix some $b \in \zthree$, and we consider the $\view^b$ of corrupt $\S_b$.

\noindent \textbf{Build subroutine.}  We explain each step, and why
the corresponding $\view^b$ satisfies perfect obliviousness.

\begin{enumerate}
\item \emph{Initialize to add dummies.} 
This step appends $n$ dummies at the end of the abstract array.
The data access pattern depends only $n$.  The contents of the extra dummy entries
are secretly shared among the servers using independent randomness.

\item \emph{Generate permutations for $\OTM$.} 
The client generates an independent random permutation for each server.
In particular, the permutation $\pi_{b+1}$ received by $\S_b$ is independent and uniformly random.

\item \emph{Construct a dummy linked list.} 
In this step, a linked list is created for the dummy entries.
Using $\pi_{b+1}$, server~$\S_b$ observes a linear scan on its array
and creates a linked list that is meant for $\S_{b+1}$.
Similarly, the linked list for $\S_b$ is created by $\S_{b-1}$.

The links themselves are secretly shared and stored in the abstract $\link$,
and so are the head positions~$\oplus_b {\sf dpos}_b$.
Hence, the data seen by $\S_b$ appears like independent randomness.

\item \emph{Construct the key-position map $U$}. 
This step is just a linear scan on the abstract $\Arr'$ and
each permutation array $\pi_b$.
The resulting abstract array $U$ is also secretly shared.

\item \emph{Permute the lists along with the links.}
This steps uses the building block ${\sf Permute}$,
which is proved to be perfectly secure in Lemma~\ref{thm:permute}.

\item Finally, as guaranteed by ${\sf Permute}$,
the resulting arrays $\widehat{\Arr}$ and $\widehat{\link}$ are secretly shared 
using independent randomness.  Moreover, the abstract list $U$
is returned as required.  Hence, the ${\sf Build}$ subroutine 
is correct and perfectly oblivious, as required.

\end{enumerate}

\noindent \textbf{Lookup Subroutine.} By construction,
each call to ${\sf Lookup}$ is supplied with the correct position labels,
and hence, correctness is achieved.  We next argue
why the distribution of the $\view^b$ of corrupt $\S_b$ depends only on
the number $\ell$ of lookups.

Observe that the entries accessed in $\S_b$ are permuted randomly by $\pi_b$, which is unknown to
$\S_b$.  Since the request sequence is non-recurrent, each real key is requested at most once.
On the other hand, the dummy entries are singly linked, and each dummy entry is accessed also at most once.
Hence, the $\ell$ lookups correspond to $\ell$ distinct uniformly random accesses in $\S_b$'s corresponding arrays.

\noindent \textbf{Getall Subroutine.}  This subroutine calls the building block ${\sf Unpermute}$,
which is proved to be perfectly secure in Lemma~\ref{thm:unpermute}.

As mentioned before, all data stored on $\S_b$ is secretly shared using independent randomness.
Hence, the distribution of the overall $\view^b$ depends only on the number $\ell$ of lookups.

\end{proof}

\subsection{Analysis of Position-based ORAM}\label{sec:ana}

\begin{fact}[Correctness]
Our Position-based ORAM maintains correctness.
More specifically, at every recursion depth $d$, the correct position
labels will be input to the ${\sf Lookup}$ operations
of $\ORAM_d$; and every request
will return the correct answers.
\end{fact}
\begin{proof}
Straightforward by construction.
\end{proof}

For every $\ORAM_d$ at recursion depth $d$, 
the following invariant are respected by construction
as stated in the following facts. 

\begin{fact}[Lifetime of $\OTM$]
\label{fct:otmlife}
For every $\ORAM_d$, every $\OTM_i$ instance at level $i \leq d$
that is created needs to answer 
at most $2^i$ requests 
before $\OTM_i$ instance is destroyed. 
\end{fact}
\begin{proof}
For every $\ORAM_d$, the following is true:
imagine that there is a $(d+1)$-bit binary counter initialized to $0$ that increments
whenever a request comes in.
Now, for $0 \leq \ell < d$, 
whenever the $\ell$-th bit flips from $1$ to $0$, the $\ell$-th level 
of $\OTM_\ell$ is destroyed;
whenever the $\ell$-th bit flips from $0$ to $1$, the $\ell$-th level
of $\OTM_\ell$ is reconstructed. 
For the largest level $d$ of $\ORAM_d$, 
whenever the $d$-th (most significant)
bit of this binary counter flips from $0$ to $1$ or from $1$ to $0$, 
the $(d+1)$-st level is destroyed and reconstructed.
The fact follows in a straightforward manner by observing 
this binary-counter argument.
\end{proof}

\begin{fact}\label{fct:otmnonrecur}
For every $\ORAM_d$ and every $\OTM_\ell$ instance at level $\ell \leq d$, 
during the lifetime of the $\OTM_\ell$ instance:
(a) no two real requests 
will ask for the same key;
and (b) for every request that asks for a real $\key$,
a block with $\key$ must exist in $\OTM_i$.
\end{fact}
\begin{proof}
We first prove claim (a).
Observe that for any $\ORAM_d$, if the block with some 
$\key$
is fetched from 
some level $\ell \leq d$, 
at this moment,
that block
will either enter a smaller level $\ell' < \ell$;
or some level $\ell'' \geq \ell$ will be rebuilt and
the block with $\key$
will go into level $\ell''$ --- in the latter case,
level $\ell$ will be destroyed prior to the rebuilding of level $\ell''$. 
In either of the above cases, 
due to correctness of the construction, if the  
$\key$
is needed again from $\ORAM_d$, 
a correct position label will be 
provided for 
$\key$
such that the request 
will not go to level $\ell$ (until the level is reconstructed).
Finally, claim (b) follows from correctness of the position labels.
\end{proof}

Given the above facts, our construction maintains perfect obliviousness
as in~\cite{perfectopram}.  We emphasize that while the proof structure
is the same, our notion of perfect obliviousness is defined in terms
of the view of one server plus the communication pattern between the client and all the servers.

\begin{lemma}[Obliviousness]
\label{lemma:posoram_obliv}
The above position-based ORAM construction satisfies perfect obliviousness
in the presence of a
semi-honest adversary corrupting a single server.
\end{lemma}

\begin{proof}
We would like to point out the major difference from~\cite{perfectopram}.
Since we do not rely on cryptographic assumptions,
the data content is protected by secret-sharing among the three servers.

For every $\OTM$ instance
constructed during the lifetime of the ORAM,  
Facts~\ref{fct:otmlife}
and \ref{fct:otmnonrecur} are satisfied, and thus every 
one-time memory instance
receives a valid request sequence.
The lemma then follows 
in a straightforward fashion by the perfect obliviousness
of the $\OTM$ scheme (and the perfect security of the underlying building blocks such as ${\sf Merge}$), 
and by observing
that all other access patterns of the ORAM construction are
deterministic
and independent of the input requests.
\end{proof}

%
%
%

\begin{fact}
\label{fact:shuffle_pos}
  Suppose that the {\sf update} function can be evaluated in
  $O(1)$ time.
	Then, for an
  $\ORAM_d$, and $l \leq d$, ${\sf Shuffle}_d$ for level $l$
  requires a bandwidth of
  $O(2^l)$.
\end{fact}
\begin{proof}
  Observe that to obtain real key-value pairs from $\TOTM$'s and
  to construct a level $l$ list, the client invokes {\sf Getall}
  and {\sf Merge} protocols on (abstract) lists of sizes $2^0,
  \ldots, 2^l$, and {\sf StableCompact} on size $O(2^l)$. Since
  each of these steps require linear number of block operations,
	the total number of block operations for these
  steps is $\sum_{i = 1}^{l} (2^i) + O(2^l) = O(2^l)$.
  In order to update the list to contain values from $U$, we
  again invoke {\sf Merge}, {\sf StableCompact}, and an {\sf
    update}  on a list of
  size $O(2^l)$. This requires $O(2^l)$ block operations
  assuming the {\sf update} rule itself requires $O(1)$ operations
  per block.
  Finally, the call to ${\sf Build}$ $\OTM_l$ requires $O(2^l)$
  block operations.
\end{proof}

\subsection{Proof of Lemma \ref{final}}\label{sec:final}

\begin{proof}
Observe that we adopt the framework
of building a recursive ORAM, where each depth
is a position-based ORAM.  Such a framework
has been used for 
perfectly secure ORAM with one-server~\cite{perfectopram} assuming perfect secure cryptography.

As mentioned in Lemma~\ref{lemma:posoram_obliv},
the major difference here is that the data is secretly shared between
three servers, thereby eliminating the need of perfectly secure cryptography.
In a fashion similar to~\cite{perfectopram},
the security of our recursive ORAM derives its security from the underlying position-based ORAMs,
because the overall view of the adversary is composed of the views corresponding
to the position-based ORAMs in all the recursion depths,
plus deterministic access pattern to pass data (which is secretly shared by independent randomness)
between successive recursion depths.
Since our position-based ORAM in each recursion depth is perfectly oblivious,
so is our overall recursive ORAM.
\end{proof}

\begin{figure}[!tbp]

  \tikzset{
  triangle/.style = {draw=black, fill=brown!5, thick},
  squiggle/.style = {decoration={snake, segment length=5mm}, decorate}
}

\begin{tikzpicture}

   \centering
  \node[shape=regular polygon,regular polygon sides=3,draw=black, fill=brown!5, thick,
minimum size=5cm,anchor=south,pin=140:$\ORAM_D$] (A1) at (0,4.3) {};
\node[shape=regular polygon,regular polygon sides=3,draw=black, fill=brown!5, thick,
minimum size=4cm,anchor=south,pin=140:$\ORAM_{D-1}$] (A2) at (0,3.2) [above]{};
\node[shape=regular polygon,regular polygon sides=3,draw=black, fill=brown!5, thick,
minimum size=3cm,anchor=south,pin=140:$\ORAM_{D-2}$] (A3) at (0,2.1) [above]{};
\node[shape=regular polygon,regular polygon sides=3,draw=black, fill=brown!5, thick,
minimum size=0.5cm,anchor=south,pin=140:$\ORAM_0$] (A4) at (0,0.5) [above]{};
  

 \node[align=center] (n0)  at (0,2.2)  {$$};
 \node[align=center] (n1)  at (0,1.3)  {$$};
 \path (n0) -- (n1) node [blue, midway, sloped] {$\ldots$};

\draw[-{Latex[open]}] (2.5,4) -- ++(2,0);

  \node[shape=regular polygon,regular polygon sides=3,draw=black, fill=brown!5, thick,
minimum size=4cm,anchor=south,label=above:$\ORAM_{D-1}$] (B2) at (6,3.2) [above]{};
  
\draw (4.4,3.5) -- (7.6,3.5);
 \node[pin=0:$\OTM_{D-1}$] (a1)  at (7.55,3.3)  {$$};

\draw (4.6,3.8) -- (7.4,3.8);
 \node[pin=0:$\OTM_{D-2}$] (a2)  at (7.3,3.7)  {$$};

\draw (5.8,5.9) -- (6.2,5.9);
 \node[pin=0:$\OTM_0$] (a3)  at (6,6)  {$$};

 \node[align=center] (n2)  at (6,3.5)  {$$};
 \node[align=center] (n3)  at (6,5.5)  {$$};
 \path (n2) -- (n3) node [blue, midway, sloped] {$\ldots$};

\end{tikzpicture}

 \caption{ORAM Construction from Position-Based ORAM
 }\label{fig}
\end{figure}
}

\end{document}